\theoremstyle{plain}
\newtheorem{thm}{Theorem}[section]
\newtheorem{corollary}[thm]{Corollary}
\newtheorem{lemma}[thm]{Lemma}
\newtheorem{prop}[thm]{Proposition}
\newtheorem*{theorem*}{Theorem}
\theoremstyle{definition}
\newtheorem{rem}[thm]{Remark}
\newtheorem{dfn}[thm]{Definition}
\newtheorem{exm}[thm]{Example}
\numberwithin{equation}{section}
\newcommand{\mytilde}{\raise.17ex\hbox{$\scriptstyle\mathtt{\sim}$}}
\newcommand{\RgO}{\ensuremath{\mathbb{R}_{>0}}}
\newcommand{\I}{\ensuremath{\mathcal{I}}}
\renewcommand{\S}{\ensuremath{\mathcal{S}}}
\newcommand{\G}{\ensuremath{\mathcal{G}}}
\newcommand{\E}{\ensuremath{\mathcal{E}}}
\newcommand{\N}{\ensuremath{\mathcal{N}}}
\renewcommand{\bf}[1]{\textbf{#1}}
\newcommand{\mbf}[1]{\mathbold{#1}}
\newcommand{\RgeO}{\ensuremath{\mathbb{R}_{ \geq 0}}}
\newcommand{\Rat}[1]{\ensuremath{\mathbb{R}^{#1}}}
\newcommand{\ubar}[1]{\underaccent{\bar}{#1}}
\renewcommand{\emph}[1]{\textit{#1}}
\begin{document}

\author{D. Boskos}
\address{Department of Automatic Control, School of Electrical Engineering, KTH Royal Institute of Technology, Osquldas v\"ag 10, 10044, Stockholm, Sweden}
\email{boskos@kth.se}

\author{D. V. Dimarogonas}
\address{Department of Automatic Control, School of Electrical Engineering, KTH Royal Institute of Technology, Osquldas v\"ag 10, 10044, Stockholm, Sweden}
\email{dimos@kth.se}

\begin{abstract}
In this report, we aim at the development of a decentralized abstraction framework for multi-agent systems under coupled constraints, with the possibility for a varying degree of decentralization. The methodology is based on the analysis employed in our recent work, where decentralized abstractions based exclusively on the information of each agent's neighbors were derived. In the first part of this report, we define the notion each agent's $m$-neighbor set, which constitutes a measure for the employed degree of decentralization. Then, sufficient conditions are provided on the space and time discretization that provides the abstract system's model, which guarantee the extraction of a transition system with quantifiable transition possibilities.
\end{abstract}

\keywords{hybrid systems, multi-agent systems, abstractions, transition systems.}

\title{Abstractions of Varying Decentralization Degree for Coupled Multi-Agent Systems}

\maketitle

\section{Introduction}

The analysis and control of multi-agent systems constitutes an active area of research with numerous applications, ranging from the analysis of power networks to the automatic deployment of robotic teams \cite{CyDxSaBc12}. Of central interest in this field is the problem of high level task planning by exploiting tools from formal verification  \cite{LsKk04}. In order to follow this approach for dynamic multi-agent systems it is required to provide an abstract model of the system which can serve as a tool for analysis and control, as well as high level planning. In particular, the use of a suitable discrete representation of the system allows the automatic synthesis of discrete plans that guarantee satisfaction of the high level specifications. Then, under appropriate relations between the continuous system and its discrete analogue, these plans can be converted to low level primitives such as sequences of feedback controllers, and hence, enable the continuous system to implement the corresponding tasks.

The need for a formal approach to the aforementioned control synthesis problem has lead to a considerable research effort for the extraction of discrete state symbolic models, also called abstractions, which capture properties of interest of continuous state dynamical and control systems, while ignoring detail. Results in this direction for the nonlinear single plant case have been obtained in the papers \cite{PgGaTp08} and \cite{ZmPgMmTp12}, which are based on partitioning and exploit approximate simulation and bisimulation relations. Symbolic models for piecewise affine systems on simplices and rectangles were introduced in  \cite{HlVj01} and have been further studied in \cite{BmGm14}. Closer related to the control framework that we adopt here for the abstraction, are the papers \cite{HmCp14b}, \cite{HmCp15} which build on the notion of In-Block Controllability \cite{CpWy95}. Other abstraction techniques for nonlinear systems include \cite{Rg11}, where discrete time systems are studied in a behavioral framework and \cite{AaTaSa09}, where box abstractions are studied for polynomial and other classes of systems. It is also noted that certain of the aforementioned approaches have been extended to switched systems \cite{GaPgTp10}, \cite{GeDxLmBc14}. Furthermore, abstractions for interconnected systems have been recently developed in \cite{TyIj08}, \cite{PgPpDm14},  \cite{PgPpDm16}, \cite{RmZm15}, \cite{Mp15a}, \cite{DeTp15} and rely mainly on compositional approaches  based on small gain arguments. Finally, in \cite{Mp15}, a compositional approach with a varying selection of subsystems for the abstraction is exploited, providing a tunable tradeoff between  complexity reduction and model accuracy.

In this framework, we focus on multi-agent systems and assume that the agents' dynamics consist of feedback interconnection terms and additional bounded input terms, which we call free inputs and provide the ability for motion planning under the coupled constraints. We generalize the corresponding results of our recent work \cite{BdDd15a}, where each agent's abstract model has been based on the knowledge of the discrete positions of its neighbors, by allowing the agent to have this information for all members of the network up to a certain distance in the communication graph. The latter  provides an improved estimate of the potential evolution of its neighbors and allows for more accurate discrete agent models, due to the reduction of the part of the available control which is required for the manipulation of the coupling terms. In addition, the derived abstractions are coarser than the ones in \cite{BdDd15a} and can reduce the computational complexity of high level task verification. Finally, we note that this report includes the proofs of its companion submitted conference version \cite{BdDd15c}, which have been completely omitted therein due to space constraints, as well as certain generalizations and additional material.


The rest of the report is organized as follows. Basic notation and preliminaries are introduced in Section 2. In Section 3, we define well posed abstractions for single integrator multi-agent systems and prove that the latter provide solutions consistent with the design requirement on the systems' free inputs. Section 4 is devoted to the study of the control laws that realize the transitions of the proposed discrete system's model. In Section 5 we quantify space and time discretizations which guarantee well posed transitions with motion planning capabilities. The framework is illustrated through an example with simulation results in Section 6 and we conclude in Section 7.

\section{Preliminaries and Notation}

We use the notation $|x|$ for the Euclidean norm of a vector $x\in\Rat{n}$ and ${\rm int}(S)$ for the interior of a set $S\subset\Rat{n}$. Given $R>0$ and $x\in\Rat{n}$, we denote by $B(R)$ the closed ball with center $0\in\Rat{n}$ and radius $R$, namely $B(R):=\{x\in\Rat{n}:|x|\le R \}$ and $B(x;R):=\{y\in\Rat{n}:|x-y|\le R \}$.

Consider a multi-agent system with $N$ agents. For each agent $i\in\N:=\{1,\ldots,N\}$ we use the notation $\mathcal{N}_i\subset\N\setminus\{i\}$ for the set of its neighbors and $N_i$ for its cardinality. We also consider an ordering of the agent's neighbors which is denoted by $j_1,\ldots,j_{N_i}$ and define the $N_i$-tuple $j(i)=(j_1(i),\ldots,j_{N_i}(i))$. Whenever it is clear from the context, the argument $i$ will be omitted from the latter notation. The agents' network is represented by a directed graph $\G:=(\N,\E)$, with vertex set $\N$ the agents' index set and edge set $\E$ the ordered pairs $(\ell,i)$ with $i,\ell\in\N$ and $\ell\in\N_i$. The sequence $i_0i_1\cdots i_m$ with $(i_{\kappa-1},i_{\kappa})\in\E$, $\kappa=1,\ldots,m$, namely, consisting of $m$ consecutive edges in $\G$, forms \textit{a path of length} $m$ in $\G$.
We will use the notation $\N_i^m$ to denote for each $m\ge 1$ the set of agents from which $i$ is reachable through a path of length $m$ and not by a shorter one, excluding also the possibility to reach itself through a cycle. Notice that $\N_i^1=\N_i$. We also define $\N_i^0:=\{i\}$ and  for each $m\ge 1$ the set $\bar{\N}_i^m:=\bigcup_{\ell=0}^m\N_i^{\ell}$, namely, the set of all agents from which $i$ is reachable by a path of length at most $m$, including $i$. With some abuse of language, we will use the terminology $m$-\textit{neighbor set} of agent $i$ for the set $\bar{\N}_i^{m}$, since it always contains the agent itself and will also refer to the rest of the agents in  $\bar{\N}_i^{m}$ as the $m$-neighbors of $i$. Finally, we denote by $\bar{N}_i^m$ and $N_i^m$ the cardinality of the sets $\bar{\N}_i^m$ and $\N_i^m$, respectively.

\begin{dfn} \label{mneighbor:set:ordering}
For each agent $i\in\N$, we consider a strict total order $\prec$ on each set $\bar{\N}_i^{m}$ that satisfies $i'\prec i''$ for each $i'\in\N_{i}^{m'}$, $i''\in\N_{i}^{m''}$, with $0\le m'<m''\le m$.

\end{dfn}

\begin{exm} \label{example}
In this example, we consider a network of 8 agents as depicted in Figure \ref{fig:graph} and illustrate the sets $\N_i^m$, $\bar{\N}_i^m$ for agents 1 and 5 up to paths of length $m=3$ as well as a candidate order for the corresponding sets $\bar{\N}_i^m$, in accordance to Definition \ref{mneighbor:set:ordering}.   

\begin{alignat*}{3}
& \bar{\N}_1^1=\{1,2,6\}, \quad && \N_1^1=\{2,6\}, \quad && 1\prec 6\prec 2, \\
& \bar{\N}_1^2=\{1,2,3,5,6,7\}, \quad && \N_1^2=\{3,5,7\}, \quad && 1\prec 6\prec 2\prec 5\prec 3\prec 7, \\
& \bar{\N}_1^3=\{1,2,3,4,5,6,7,8\}, \quad && \N_1^3=\{4,8\}, \quad && 1\prec 6\prec 2\prec 5\prec 3\prec 7\prec 4\prec 8, \\
& \bar{\N}_5^1=\{3,5\}, \quad && \N_5^1=\{3\}, \quad && 5\prec 3, \\
& \bar{\N}_5^2=\{2,3,4,5\}, \quad && \N_5^2=\{2,4\}, \quad && 5\prec 3\prec 2\prec 4, \\
& \bar{\N}_5^3=\{2,3,4,5\}, \quad && \N_5^3=\emptyset, \quad && 5\prec 3\prec 2\prec 4.
\end{alignat*}
\end{exm}

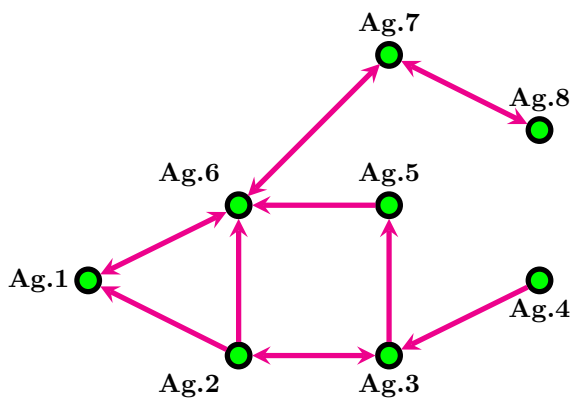
\begin{figure}[H]
\begin{center}
\begin{tikzpicture} [scale=1]

\filldraw[fill=green, line width=.07cm]  (-1,1) circle (0.15cm);
\node (agent 1) at (-1,1) [label=left:\textbf{Ag.1}] {};
\filldraw[fill=green, line width=.07cm]  (1,0) circle (0.15cm);
\node (agent 2) at (1,0) [label=below left:\textbf{Ag.2}] {};
\filldraw[fill=green, line width=.07cm]  (1,2) circle (0.15cm);
\node (agent 3) at (1,2) [label=above left:\textbf{Ag.6}] {};
\filldraw[fill=green, line width=.07cm]  (3,0) circle (0.15cm);
\node (agent 4) at (3,0) [label=below:\textbf{Ag.3}] {};
\filldraw[fill=green, line width=.07cm]  (3,2) circle (0.15cm);
\node (agent 5) at (3,2) [label=above:\textbf{Ag.5}] {};
\filldraw[fill=green, line width=.07cm]  (3,4) circle (0.15cm);
\node (agent 6) at (3,4) [label=above:\textbf{Ag.7}] {};
\filldraw[fill=green, line width=.07cm]  (5,1) circle (0.15cm);
\node (agent 7) at (5,1) [label=below:\textbf{Ag.4}] {};
\filldraw[fill=green, line width=.07cm]  (5,3) circle (0.15cm);
\node (agent 8) at (5,3) [label=above:\textbf{Ag.8}] {};

\draw[color=magenta,line width=.07cm,<-,>=stealth ] (-.84,.92)  -- (.84,.08);
\draw[color=magenta,line width=.07cm,<->,>=stealth ] (-.84,1.08) -- (.84,1.91);

\draw[color=magenta,line width=.07cm,->,>=stealth ] (1,.18) -- (1,1.82);
\draw[color=magenta,line width=.07cm,<->,>=stealth ] (1.18,0) -- (2.82,0);

\draw[color=magenta,line width=.07cm,<-,>=stealth ] (1.18,2) -- (2.82,2);
\draw[color=magenta,line width=.07cm,<->,>=stealth ] (1.12,2.12) -- (2.88,3.88);

\draw[color=magenta,line width=.07cm,->,>=stealth ] (3,.18) -- (3,1.82);
\draw[color=magenta,line width=.07cm,<-,>=stealth ] (3.16,.08) -- (4.84,.91);

\draw[color=magenta,line width=.07cm,<->,>=stealth ] (3.16,3.92)  -- (4.84,3.08);

\end{tikzpicture}
\end{center}
\caption{Illustration of the network topology for Example \ref{example}.}  \label{fig:graph}
\end{figure}

Given an index set $\I$, an agent $i\in\N$, its $m$-neighbor set $\bar{\N}_i^{m}$, and a strict total ordering of $\bar{\N}_i^{m}$ as in Definition \ref{mneighbor:set:ordering}, it follows that the elements of $\bar{\N}_i^{m}$ are ordered as $i\prec j_1^1(i)\prec\cdots\prec j_{N_i}^1(i)\prec j_1^2(i)\prec\cdots\prec j_{N_i^2}^2(i)\cdots\prec j_1^m(i)\cdots\prec j_{N_i^m}^m(i)$ with $j_{\kappa}^{m'}(i)\in\N_i^{m'}$ for all $m'\in\{1,\ldots,m\}$ and $\kappa\in\{1,\ldots,N_i^{m'}\}$. Whenever it is clear from the context we will remove the argument $i$ from the latter ordered elements. In addition, we define the mapping ${\rm pr}_i:\I^N\to\I^{\bar{N}_i^m}$ which assigns to each $N$-tuple $(l_1,\ldots,l_N)\in\I^N$ the $\bar{N}_i^m$-tuple $(l_i,l_{j_1^1},\ldots,l_{j_{N_i}^1},$ $l_{j_1^2},\ldots,l_{j_{N_i^2}^2},\ldots,l_{j_1^m},\ldots,l_{j_{N_i^m}^m})\in\I^{\bar{N}_i^m}$, i.e., the indices of agent $i$ and its $m$-neighbor set in accordance to the ordering.


\begin{dfn}
A transition system is a tuple $TS:=(Q,Act,\longrightarrow)$, where:

\textbullet\; $Q$ is a set of states.

\textbullet\; $Act$ is a set of actions.

\textbullet\; $\longrightarrow$ is a transition relation with $\longrightarrow\subset Q\times Act\times Q$.

\noindent The transition system is said to be finite, if $Q$ and $Act$ are finite sets. We also use the (standard) notation $q\overset{a}{\longrightarrow} q'$ to denote an element $(q,a,q')\in\longrightarrow$. For every $q\in Q$ and $a\in Act$ we use the notation ${\rm Post}(q;a):=\{q'\in Q:(q,a,q')\in\longrightarrow\}$.
\end{dfn}

\section{Abstractions for Multi-Agent Systems}

We consider multi-agent systems with single integrator dynamics
\begin{equation}\label{general:feedback:law}
\dot{x}_i=f_i(x_{i},\bf{x}_j)+v_{i}, i\in\N,
\end{equation}

\noindent that are governed by decentralized control laws consisting of two terms, a feedback term $f_{i}(\cdot)$ which depends on the states of $i$ and its neighbors, which we compactly denote by $\bf{x}_j(=\bf{x}_{j(i)}):=(x_{j_{1}},\ldots,x_{j_{N_i}})\in\Rat{N_i n}$ (see Section 2 for the notation $j(i)$), and an extra input term $v_{i}$, which we call free input. 
The dynamics $f_i(x,\bf{x}_j)$ are encountered in a large set of multi-agent protocols \cite{MmEm10}, including consensus, connectivity maintenance, collision avoidance and formation control. In addition they may represent internal dynamics of the system as for instance in the case of smart buildings (see e.g., \cite{AmDdShJk14}) , where the temperature $T_i$, $i\in\N$ of each room evolves according to $\dot{T}_i=\sum_{j\in\N_i} a_{ij}(T_j-T_i)+v_i$, with $a_{ij}$ representing the heat conductivity between rooms $i$ and $j$ and $v_i$ the heating/cooling capabilities of the room.

In what follows, we consider a cell decomposition of the state space $\Rat{n}$ and adopt a modification of the corresponding definition from \cite[p 129]{Gl02}.

\begin{dfn} \label{cell:decomposition}
Let $D$ be a domain of $\Rat{n}$. A \textit{cell decomposition} $\mathcal{S}=\{S_{l}\}_{l\in\mathcal{I}}$ of $D$, where $\mathcal{I}$ is a finite or countable index set, is a family of uniformly bounded and connected sets $S_{l}$, $l\in\mathcal{I}$, such that ${\rm int}(S_{l})\cap {\rm int}(S_{\hat{l}})=\emptyset$ for all $l\ne\hat{l}$ and $\cup_{l\in\mathcal{I}} S_{l}=D$.
\end{dfn}

\noindent Throughout the report, we consider a fixed $m\in\mathbb{N}$ which specifies the $m$-neighbor set of each agent and will refer to it as the \textit{degree of decentralization}. Also, given a cell decomposition $\S :=\{S_l\}_{l\in\I}$ of $\Rat{n}$, we use the notation $\bf{l}_i=(l_i,l_{j_1^1},\ldots,l_{j_{N_i}^1},l_{j_1^2},\ldots,l_{j_{N_i^2}^2},\ldots,l_{j_1^m},\ldots,l_{j_{N_i^m}^m})\in\I^{\bar{N}_i^m}$ to denote the indices of the cells where agent $i$ and its $m$-neighbors  belong at a certain time instant and call it the $m$-\textit{cell configuration} of agent $i$. We will also use the shorter notation $\bf{l}_i=(l_i,l_{j_1^1},\ldots,l_{j_{N_i^m}^m})$, or just $\bf{l}_i$. Similarly, we use the notation $\bf{l}=(l_{1},\ldots,l_{N})\in\mathcal{I}^{N}$ to specify the indices of the cells where all the $N$ agents belong at a given time instant and call it the cell configuration (of all agents). Thus, given a cell configuration $\bf{l}$, it is possible to determine the cell configuration of agent $i$ as $\bf{l}_i={\rm pr}_i(\bf{l})$ (see Section 2 for the definition of ${\rm pr}_i(\cdot)$).

Our aim is to derive finite or countable abstractions for each individual agent in the coupled system \eqref{general:feedback:law}, through the selection of a cell decomposition and a time discretization step $\delta t>0$. These will be  based on the knowledge of each agent's neighbors discrete positions up to a certain distance in the network graph, with the latter as specified by the degree of decentralization. Informally, we would like to consider for each agent $i$ the transition system with states the possible cells of the state partition, actions all the possible cells of the agents in its $m$-neighbor set and transition relation defined as follows. A final cell is reachable from an initial one, \textit{if for all states in the initial cell there is a free input such that the trajectory of $i$ will reach the final cell at time $\delta t$ for all possible initial states of its} $m$-\textit{neighbors in their cells and their corresponding free inputs}. For planning purposes we will require each individual transition system to be well posed, in the sense that for each initial cell it is possible to perform a transition to at least one final cell.

We next illustrate the concept of a well posed space-time discretization, namely, a discretization which generates for each agent a meaningful transition system as  discussed above. Consider a cell decomposition as depicted in Fig. 2 and a time step $\delta t$. The tips of the arrows in the figure are the endpoints of agent $i$'s trajectories at time $\delta t$. In both cases in the figure we focus on agent $i$ and consider the same 2-cell configuration for the 2-neighbor set $\bar{\N}_i^2=\{i,j_1,j_2\}$ of $i$, where $\N_i=\{j_1\}$ and $\N_{j_1}=\{j_2\}$, as depicted by the straight arrows in the figure. However, we consider different dynamics for Cases (i) and (ii). In Case (i), we observe that for the three distinct initial positions in cell $S_{l_i}$, it is possible to drive agent $i$ to cell $S_{l_i'}$ at time $\delta t$. We assume that this is possible for all initial conditions in this cell and irrespectively of the initial conditions of $j_1$ and $j_2$ in their cells and the inputs they choose. We also assume that this property holds for all possible 2-cell configurations of $i$ and for all the agents of the system. Thus we have a well posed discretization for System (i). On the other hand, for the same cell configuration and System (ii), we observe the following. For three distinct initial conditions of $i$ the corresponding reachable sets at $\delta t$, which are enclosed in the dashed circles, lie in different cells. Thus, it is not possible given this cell configuration of $i$ to find a cell in the decomposition which is reachable from every point in the initial cell and we conclude that the discretization is not well posed for System (ii).

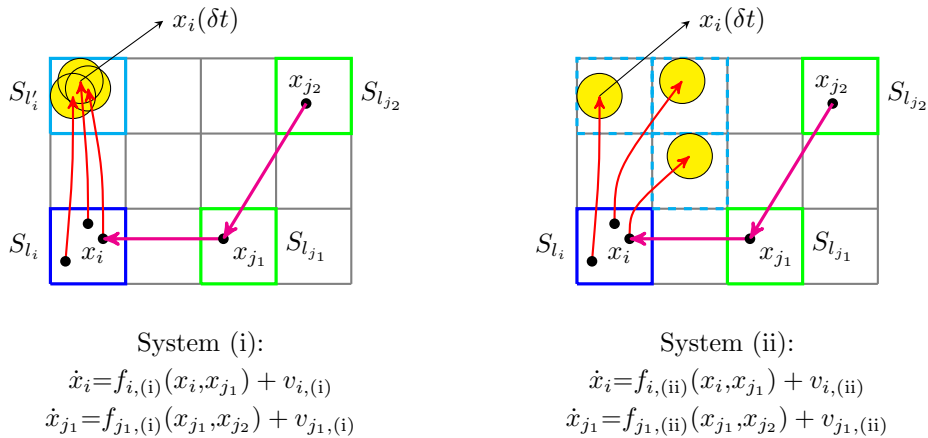
\begin{figure}[H]
\begin{center}
\begin{tikzpicture} [scale=1]

\draw[color=gray,thick] (0,0) -- (4,0);
\draw[color=gray,thick] (0,1) -- (4,1);
\draw[color=gray,thick] (0,2) -- (4,2);
\draw[color=gray,thick] (0,3) -- (4,3);

\draw[color=gray,thick] (0,0) -- (0,3);
\draw[color=gray,thick] (1,0) -- (1,3);
\draw[color=gray,thick] (2,0) -- (2,3);
\draw[color=gray,thick] (3,0) -- (3,3);
\draw[color=gray,thick] (4,0) -- (4,3);

\draw[color=blue,very thick] (0,0) -- (1,0) -- (1,1) -- (0,1) -- (0,0);
\draw[color=green,very thick] (2,0) -- (3,0) -- (3,1) -- (2,1) -- (2,0);
\draw[color=green,very thick] (3,2) -- (4,2) -- (4,3) -- (3,3) -- (3,2);
\draw[color=cyan,very thick] (0,2) -- (1,2) -- (1,3) -- (0,3) -- (0,2);


\fill[yellow] (0.3,2.5) circle (0.3cm);
\fill[yellow] (0.4,2.7) circle (0.3cm);
\fill[yellow] (0.5,2.6) circle (0.3cm);

\draw[black] (0.3,2.5) circle (0.3cm);
\draw[black] (0.4,2.7) circle (0.3cm);
\draw[black] (0.5,2.6) circle (0.3cm);

\draw [color=red,thick,->,>=stealth'](0.2,0.3) .. controls (0.3,1.5) .. (0.3,2.5);
\fill[black] (0.2,0.3) circle (2pt);

\draw [color=red,thick,->,>=stealth'](0.5,0.8) .. controls (0.5,1.5) .. (0.4,2.7);
\fill[black] (0.5,0.8) circle (2pt);

\draw [color=red,thick,->,>=stealth'](0.7,0.6) .. controls (0.7,1.5) .. (0.5,2.6);
\fill[black] (0.7,0.6) circle (2pt);

\draw [color=magenta,very thick,->,>=stealth'] (2.3,.6) -- (0.7,0.6);

\coordinate [label=left:$S_{l_i}$] (A) at (0,0.5);
\coordinate [label=above left:$x_{i}$] (A) at (0.85,0.15);
\coordinate [label=right:$S_{l_{j_1}}$] (A) at (3,.5);
\fill[black] (2.3,.6) circle (2pt) node[below right]{$x_{j_{1}}$};
\coordinate [label=right:$S_{l_{j_2}}$] (A) at (4,2.5);
\fill[black] (3.4,2.4) circle (2pt) node[above]{$x_{j_{2}}$};
\coordinate [label=left:$S_{l_i'}$] (A) at (0,2.5);

\draw [color=magenta,very thick,->,>=stealth'] (3.4,2.4) -- (2.3,.6);

\draw[color=gray,thick] (7,0) -- (11,0);
\draw[color=gray,thick] (7,1) -- (11,1);
\draw[color=gray,thick] (7,2) -- (11,2);
\draw[color=gray,thick] (7,3) -- (11,3);

\draw[color=gray,thick] (7,0) -- (7,3);
\draw[color=gray,thick] (8,0) -- (8,3);
\draw[color=gray,thick] (9,0) -- (9,3);
\draw[color=gray,thick] (10,0) -- (10,3);
\draw[color=gray,thick] (11,0) -- (11,3);

\draw[color=blue,very thick] (7,0) -- (8,0) -- (8,1) -- (7,1) -- (7,0);
\draw[color=green,very thick] (9,0) -- (10,0) -- (10,1) -- (9,1) -- (9,0);
\draw[color=green,very thick] (10,2) -- (11,2) -- (11,3) -- (10,3) -- (10,2);
\draw[color=cyan,dashed,very thick] (7,2) -- (8,2) -- (8,3) -- (7,3) -- (7,2);
\draw[color=cyan,dashed,very thick] (8,2) -- (9,2) -- (9,3) -- (8,3) -- (8,2);
\draw[color=cyan,dashed,very thick] (8,1) -- (9,1) -- (9,2) -- (8,2) -- (8,1);

\fill[yellow] (7.3,2.5) circle (0.3cm);
\fill[yellow] (8.4,2.7) circle (0.3cm);
\fill[yellow] (8.5,1.7) circle (0.3cm);

\draw[black] (7.3,2.5) circle (0.3cm);
\draw[black] (8.4,2.7) circle (0.3cm);
\draw[black] (8.5,1.7) circle (0.3cm);

\draw [color=red,thick,->,>=stealth'](7.2,0.3) .. controls (7.3,1.5) .. (7.3,2.5);
\fill[black] (7.2,0.3) circle (2pt);

\draw [color=red,thick,->,>=stealth'](7.5,0.8) .. controls (7.5,1.5) .. (8.4,2.7);
\fill[black] (7.5,0.8) circle (2pt);

\draw [color=red,thick,->,>=stealth'](7.7,0.6) .. controls (7.7,1) .. (8.5,1.7);
\fill[black] (7.7,0.6) circle (2pt);

\coordinate [label=below:System (i):] (A) at (2,-0.5);
\coordinate [label=below: $\dot{x}_{i}\text{$=$}f_{i,({\rm i})}(x_{i}\text{$,$}x_{j_{1}})+v_{i,({\rm i})}$] (A) at (2,-1);
\coordinate [label=below: $\dot{x}_{j_1}\text{$=$}f_{j_1,({\rm i})}(x_{j_1}\text{$,$}x_{j_{2}})+v_{j_1,({\rm i})}$] (A) at (2,-1.5);

\coordinate [label=below:System (ii):] (A) at (9,-0.5);
\coordinate [label=below: $\dot{x}_{i}\text{$=$}f_{i,({\rm ii})}(x_{i}\text{$,$}x_{j_{1}})+v_{i,({\rm ii})}$] (A) at (9,-1);
\coordinate [label=below: $\dot{x}_{j_1}\text{$=$}f_{j_1,({\rm ii})}(x_{j_1}\text{$,$}x_{j_{2}})+v_{j_1,({\rm ii})}$] (A) at (9,-1.5);

\draw [color=magenta,very thick,->,>=stealth'] (9.3,.6) -- (7.7,0.6);

\coordinate [label=left:$S_{l_i}$] (A) at (7,0.5);
\coordinate [label=above left:$x_{i}$] (A) at (7.85,0.15);
\coordinate [label=right:$S_{l_{j_1}}$] (A) at (10,.5);
\fill[black] (9.3,.6) circle (2pt) node[below right]{$x_{j_{1}}$};
\coordinate [label=right:$S_{l_{j_2}}$] (A) at (11,2.5);
\fill[black] (10.4,2.4) circle (2pt) node[above]{$x_{j_{2}}$};

\draw [color=magenta,very thick,->,>=stealth'] (10.4,2.4) -- (9.3,.6);

\draw[->,>=stealth] (0.4,2.7) -- (1.5,3.5);
\coordinate [label=right:$x_{i}(\delta t)$] (A) at (1.5,3.5);

\draw[->,>=stealth] (7.3,2.5) -- (8.5,3.5);
\coordinate [label=right:$x_{i}(\delta t)$] (A) at (8.5,3.5);

\end{tikzpicture}
\vspace{1em}
\caption{Illustration of a space-time discretization which is well posed for System (i) but non-well posed for System (ii).} \label{fig:well:posed}
\end{center}
\end{figure}

In order to provide meaningful decentralized abstractions we follow parts of the approach employed in \cite{BdDd15a} and design appropriate hybrid feedback laws in place of the $v_{i}$'s in order to guarantee well posed transitions. Before proceeding to the necessary definitions related to the problem formulation, we provide some bounds on the dynamics of the multi-agent system. We assume that there exists a constant $M>0$ such that
\begin{equation} \label{dynamics:bound}
|f_{i}(x_{i},\bf{x}_j)|\le M, \forall (x_{i},\bf{x}_j)\in \Rat{(N_i+1)n},i\in\N \\
\end{equation}

\noindent and that $f_{i}(\cdot)$ are globally Lipschitz functions. Thus, there exist Lipschitz constants $L_1>0$ and $L_2>0$, such that
\begin{align}
|f_{i}(x_{i},\bf{x}_j)-f_{i}(x_{i},\bf{y}_j)| & \le L_1 |(x_{i},\bf{x}_j)-(x_{i},\bf{y}_j)|,   \label{dynamics:bound1}  \\
|f_{i}(x_{i},\bf{x}_j)-f_{i}(y_{i},\bf{x}_j)| & \le L_2 |(x_{i},\bf{x}_j)-(y_{i},\bf{x}_j)|,  \label{dynamics:bound2} \\
\forall x_{i},y_{i} \in \Rat{n},\bf{x}_j,\bf{y}_j & \in \Rat{N_in},i\in\N. \nonumber
\end{align}

\noindent Furthermore, we assume that each input $v_{i}$, $i\in\N$ is piecewise continuous and satisfies the bound
\begin{equation}\label{input:bound}
|v_{i}(t)|\le v_{\max},\forall t\ge 0.
\end{equation}

\noindent where $ v_{\max}<M$. Finally, based on the uniform bound on the diameters of the cells in the decomposition of the workspace, we can define the diameter $d_{\max}$ of the cell decomposition as
\begin{equation} \label{dmax:dfn}
d_{\max}:=\inf\{R>0:\forall l\in\I,\exists x\in S_l\;{\rm with}\;S_l\subset B(x;\tfrac{R}{2})\}.
\end{equation}

\noindent This definition for the diameter of the cell decomposition is in general different from the one adopted in \cite{BdDd15a}. In particular, according to \eqref{dmax:dfn} it is defined as the ``minimum" among the diameters of the balls that can cover each cell in the decomposition, whereas in \cite{BdDd15a}, it is given as the ``minimum" among the diameters of the cells. However, for certain types of decompositions as for instance hyper-rectangles in $\Rat{n}$ and hexagons in $\Rat{2}$, these two definitions coincide.   

Given a cell decomposition we will consider a fixed selection of a reference point $x_{l,G}\in S_l$ for each cell $S_l$, $l\in\I$. For each agent $i$ and $m$-cell configuration of $i$, the corresponding reference points of the cells in the configuration will provide a trajectory which is indicative of the agent's reachability capabilities over the interval of the time discretization. In particular, for appropriate space-time discretizations the agent will be capable of reaching all points inside a ball with center the endpoint of this trajectory at the end of the time step. Furthermore, the selected reference points provide an estimate of the corresponding trajectories of the agent's neighbors, for $m$-cell configurations of its neighbors that are consistent with the cell configuration of $i$, namely for which the ``common agents" belong to the same cells, as formally given in Definition \ref{consistent:configurations} below. Before proceeding to Definition \ref{consistent:configurations} we provide a lemma which  establishes certain useful properties of the agents' $m$-neighbor sets.

\begin{lemma} \label{lemma:mneighbor:contains:mmin1}
\textit{(i)} For each agent $i\in\N$, neighbor $\ell\in\N_i$ of $i$ and $m\ge 1$, it holds $\bar{\N}_{\ell}^{m-1}\subset \bar{\N}_i^m$.

\noindent \textit{(ii)} For each $i\in\N$, $m\ge 1$ and $\ell\in\bar{\N}_i^m$ it holds $\N_{\ell}\subset \bar{\N}_i^{m+1}$.

\noindent \textit{(iii)} Assume that for certain $i\in\N$ and $m\ge 1$ it holds $\N_i^{m+1}=\emptyset$. Then, for each $\ell\in\bar{\N}_i^m$ it holds $
\N_{\ell}\subset \bar{\N}_i^m$.

\end{lemma}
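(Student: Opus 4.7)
The three assertions are purely combinatorial statements about directed paths in $\G$, so I will prove them by concatenating path segments and tracking their lengths, treating separately the corner cases that arise when the source of a concatenated path happens to coincide with $i$ (which are the cases ruled out by the ``no self-reaching cycles'' clause in the definition of $\N_i^m$).

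For part (i), I would fix an arbitrary $a\in\bar{\N}_{\ell}^{m-1}$, so $a\in\N_{\ell}^{m'}$ for some $0\le m'\le m-1$. If $m'=0$, then $a=\ell\in\N_i\subset\bar{\N}_i^m$. Otherwise there is a (shortest) path $a=i_0i_1\cdots i_{m'}=\ell$ in $\G$, and since $\ell\in\N_i$ the edge $(\ell,i)\in\E$ extends it to a path from $a$ to $i$ of length $m'+1\le m$. If $a=i$, then $a\in\bar{\N}_i^0\subset\bar{\N}_i^m$; otherwise the shortest path from $a$ to $i$ has some length $\hat{m}\in\{1,\ldots,m'+1\}$, so $a\in\N_i^{\hat{m}}\subset\bar{\N}_i^m$. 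Part (ii) uses the mirror construction: for $\ell\in\bar{\N}_i^m$, either $\ell=i$ (in which case $\N_\ell=\N_i\subset\bar{\N}_i^{m+1}$), or there is a shortest path $\ell=i_0i_1\cdots i_{m'}=i$ of length $m'\in\{1,\ldots,m\}$; for any $\ell'\in\N_\ell$ the edge $(\ell',\ell)\in\E$ prepended to this path yields a path from $\ell'$ to $i$ of length $m'+1\le m+1$, and I conclude with the same $\ell'=i$ vs.\ $\ell'\neq i$ case split as before. Part (iii) then follows immediately from (ii), since under the hypothesis $\N_i^{m+1}=\emptyset$ we have $\bar{\N}_i^{m+1}=\bar{\N}_i^m\cup\N_i^{m+1}=\bar{\N}_i^m$.

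The only subtlety, and the main thing to keep track of, is the self-reaching clause in the definition of $\N_i^m$: paths that close back into $i$ do not contribute to any $\N_i^m$ with $m\ge 1$, which is why the case $a=i$ in (i) and $\ell'=i$ in (ii) must be handled separately — in both cases membership in the target set is obtained via the trivial clause $\N_i^0=\{i\}$ rather than by producing a nontrivial path. Beyond this, the argument is pure bookkeeping on path lengths and I expect no further obstacles.
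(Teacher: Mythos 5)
Your proof is correct. Parts (i) and (ii) follow essentially the same route as the paper: concatenate a shortest path with the extra edge $(\ell,i)$ (resp.\ $(\ell',\ell)$), bound the length of the resulting path, and split off the degenerate case where the source of the concatenated path equals $i$, which is exactly the case the paper also isolates (your handling of $\ell=i$ in part (ii) is in fact slightly more careful than the paper's, which tacitly assumes $\ell\ne i$ when invoking a path of length $m'\ge 1$ from $\ell$ to $i$). The genuine difference is part (iii): the paper gives a standalone argument, repeating the path-concatenation construction and then arguing by contradiction that a shortest path of length $m'+1$ with $m'=m$ would force $\N_i^{m+1}\ne\emptyset$, whereas you simply observe that $\bar{\N}_i^{m+1}=\bar{\N}_i^m\cup\N_i^{m+1}=\bar{\N}_i^m$ under the hypothesis and quote part (ii). Your derivation is shorter, avoids the contradiction argument entirely, and makes the logical dependence of (iii) on (ii) explicit; the identity you use is one the paper itself employs later (in the proof of Lemma 3.7), so nothing new is needed. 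The paper's direct proof of (iii) buys nothing extra here beyond independence from part (ii).
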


\begin{proof}
For the proof of Part (i) let any $i'\in\bar{\N}_{\ell}^{m-1}$. Then, since $\bar{\N}_{\ell}^{m-1}=\cup_{\kappa=0}^{m-1}\N_{\ell}^{\kappa}$, either $i'=\ell$, which implies that $i'\in\N_i$ and hence also $i'\in\bar{\N}_i^m$, or $i'\in\N_{\ell}^{m'}$ for certain $m'\in\{1,\ldots,m\}$. In addition, if $i'=i$, then $i'\in \bar{\N}_i^m$ and hence, it remains to consider the case where $i'\ne i$ and $i'\in\N_{\ell}^{m'}$ for some $m'\in\{1,\ldots,m\}$. The latter implies that there exists  a shortest path $i_0\ldots i_{m'}$ with $i_0=i'$ and $i_{m'}=\ell$ from $i'$ to $\ell$. Then, since $i_{m'}=\ell\in\N_i$, it follows that $i_0\ldots i_{m'}i$ is a path of length $m'+1\le m$ from $i'$ to $i$. Thus, either it is a shortest path, implying that $i'\in\N_i^{m'+1}\subset\bar{\N}_i^m$, or, since $i'\ne i$, there exist $0<m''<m'+1$ and a shortest path of length $m''$ joining $i'$ and $i$. In the latter case, it follows that $i'\in\N_i^{m'}$ and hence again that $i'\in \bar{\N}_i^m$. The proof of Part (i) is now complete.

For the proof of Part (ii) let $\ell\in\bar{\N}_i^m$ and  $i'\in\N_{\ell}$. If $i'=i$ then $i'\in\bar{\N}_i^{m+1}$. Otherwise, since $\ell\in\bar{\N}_i^m$  there exists $1\le m'\le m$ and a path $i_0\ldots i_{m'}$ of length $m'$ with $i_0=\ell$ and $i_{m'}=i$, implying that $i'i_0\ldots i_{m'}$ is a path of length $m'+1\le m+1$ from $i'$ to $i$. Thus, it follows that $i'\in\bar{\N}_i^{m+1}$.

For the proof of Part (iii) let $\ell\in\bar{\N}_i^m$ and $i'\in\N_{\ell}$. If $\ell=i$ or $i'=i$, then the result follows directly from the facts that $\N_i\subset\bar{\N}_i^m$ and $i\in\bar{\N}_i^m$, respectively. Otherwise, there exists $1\le m'\le m$ such that $\ell\in\N_i^{m'}$, implying that there exists a path $i_0\ldots i_{m'}$ of length $m'$ with $i_0=\ell$ and $i_{m'}=i$. Thus, since $i'\in\N_{\ell}$, we get that $i'i_0\ldots i_{m'}$ is a path of length $m'+1$ from $i'$ to $i$. If it is a shortest path, then it follows that $m'<m$, because otherwise, since $i'\ne i$, we would have a shortest path of length $m+1$ joining $i'$ and $i$, implying that $\N_i^{m+1}\ne\emptyset$ and contradicting the hypothesis of Part (iii). Hence, it holds in this case that $i'\in\N_i^{m'+1}$ for certain $1\le m'< m$ and thus, that $i'\in\bar{\N}_i^m$. Finally, if $i'i_0\ldots i_{m'}$ is not a shortest path, then there exists a shortest path of length $1\le m''< m$ joining $i'$ and $i$, which implies that $i'\in\N_i^{m''}$ and hence again, that $i'\in\bar{\N}_i^m$.
\end{proof}

\begin{dfn}\label{consistent:configurations}
Consider an agent $i\in\N$, a neighbor $\ell\in\N_i$ of $i$ and $m$-cell configurations $\bf{l}_i=(l_i,l_{j_1^1},\ldots,l_{j_{N_i^m}^m})$ and $\bf{l}_{\ell}=(\bar{l}_{\ell},\bar{l}_{j(\ell)_1^1},\ldots,\bar{l}_{j(\ell)_{N_{\ell}^m}^m})$ of $i$ and $\ell$, respectively. We say that $\bf{l}_{\ell}$ \textit{is consistent with} $\bf{l}_i$ if for all $\kappa\in \bar{\N}_i^m\cap \bar{\N}_{\ell}^m$  it holds $l_{\kappa}=\bar{l}_{\kappa}$. 
$\;\triangleleft$
\end{dfn}

The following definition provides for each agent $i$ its reference trajectory and the estimates of its neighbors' reference trajectories, based on $i$'s  $m$-cell configuration.

\begin{dfn}\label{reference:trajectories}
\noindent Given a cell decomposition $\S=\{S_l\}_{l\in\I}$ of $\Rat{n}$, a reference point $x_{l,G}\in S_l$ for each $l\in\I$, a time step $\delta t$ and a nonempty subset $W$ of $\Rat{n}$, consider an agent $i\in\N$, its $m$-neighbor set $\bar{\N}_i^{m}$ and an $m$-cell configuration $\bf{l}_i=(l_i,l_{j_1^1},\ldots,l_{j_{N_i^m}^m})$ of $i$. We define the functions $\chi_i(t)$, $\mbf{\chi}_j(t):=(\chi_{j_1}(t),\ldots,\chi_{j_{N_i}}(t))$, $t\ge 0$, through the solution of the following \textit{initial value problem}, specified by Cases (i) and (ii) below:

\noindent \textit{Case (i).} $\N_i^{m+1}=\emptyset$. Then we have the initial value problem
\begin{align}
\dot{\chi}_{\ell}(t) = & f_{\ell}(\chi_{\ell}(t),\mbf{\chi}_{j(\ell)}(t))=f_{\ell}(\chi_{\ell}(t),\chi_{j(\ell)_1}(t),\ldots,\chi_{j(\ell)_{N_{\ell}}}(t)), t\ge 0, \ell \in \bar{\N}_i^{m}, \nonumber \\
\chi_{\ell}(0)= & x_{l_{\ell},G},\forall \ell \in \bar{\N}_i^{m}, \label{reference:IVP1}
\end{align}

\noindent where $j(\ell)_1,\ldots,j(\ell)_{N_{\ell}}$ denote the corresponding neighbors of each agent  $\ell \in \bar{\N}_i^{m}$.

\noindent \textit{Case (ii).} $\N_i^{m+1}\ne\emptyset$. Then we have the initial value problem
\begin{align}
\dot{\chi}_{\ell}(t) = & f_{\ell}(\chi_{\ell}(t),\mbf{\chi}_{\ell}(t))=f_{\ell}(\chi_{\ell}(t),\chi_{j(\ell)_1}(t),\ldots,\chi_{j(\ell)_{N_i}}(t)), t\ge 0, \ell \in \bar{\N}_i^{m-1}, \nonumber \\
\chi_{\ell}(0)= & x_{l_{\ell},G},\forall \ell \in \bar{\N}_i^{m-1}, \label{reference:IVP2}
\end{align}

\noindent with the terms $\chi_{\ell}(\cdot)$, $\ell \in \N_i^{m}$ defined as
\begin{equation} \label{reference:IVP2:constant:terms}
\chi_{\ell}(t):=x_{l_{\ell},G},\forall t\ge 0,\ell\in\N_i^{m}.\;\triangleleft
\end{equation}
\end{dfn}

\begin{rem} \label{remark:control:class:well:posed}
\textit{(i)} Notice, that in Case (i) for the initial value problem in Definition \ref{reference:trajectories}, the requirement $\N_i^{m+1}=\emptyset$ implies by vitue of Lemma \ref{lemma:mneighbor:contains:mmin1}(iii) for each agent $\ell\in\bar{\N}_i^m$ its neighbors $j(\ell)_1,\ldots,j(\ell)_{N_{\ell}}$ also belong to $\bar{\N}_i^m$. Hence, the subsystem formed by the  agents in $\bar{\N}_i^m$ is \textit{decoupled from the other agents in the system}
and the initial value problem \eqref{reference:IVP1} is well defined.

\noindent \textit{(ii)} In Case (ii), the  subsystem formed by the agents in $\bar{\N}_i^m$ is not decoupled from the other agents in the system. However, by considering the agents in $\N_i^{m}$ fixed at their reference points as imposed by \eqref{reference:IVP2:constant:terms}, it follows that the initial value problem \eqref{reference:IVP2}-\eqref{reference:IVP2:constant:terms} is also well defined. 

\noindent \textit{(iii)} Apart from the notation $\chi_i(\cdot)$ and $\mbf{\chi}_j(\cdot)$ above, we will use the notation $\chi_{\ell}^{[i]}(\cdot)$  for the trajectory of each agent $\ell\in\bar{\N}_i^m$ (including the functions $\chi_{\ell}^{[i]}(\cdot)$  as defined by \eqref{reference:IVP2:constant:terms} when $\N_i^{m+1}\ne\emptyset$), as specified by the initial value problem corresponding to the $m$-cell configuration of $i$ in Definition \ref{reference:trajectories}. We will refer to $\chi_i(\cdot)\equiv\chi_{\ell}^{[i]}(\cdot)$ as the \textit{reference trajectory} of agent $i$ and to each $\chi_{\ell}^{[i]}(\cdot)$ with $\ell\in\bar{\N}_i^m\setminus\{i\}$ as the \textit{estimate of $\ell$'s reference trajectory by $i$}.
\end{rem}

\begin{exm}
In this example we demonstrate the IVPs of Definition \ref{reference:trajectories} for $m=3$, as specified by Cases (i) and (ii) for agents 5 and 1, respectively. The topology of the network is the same as in Example \ref{example} and the agents involved in the solution of the correposnding initial value problems are depicted in Fig. \ref{graph:IVP}. 
\begin{align*}
\textup{Agent 5 - Case (i)} & \left\lbrace\begin{array}{l}
\dot{\chi}_5(t) =f_5(\chi_5(t),\chi_3(t)) \\
\dot{\chi}_3(t) =f_3(\chi_3(t),\chi_2(t),\chi_4(t)) \\
\dot{\chi}_2(t) =f_2(\chi_2(t),\chi_3(t)) \\
\dot{\chi}_4(t) =f_4(\chi_4(t)) \\
\chi_{\kappa}(0) =x_{\kappa,G},\kappa=5,3,2,4
\end{array}
\right. \\
\textup{Agent 1 - Case (ii)} & \left\lbrace\begin{array}{l}
\dot{\chi}_1(t) =f_1(\chi_1(t),\chi_6(t),\chi_2(t)) \\
\dot{\chi}_6(t) =f_6(\chi_6(t),\chi_1(t),\chi_2(t),\chi_5(t),\chi_7(t)) \\
\dot{\chi}_2(t) =f_2(\chi_2(t),\chi_3(t)) \\
\dot{\chi}_5(t) =f_5(\chi_5(t),\chi_3(t)) \\
\dot{\chi}_3(t) =f_3(\chi_3(t),\chi_2(t),x_{4,G}) \\
\dot{\chi}_7(t) =f_7(\chi_7(t),\chi_6(t),x_{8,G}) \\
\chi_{\kappa}(0) =x_{\kappa,G},\kappa=1,6,2,5,3,7
\end{array}
\right.
\end{align*}
\end{exm}

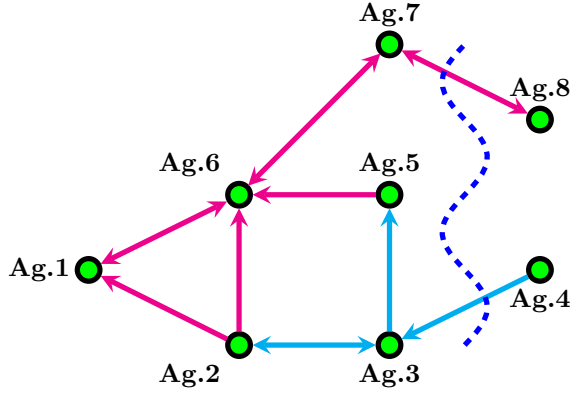
\begin{figure}[H]
\begin{center}
\begin{tikzpicture} [scale=1]

\filldraw[fill=green, line width=.07cm]  (-1,1) circle (0.15cm);
\node (agent 1) at (-1,1) [label=left:\textbf{Ag.1}] {};
\filldraw[fill=green, line width=.07cm]  (1,0) circle (0.15cm);
\node (agent 2) at (1,0) [label=below left:\textbf{Ag.2}] {};
\filldraw[fill=green, line width=.07cm]  (1,2) circle (0.15cm);
\node (agent 3) at (1,2) [label=above left:\textbf{Ag.6}] {};
\filldraw[fill=green, line width=.07cm]  (3,0) circle (0.15cm);
\node (agent 4) at (3,0) [label=below:\textbf{Ag.3}] {};
\filldraw[fill=green, line width=.07cm]  (3,2) circle (0.15cm);
\node (agent 5) at (3,2) [label=above:\textbf{Ag.5}] {};
\filldraw[fill=green, line width=.07cm]  (3,4) circle (0.15cm);
\node (agent 6) at (3,4) [label=above:\textbf{Ag.7}] {};
\filldraw[fill=green, line width=.07cm]  (5,1) circle (0.15cm);
\node (agent 7) at (5,1) [label=below:\textbf{Ag.4}] {};
\filldraw[fill=green, line width=.07cm]  (5,3) circle (0.15cm);
\node (agent 8) at (5,3) [label=above:\textbf{Ag.8}] {};

\draw[color=magenta,line width=.07cm,<-,>=stealth ] (-.84,.92)  -- (.84,.08);
\draw[color=magenta,line width=.07cm,<->,>=stealth ] (-.84,1.08) -- (.84,1.91);

\draw[color=magenta,line width=.07cm,->,>=stealth ] (1,.18) -- (1,1.82);

\draw[color=cyan,line width=.07cm,<->,>=stealth ] (1.18,0) -- (2.82,0);

\draw[color=magenta,line width=.07cm,<-,>=stealth ] (1.18,2) -- (2.82,2);
\draw[color=magenta,line width=.07cm,<->,>=stealth ] (1.12,2.12) -- (2.88,3.88);

\draw[color=cyan,line width=.07cm,->,>=stealth ] (3,.18) -- (3,1.82);
\draw[color=cyan,line width=.07cm,<-,>=stealth ] (3.16,.08) -- (4.84,.91);

\draw[color=magenta,line width=.07cm,<->,>=stealth ] (3.16,3.92)  -- (4.84,3.08);

\draw[color=blue,line width=0.07cm, dashed] plot [domain=0:90,samples=50]({4+0.3*sin(8*\x)},{\x/90*4});

\end{tikzpicture}
\end{center}
\caption{The neighbors on the right of the dotted line are considered fixed for the IVP of the 3-cell configuration of agent 1.} \label{graph:IVP}
\end{figure}

We next show that under certain conditions on the structure of the network graph in a neighborhood of each agent $i$, the reference trajectories of agent $i$'s neighbors coincide with their estimates by $i$, for consistent cell configurations. The following lemma provides this result.

\begin{lemma} \label{lemma:zero:deviation}
Assume that for agent $i\in\N$ it holds $\N_i^{m+1}=\emptyset$ and let $\bf{l}_i$ be an $m$-cell configuration of $i$. Then, for every agent $\ell\in\N_i$ with $\N_{\ell}^{m+1}=\emptyset$ and $m$-cell configuration $\bf{l}_{\ell}$ of $\ell$ consistent with $\bf{l}_i$, it holds $\chi_{\ell}^{[\ell]}(t)=\chi_{\ell}^{[i]}(t)$, for all $t\ge 0$, with $\chi_{\ell}^{[\ell]}(\cdot)$ and $\chi_{\ell}^{[i]}(\cdot)$ as determined by the initial value problem of Definition \ref{reference:trajectories} for the $m$-cell configurations $\bf{l}_{\ell}$ and $\bf{l}_i$, respectively.
\end{lemma}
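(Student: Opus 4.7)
The plan is to reduce the claim to the uniqueness theorem for Lipschitz ODEs, by showing that the initial value problem defining $\{\chi_{\ell'}^{[\ell]}(\cdot)\}_{\ell'\in\bar{\N}_{\ell}^m}$ is exactly the restriction of the $i$-IVP of Definition \ref{reference:trajectories} to the index set $\bar{\N}_{\ell}^m$. The argument then has three ingredients: a set-theoretic inclusion, a dynamical closure statement, and a matching of initial conditions.

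First, I would establish the inclusion $\bar{\N}_{\ell}^m\subset \bar{\N}_i^m$ under the hypothesis $\N_i^{m+1}=\emptyset$. Given any $\ell'\in\bar{\N}_{\ell}^m\setminus\{i\}$, there is a path from $\ell'$ to $\ell$ of length at most $m$; appending the edge $(\ell,i)\in\E$ produces a path from $\ell'$ to $i$ of length at most $m+1$, whence $\ell'\in \bar{\N}_i^{m+1}$. The assumption $\N_i^{m+1}=\emptyset$ forces $\bar{\N}_i^{m+1}=\bar{\N}_i^m$, so $\ell'\in\bar{\N}_i^m$; the case $\ell'=i$ is immediate. Second, because $\N_{\ell}^{m+1}=\emptyset$, Lemma \ref{lemma:mneighbor:contains:mmin1}(iii) (applied with $\ell$ in place of $i$) gives $\N_{\ell'}\subset\bar{\N}_{\ell}^m$ for every $\ell'\in\bar{\N}_{\ell}^m$. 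Consequently, within the $i$-IVP \eqref{reference:IVP1}, the subsystem of equations $\dot{\chi}_{\ell'}^{[i]}=f_{\ell'}(\chi_{\ell'}^{[i]},\mbf{\chi}_{j(\ell')}^{[i]})$ indexed by $\ell'\in\bar{\N}_{\ell}^m$ is dynamically self-contained, involving only components carried by agents already in $\bar{\N}_{\ell}^m$.

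Third, the consistency hypothesis $l_\kappa=\bar{l}_\kappa$ for every $\kappa\in\bar{\N}_i^m\cap\bar{\N}_{\ell}^m$, combined with the inclusion of the first step (which makes this intersection equal to $\bar{\N}_{\ell}^m$), gives that the reference points $x_{l_{\ell'},G}$ prescribed as initial conditions in the $i$-IVP and in the $\ell$-IVP coincide on $\bar{\N}_{\ell}^m$. Thus $\{\chi_{\ell'}^{[i]}(\cdot)\}_{\ell'\in\bar{\N}_{\ell}^m}$ and $\{\chi_{\ell'}^{[\ell]}(\cdot)\}_{\ell'\in\bar{\N}_{\ell}^m}$ solve the same initial value problem, whose right-hand side is globally Lipschitz by \eqref{dynamics:bound1}--\eqref{dynamics:bound2}. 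Standard uniqueness for Lipschitz ODEs yields $\chi_{\ell'}^{[i]}(t)=\chi_{\ell'}^{[\ell]}(t)$ for all $\ell'\in\bar{\N}_{\ell}^m$ and $t\ge 0$, and the choice $\ell'=\ell$ is the assertion of the lemma.

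The one genuinely nontrivial checkpoint is the inclusion $\bar{\N}_{\ell}^m\subset \bar{\N}_i^m$: it depends critically on the standing assumption $\N_i^{m+1}=\emptyset$, and without it the $\ell$-subsystem need not lie inside the index set on which the $i$-IVP is posed, so the dynamical closure and initial-condition-matching arguments could not even be set up. Everything else is bookkeeping plus the Picard--Lindel\"of uniqueness theorem.
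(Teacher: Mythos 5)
Your proof is correct and follows essentially the same route as the paper's: establish $\bar{\N}_{\ell}^m\subset\bar{\N}_i^m$ (the paper cites Lemma \ref{lemma:mneighbor:contains:mmin1}(i) where you inline the path argument), use Lemma \ref{lemma:mneighbor:contains:mmin1}(iii) to see that the $\bar{\N}_{\ell}^m$-subsystem of the $i$-IVP is self-contained, match initial conditions via consistency, and conclude by uniqueness of solutions. The paper phrases the last step as a block-triangular reordering of the $i$-IVP whose $X_1$-block is exactly the $\ell$-IVP, which is the same argument as your appeal to Picard--Lindel\"of.
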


\begin{proof}
Since $\N_i^{m+1}=\emptyset$, it follows that the initial value problem which corresponds to the $m$-cell configuration of agent $i$ and specifies the trajectory $\chi_{\ell}^{[i]}(\cdot)$ of $\ell$ is provided by Case (i) of Definition \ref{reference:trajectories}, namely, by \eqref{reference:IVP1}. We rewrite \eqref{reference:IVP1} in the compact form
\begin{equation} \label{IVP:agenti}
\dot{X}=F(X);\;F=(f_{\kappa_1},\ldots,f_{\kappa_{\bar{N}_i^m}}),\;X=(\chi_{\kappa_1},\ldots,\chi_{\kappa_{\bar{N}_i^m}}),
\end{equation}

\noindent with initial condition $\chi_{\kappa_{\nu}}(0)=x_{l_{\kappa_{\nu}},G}$, $\nu=1,\ldots,\bar{N}_i^m$, and $\kappa_{\nu}$ being the $\nu$-th index of $\bar{\N}_i^m$ according to the total order of Definition \ref{mneighbor:set:ordering}. Next, since $\N_{\ell}^{m+1}=\emptyset$, we similarly obtain that the initial value problem which corresponds to the $m$-cell configuration of agent $\ell$ and specifies its reference trajectory $\chi_{\ell}^{[\ell]}(\cdot)$  is given as
\begin{equation} \label{IVP:agentihat}
\dot{X}_1=F_1(X_1);\;F_1=(f_{\kappa_1'},\ldots,f_{\kappa_{\bar{N}_{\ell}^m}'}),\;X_1=(\chi_{\kappa_1'},\ldots,\chi_{\kappa_{\bar{N}_{\ell}^m}'}),
\end{equation}

\noindent with initial condition $\chi_{\kappa_{\nu}'}(0)=x_{l_{\kappa_{\nu}}',G}$, $\nu=1,\ldots,\bar{N}_{\ell}^m$, and $\kappa_{\nu}'$ being the corresponding $\nu$-th index of $\bar{\N}_{\ell}^m$. By taking into account that $\N_i^{m+1}=\emptyset$, we get from Lemma \ref{lemma:mneighbor:contains:mmin1}(i) that $\bar{\N}_{\ell}^m\subset \bar{\N}_i^{m+1}=\bar{\N}_i^m\cup\N_i^{m+1}=\bar{\N}_i^m$.  By assuming that without any loss of generality the inclusion is strict, we obtain from \eqref{IVP:agenti}, \eqref{IVP:agentihat} and Remark \ref{remark:control:class:well:posed}(i) for $\ell$ (namely, that the subsystem formed by the agents in $\bar{\N}_{\ell}^m$ is decoupled from the other agents), that by a reordering of the components $\chi_{\kappa_{\nu}}$, $\nu=1,\ldots,\bar{N}_i^m$, \eqref{IVP:agenti} can be cast in the form
\begin{align}
\dot{X}_1 & =F_1(X_1) \nonumber \\
\dot{X}_2 & =F_2(X_1,X_2), \label{IVP:agneti:equiv}
\end{align}

\noindent with $X_1$ and $F_1(\cdot)$ as in \eqref{IVP:agentihat}, $X_2=(\chi_{\kappa_{\bar{N}_{\ell}^m+1}'},\ldots,\chi_{\kappa_{\bar{N}_i^m}'})$, $F_2=(f_{\kappa_{\bar{N}_{\ell}^m+1}'},\ldots,f_{\kappa_{\bar{N}_i^m}'})$, and the same initial condition for the $X_1$ part as \eqref{IVP:agentihat}, due to the consistency of $\bf{l}_{\ell}$ with $\bf{l}_i$. Hence, since the $X_1$ part of the solution in \eqref{IVP:agneti:equiv} is independent of $X_2$, it follows that the reference trajectory $\chi_{\ell}^{[\ell]}(\cdot)$ of agent $\ell$ as given by \eqref{IVP:agentihat} and its corresponding estimate $\chi_{\ell}^{[i]}(\cdot)$ obtained from the first (decoupled) subsystem in \eqref{IVP:agneti:equiv} coincide.
\end{proof}

We next check the conditions of Lemma \ref{lemma:zero:deviation} for the graphs depicted in Fig. \ref{fig:cases} below. Consider first the graph of Case I, and assume that $(i,\ell)\notin\E$, where the interconnection between $i$ and $\ell$ is depicted through the non-dashed arrow. Then, by selecting $m=3$ and $m=4$, respectively, it turns out that $\N_i^4=\N_i^5=\emptyset$. However, we have  $\N_{\ell}^4=\{j_4\}$ and $\N_{\ell}^5=\{j_5\}$, respectively, which implies that the conditions of  Lemma \ref{lemma:zero:deviation} are not fulfilled. The same observation holds also for the graph of Case II when $(i,\ell)\notin\E$. On the other hand, when  $(i,\ell)\in\E$ it follows in Case I that for both  $m=3$ and $m=4$ it holds $\N_i^4=\N_i^5=\emptyset$ and $\N_{\ell}^4=\N_{\ell}^5=\emptyset$, namely the requirements of  Lemma \ref{lemma:zero:deviation} are satisfied. However, for Case II (again  when  $(i,\ell)\in\E$), these are only satisfied for $m=4$ where  $\N_i^5=\emptyset$ and $\N_{\ell}^5=\emptyset$, since for $m=3$ it holds $\N_i^4=\emptyset$ and $\N_{\ell}^4=\{j_5\}\ne\emptyset$.

\begin{figure}[H] 
\begin{center}
\begin{tikzpicture}[scale=.9]

\filldraw[fill=green, line width=.07cm]  (0,1) circle (0.15cm);
\node (agent 1) at (0,1) [label=left:\textbf{Ag.}$i$] {};
\filldraw[fill=green, line width=.07cm]  (2,2) circle (0.15cm);
\node (agent 2) at (2,2) [label=above:\textbf{Ag.}$\ell$] {};
\filldraw[fill=green, line width=.07cm]  (4,2) circle (0.15cm);
\node (agent 3) at (4,2) [label=above:\textbf{Ag.}$j_1$] {};
\filldraw[fill=green, line width=.07cm]  (6,2) circle (0.15cm);
\node (agent 4) at (6,2) [label=above:\textbf{Ag.}$j_2$] {};
\filldraw[fill=green, line width=.07cm]  (6,0) circle (0.15cm);
\node (agent 5) at (6,0) [label=below:\textbf{Ag.}$j_3$] {};
\filldraw[fill=green, line width=.07cm]  (4,0) circle (0.15cm);
\node (agent 6) at (4,0) [label=below:\textbf{Ag.}$j_4$] {};
\filldraw[fill=green, line width=.07cm]  (2,0) circle (0.15cm);
\node (agent 7) at (2,0) [label=below:\textbf{Ag.}$j_5$] {};

\draw[color=blue,line width=.07cm,<->,>=stealth,dashed] (.16,1.3)  -- (1.82,2.14);
\draw[color=magenta,line width=.07cm,<-,>=stealth ] (.16,1.08)  -- (1.82,1.92);
\draw[color=magenta,line width=.07cm,<-,>=stealth ] (2.18,2) -- (3.82,2);
\draw[color=magenta,line width=.07cm,<-,>=stealth ] (4.18,2) -- (5.82,2);
\draw[color=magenta,line width=.07cm,<-,>=stealth ] (6,1.82) -- (6,.18);
\draw[color=magenta,line width=.07cm,<->,>=stealth ] (4.18,0) -- (5.82,0);
\draw[color=magenta,line width=.07cm,<->,>=stealth ] (2.18,0) -- (3.82,0);
\draw[color=magenta,line width=.07cm,<-,>=stealth ] (.16,.92)  -- (1.82,.08);

\filldraw[fill=green, line width=.07cm]  (8,0) circle (0.15cm);
\node (agent 1) at (8,0) [label=below left:\textbf{Ag.}$i$] {};
\filldraw[fill=green, line width=.07cm]  (8,2) circle (0.15cm);
\node (agent 2) at (8,2) [label=left:\textbf{Ag.}$\ell$] {};
\filldraw[fill=green, line width=.07cm]  (10,4) circle (0.15cm);
\node (agent 3) at (10,4) [label=right:\textbf{Ag.}$j_1$] {};
\filldraw[fill=green, line width=.07cm]  (10,2) circle (0.15cm);
\node (agent 4) at (10,2) [label=right:\textbf{Ag.}$j_2$] {};
\filldraw[fill=green, line width=.07cm]  (10,0) circle (0.15cm);
\node (agent 5) at (10,0) [label=below:\textbf{Ag.}$j_3$] {};
\filldraw[fill=green, line width=.07cm]  (12,0) circle (0.15cm);
\node (agent 6) at (12,0) [label=below right:\textbf{Ag.}$j_4$] {};
\filldraw[fill=green, line width=.07cm]  (12,2) circle (0.15cm);
\node (agent 7) at (12,2) [label=right:\textbf{Ag.}$j_5$] {};

\draw[color=blue,line width=.07cm,<->,>=stealth,dashed] (7.75,.18)  -- (7.75,1.82);
\draw[color=magenta,line width=.07cm,<-,>=stealth ] (8,.18)  -- (8,1.82);
\draw[color=magenta,line width=.07cm,<-,>=stealth ] (8.12,2.12) -- (9.88,3.88);
\draw[color=magenta,line width=.07cm,<-,>=stealth ] (10,3.82) -- (10,2.18);
\draw[color=magenta,line width=.07cm,<-,>=stealth ] (10,1.82) -- (10,.18);
\draw[color=magenta,line width=.07cm,<-,>=stealth ] (8.18,0) -- (9.82,0);
\draw[color=magenta,line width=.07cm,<-,>=stealth ] (10.18,0) -- (11.82,0);
\draw[color=magenta,line width=.07cm,<-,>=stealth ] (12,.18)  -- (12,1.82);

\node (Case I) at (3,-1) [label=center:\textbf{Case I.}] {};
\node (Case II) at (10,-1) [label=center:\textbf{Case II.}] {};

\end{tikzpicture}
\end{center}
\caption{Illustration of graph topologies that satisfy the conditions of Lemma \ref{lemma:mhatneighbors:empty} for $m=4$.} \label{fig:cases}
\end{figure}
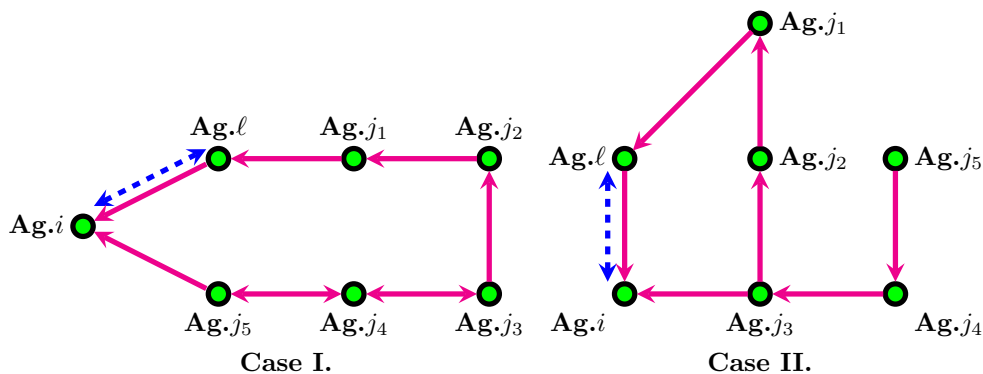

We note that in both cases discussed above, the fact that when $(i,\ell)\in\E$, the condition $\N_i^4=\emptyset$ implies that $\N_{\ell}^5=\emptyset$ is justified by the following lemma, which provides sufficient conditions for the assumptions of Lemma \ref{lemma:zero:deviation} to hold.

\begin{lemma} \label{lemma:mhatneighbors:empty}
Assume that for agent $i$ it holds $\N_i^{m}=\emptyset$ for certain $m\ge 2$. \textbf{(i)} Then, for every agent $\ell\in\N_i$ it holds
\begin{equation} \label{neighbor:mset:inclusion}
\bar{\N}_{\ell}^{m+\kappa}\subset \bar{\N}_i^m,\forall\kappa\ge 0.
\end{equation}

\noindent \textbf{(ii)} Furthermore, if $i\in\N_{\ell}$, then it also holds
\begin{equation}
\N_{\ell}^{m+1}=\emptyset.
\end{equation}
\end{lemma}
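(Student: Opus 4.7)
The plan is to first establish a monotonicity fact: if $\N_i^m=\emptyset$ for some $m\ge 2$, then $\N_i^{m'}=\emptyset$ for every $m'\ge m$, and consequently $\bar{\N}_i^{m'}=\bar{\N}_i^m$ for every such $m'$. I would prove this by induction on $m'$. For the inductive step, suppose $\N_i^{m'}=\emptyset$ and take a hypothetical $j\in\N_i^{m'+1}$ with a witnessing shortest path $j=i_0\to i_1\to\cdots\to i_{m'+1}=i$. Since $m'+1\ge 3$, the vertex $i_1$ differs from $i$ (otherwise the length-$1$ walk $j\to i$ would contradict minimality). The suffix $i_1\to\cdots\to i_{m'+1}$ has length $m'$, so the shortest $i_1$-to-$i$ distance $d$ satisfies $d\le m'$; if $d<m'$, prepending the edge $(j,i_1)$ would produce a $j$-to-$i$ walk of length $1+d<m'+1$, again contradicting minimality. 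Hence $d=m'$, which puts $i_1\in\N_i^{m'}=\emptyset$, absurd.

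For part (i), I would invoke Lemma \ref{lemma:mneighbor:contains:mmin1}(i) with the index $m+\kappa+1$ in place of $m$, obtaining $\bar{\N}_\ell^{m+\kappa}\subset\bar{\N}_i^{m+\kappa+1}$; the monotonicity fact then collapses the right-hand side to $\bar{\N}_i^m$, yielding the desired inclusion.

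For part (ii), assume toward contradiction that some $j\in\N_\ell^{m+1}$ exists. The edge $(\ell,i)\in\E$ (available since $\ell\in\N_i$) extends a shortest $j$-to-$\ell$ walk to a $j$-to-$i$ walk of length $m+2$, so $j$ reaches $i$. Because $i\in\N_\ell$ gives shortest $i$-to-$\ell$ distance $1\ne m+1$, we have $j\ne i$, hence the shortest $j$-to-$i$ distance $d$ satisfies $d\ge 1$, while by the monotonicity fact $d\le m-1$. Appending the edge $(i,\ell)\in\E$ (available since $i\in\N_\ell$) to a shortest $j$-to-$i$ walk then yields a $j$-to-$\ell$ walk of length $d+1\le m<m+1$, contradicting $j\in\N_\ell^{m+1}$.

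The main obstacle is the monotonicity fact itself: once it is in hand, both conclusions reduce to short path-concatenation arguments combined with Lemma \ref{lemma:mneighbor:contains:mmin1}(i). The only subtleties lie in keeping track of edge directions in the digraph convention $\ell\in\N_i\Leftrightarrow(\ell,i)\in\E$ and in ruling out the degenerate case $j=i$ in part (ii).
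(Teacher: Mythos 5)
Your proposal is correct and follows essentially the same route as the paper's proof: the core step in both is establishing that $\N_i^{m'}=\emptyset$ for all $m'\ge m$ (the paper argues this directly on a shortest path of length $m+\kappa$ by extracting an intermediate vertex at distance exactly $m$ from $i$, whereas you induct on $m'$ using the vertex $i_1$ — the same idea), after which part (i) is Lemma \ref{lemma:mneighbor:contains:mmin1}(i) combined with the collapse $\bar{\N}_i^{m+\kappa+1}=\bar{\N}_i^m$. Your part (ii) likewise mirrors the paper's: both place $j$ within distance at most $m-1$ of $i$ and then append the edge $(i,\ell)$ to contradict minimality of the length-$(m+1)$ path to $\ell$.
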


\begin{proof}

For Part (i), we first show that 
\begin{equation} \label{mset:inclusion}
\N_i^{m+\kappa}=\emptyset,\forall\kappa\ge 1.
\end{equation} 

\noindent Indeed, assume on the contrary that $i'\in \N_i^{m+\kappa}\ne\emptyset$ for certain $\kappa\ge 1$ and thus, there exists a shortest path $i_0i_1\ldots i_{m-1+\kappa}i_{m+\kappa}$ of length $m+\kappa$ in $\G$ with $i_0=i'$ and $i_{m+\kappa}=i$. Then, it follows that $i_{\kappa}i_{\kappa+1}\ldots i_{m-1+\kappa}i_{m+\kappa}$ is a shortest path of length $m$ from $i_{\kappa}$ to $i$, because otherwise, if there were a shorter one $i_{\kappa}i_{\kappa+1}'\ldots i_{m_0-1+\kappa}'i_{m+\kappa}$ with $m_0<m$, $i_0i_1\ldots i_{\kappa}i_{\kappa+1}'\ldots i_{m_0-1+\kappa}'i_{m+\kappa}$ would be a path of length $m_0+\kappa<m+\kappa$ from $i'$ to $i$, contradicting minimality of $i_0i_1\ldots i_{m-1+\kappa}i_{m+\kappa}$. Thus, we get that $i_{\kappa}\in\N_i^m$, which contradicts the assumption that $\N_i^m\ne\emptyset$ and we conclude that \eqref{mset:inclusion} is fulfilled. By exploiting the latter, we get that for each $\kappa\ge0$ \eqref{neighbor:mset:inclusion} holds, since by virtue of Lemma \ref{lemma:mneighbor:contains:mmin1}(i) we have $\bar{\N}_{\ell}^{m+\kappa}\subset\bar{\N}_i^{m+\kappa+1}=\bar{\N}_i^m\cup\bigcup_{\kappa'=1}^{\kappa+1}\N_i^{m+\kappa'}=\bar{\N}_i^m$.

For the proof of Part (ii), we proceed again by contradiction. Hence, consider a shortest path $i_0i_1\ldots i_mi_{m+1}$ with $i_{m+1}=\ell$, and notice that by virtue of \eqref{neighbor:mset:inclusion} with $\kappa=1$, it holds $i_0\in\bar{\N}_i^m$. We consider two cases. If $i_0=i$, then by virtue of the fact that $i\in\N_{\ell}$, which implies that $(i,\ell)\in\E$, we have that $i_0\ell$ is a path of length 1 joining $i_0$ and $\ell$, contradicting minimality of  $i_0i_1\ldots i_mi_{m+1}$. Now consider the case where  $i_0\ne i$. Then, it follows from the facts that $i_0\in\bar{\N}_i^m$, $i_0\ne i$ and the hypothesis  $\N_i^{m}=\emptyset$, that there exists a shortest path $i_0i_1'\ldots i_{m_0-1}'i$ of length $m_0<m$, joining $i_0$ and $i$. Thus, since $(i,\ell)\in\E$ we obtain that $i_0i_1'\ldots i_{m_0-1}'i\ell$ is a path of length $m_0+1\le m$ joining  $i_0$ and $\ell$, contradicting minimality of  $i_0i_1\ldots i_mi_{m+1}$. The proof is now complete.
\end{proof}

Despite the result of Lemma \ref{lemma:zero:deviation}, in principle, the actual trajectory of each agent's neighbor and its estimate, based on the solution of the initial value problem  for the reference trajectory of the specific agent do not coincide. Explicit bounds for the deviation between these trajectories and their estimates are given in Proposition \ref{proposition:neighbor:rt:deviation} below, whose proof requires the following auxiliary lemma.

\begin{lemma} \label{lemma:reftraj:worst:case:dev}
Consider the agent $i\in\N$ and let $\bf{l}_i$ be an $m$-cell configuration of $i$. Also, pick $\ell\in\N_i$ and any $m$-cell configuration $\bf{l}_{\ell}$ of $\ell$ consistent with $\bf{l}_i$. Finally, let $t^*$ be the unique positive solution of the equation
\begin{equation} \label{time:tstar}
e^{L_2t^*}-\left(L_2+\frac{L_2^2}{L_1\sqrt{N_{\max}}}\right)t^*-1=0,
\end{equation}

\noindent with
\begin{equation} \label{Nmax}
\quad N_{\max}:=\max\{N_i:i\in\N\}.
\end{equation}

\noindent Then, for each $\kappa\in\bar{\N}_{\ell}^{m-1}$ (recall that $\bar{\N}_{\ell}^{m-1}\subset\bar{\N}_i^m$ by Lemma \ref{lemma:mneighbor:contains:mmin1}(i)) it holds:
\begin{equation} \label{reftraj:deviation:all}
|\chi_{\kappa}^{[i]}(t)-\chi_{\kappa}^{[\ell]}(t)|\le Mt,\forall t\in [0,t^*],
\end{equation}

\noindent where $\chi_{\kappa}^{[i]}(\cdot)$ and $\chi_{\kappa}^{[\ell]}(\cdot)$ are determined by the initial value problem of Definition \ref{reference:trajectories} for the $m$-cell configurations $\bf{l}_i$ and $\bf{l}_{\ell}$, respectively.
\end{lemma}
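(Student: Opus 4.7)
The plan is a Gronwall-type bootstrap argument. By Lemma \ref{lemma:mneighbor:contains:mmin1}(i), $\bar{\N}_{\ell}^{m-1}\subset\bar{\N}_i^m\cap\bar{\N}_{\ell}^m$, so the consistency of $\bf{l}_{\ell}$ with $\bf{l}_i$ gives $\chi_{\kappa}^{[i]}(0)=x_{l_{\kappa},G}=\chi_{\kappa}^{[\ell]}(0)$, and the deviation $e_{\kappa}(t):=|\chi_{\kappa}^{[i]}(t)-\chi_{\kappa}^{[\ell]}(t)|$ vanishes at $t=0$. I partition the relevant indices into two families: (a) those for which at least one of $\chi_{\kappa}^{[i]},\chi_{\kappa}^{[\ell]}$ is held constant by \eqref{reference:IVP2:constant:terms} in its respective IVP, and (b) those evolving through a genuine ODE in both IVPs; call the latter set $\mathcal{A}$. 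Since $\bar{\N}_{\ell}^{m-1}\cap\N_{\ell}^m=\emptyset$, the trajectory $\chi_{\kappa}^{[\ell]}$ is non-constant for every $\kappa\in\bar{\N}_{\ell}^{m-1}$, so being in family (a) forces $\chi_{\kappa}^{[i]}\equiv x_{l_{\kappa},G}$; then $e_{\kappa}(t)\le\int_0^t|\dot\chi_{\kappa}^{[\ell]}|\,ds\le Mt$ by \eqref{dynamics:bound}, and the desired bound is obtained for free in family (a).

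For $\kappa\in\mathcal{A}$ I would use the telescoping identity
\[
\dot\chi_{\kappa}^{[i]}-\dot\chi_{\kappa}^{[\ell]}=\bigl[f_{\kappa}(\chi_{\kappa}^{[i]},\mbf{\chi}_{j(\kappa)}^{[i]})-f_{\kappa}(\chi_{\kappa}^{[\ell]},\mbf{\chi}_{j(\kappa)}^{[i]})\bigr]+\bigl[f_{\kappa}(\chi_{\kappa}^{[\ell]},\mbf{\chi}_{j(\kappa)}^{[i]})-f_{\kappa}(\chi_{\kappa}^{[\ell]},\mbf{\chi}_{j(\kappa)}^{[\ell]})\bigr]
\]
together with \eqref{dynamics:bound2}, \eqref{dynamics:bound1} and $\sqrt{N_{\kappa}}\le\sqrt{N_{\max}}$ to obtain
\[
|\dot\chi_{\kappa}^{[i]}-\dot\chi_{\kappa}^{[\ell]}|\le L_2 e_{\kappa}(t)+L_1\sqrt{N_{\max}}\max_{\nu} e_{j(\kappa)_{\nu}}(t).
\]
Because $\kappa\in\mathcal{A}$ has a genuine ODE in both IVPs, parts (ii)--(iii) of Lemma \ref{lemma:mneighbor:contains:mmin1} (applied once to $i$ and once to $\ell$) place every neighbor $\nu\in\N_{\kappa}$ inside $\bar{\N}_i^m\cap\bar{\N}_{\ell}^m$; consistency then gives $\chi_{\nu}^{[i]}(0)=\chi_{\nu}^{[\ell]}(0)$, and any such $\nu\notin\mathcal{A}$ lies in family (a), so $e_{\nu}(s)\le Ms$ for every $s\ge 0$ unconditionally.

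To close the bootstrap, set $V(t):=\max_{\kappa\in\mathcal{A}}e_{\kappa}(t)$ and $\tau:=\sup\{t\ge 0:V(s)\le Ms\text{ for all }s\in[0,t]\}$. For $t\in[0,\tau]$ each neighbor error is $\le Ms$, so integrating the pointwise inequality and using $e_{\kappa}(0)=0$ yields, for each $\kappa\in\mathcal{A}$,
\[
e_{\kappa}(t)\le\int_0^t\bigl[L_2 e_{\kappa}(s)+L_1\sqrt{N_{\max}}\,Ms\bigr]ds.
\]
The standard integral form of Gronwall's inequality then delivers $e_{\kappa}(t)\le L_1\sqrt{N_{\max}}\,M\int_0^t e^{L_2(t-s)}s\,ds=(L_1\sqrt{N_{\max}}\,M/L_2^2)(e^{L_2 t}-1-L_2 t)$, so the same bound applies to $V(t)$. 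The inequality $V(t)\le Mt$ is equivalent to $e^{L_2 t}-(L_2+L_2^2/(L_1\sqrt{N_{\max}}))t-1\le 0$, whose left-hand side is strictly convex, vanishes at $t=0$ with negative slope, and is therefore non-positive precisely on $[0,t^*]$ with $t^*$ as in \eqref{time:tstar}. Hence $\tau\ge t^*$, yielding \eqref{reftraj:deviation:all} for every $\kappa\in\bar{\N}_{\ell}^{m-1}$. The main delicate point is the bookkeeping in the second paragraph -- certifying that every neighbor $\nu\notin\mathcal{A}$ of an agent in $\mathcal{A}$ actually falls into family (a) with matching initial condition -- which I would handle by going through the four combinations of Case (i)/(ii) for $i$ and $\ell$ in Definition \ref{reference:trajectories}.
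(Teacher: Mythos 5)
Your proposal is correct and follows essentially the same route as the paper's proof: the trivial $Mt$ bound for agents whose trajectory is frozen by \eqref{reference:IVP2:constant:terms}, the Lipschitz splitting as in \eqref{tildexi:difference:integral:ineq}, the Gronwall step (the paper's Fact I), and a maximal-time bootstrap leading to the same transcendental equation \eqref{time:tstar}; the neighbor bookkeeping you defer to the four combinations of Case (i)/(ii) for $i$ and $\ell$ is exactly the paper's Cases (i)--(iv) with their subcases. The only step you gloss over is why $\tau>0$ and why the bound extends past $\tau$ when $\tau<t^*$: the paper devotes its Claim I (a crude a priori $2Mt$ bound on a short initial interval) to getting the bootstrap started and then uses strictness of the inequality on $(0,t^*)$ to contradict maximality of $\tau$ --- both are easy to supply within your framework.
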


\begin{proof}
The proof of the lemma is given in the Appendix.
\end{proof}

\begin{prop} \label{proposition:neighbor:rt:deviation}
Consider the agent $i\in\N$ and let $\bf{l}_i$ be an $m$-cell configuration of $i$. Also, pick $\ell\in\N_i$ and any $m$-cell configuration $\bf{l}_{\ell}$ of $\ell$ consistent with $\bf{l}_i$. Then the difference $|\chi_{\ell}^{[i]}(\cdot)-\chi_{\ell}^{[\ell]}(\cdot)|$ satisfies the bound:
\begin{equation} \label{neighbor:rt:deviation}
|\chi_{\ell}^{[i]}(t)-\chi_{\ell}^{[\ell]}(t)|\le H_m(t), \forall t\in [0,t^*],
\end{equation}

\noindent where $t^*$ is given in \eqref{time:tstar}, the functions $H_{\kappa}(\cdot)$, $\kappa\ge 1$, are defined recursively as
\begin{equation} \label{functions:Hkappa}
H_1(t):=Mt,t\ge 0;\quad H_{\kappa}(t):=\int_0^te^{L_2(t-s)}L_1\sqrt{N_{\max}}H_{\kappa-1}(s)ds, t\ge 0
\end{equation}

\noindent and $\chi_{\ell}^{[i]}(\cdot)$, $\chi_{\ell}^{[\ell]}(\cdot)$ are determined by the initial value problem of Definition \ref{reference:trajectories} for the $m$-cell configurations $\bf{l}_i$ and $\bf{l}_{\ell}$, respectively.
\end{prop}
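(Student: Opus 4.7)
The plan is to deduce the bound by an induction that propagates the discrepancy estimate layer by layer from the boundary of $\bar{\N}_\ell^{m-1}$ inward toward $\ell$. Specifically, I would establish the stronger claim that for every $p \in \{1,\ldots,m\}$ and every $\kappa \in \bar{\N}_\ell^{m-p}$,
\[
|\chi_\kappa^{[i]}(t) - \chi_\kappa^{[\ell]}(t)| \le H_p(t), \quad t \in [0,t^*].
\]
Taking $p = m$ collapses $\bar{\N}_\ell^{m-p}$ to $\{\ell\}$ and yields \eqref{neighbor:rt:deviation}. The base case $p = 1$ is exactly Lemma \ref{lemma:reftraj:worst:case:dev}, which supplies the worst-case bound $Mt = H_1(t)$ uniformly over $\kappa \in \bar{\N}_\ell^{m-1}$ and is also the source of the time restriction $t \le t^*$.

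For the inductive step, I fix $\kappa \in \bar{\N}_\ell^{m-p-1}$. Two structural facts, both drawn from Lemma \ref{lemma:mneighbor:contains:mmin1}, drive the argument. First, part (i) applied to $\ell \in \N_i$ gives $\bar{\N}_\ell^{m-p-1} \subset \bar{\N}_i^{m-p} \subset \bar{\N}_i^{m-1}$, so that $\chi_\kappa^{[i]}$ and $\chi_\kappa^{[\ell]}$ are genuine (i.e.\ non-frozen) solutions to coupled ODEs in Cases (i) and (ii) of Definition \ref{reference:trajectories} alike; the consistency of $\bf{l}_i$ and $\bf{l}_\ell$ in Definition \ref{consistent:configurations}, together with $\kappa \in \bar{\N}_i^m \cap \bar{\N}_\ell^m$, forces the common initial condition $\chi_\kappa^{[i]}(0) = \chi_\kappa^{[\ell]}(0) = x_{l_\kappa,G}$. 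Second, part (ii) gives $\N_\kappa \subset \bar{\N}_\ell^{m-p}$, so every coordinate $j(\kappa)_r$ on which $\kappa$'s vector field depends is covered by the induction hypothesis, regardless of whether it is evolving or frozen in the IVP based at $i$. Subtracting the two ODEs for $\chi_\kappa^{[i]}$ and $\chi_\kappa^{[\ell]}$, applying the Lipschitz estimates \eqref{dynamics:bound1}-\eqref{dynamics:bound2} together with the componentwise bound $|\chi_{j(\kappa)}^{[i]} - \chi_{j(\kappa)}^{[\ell]}| \le \sqrt{N_\kappa}\max_r |\chi_{j(\kappa)_r}^{[i]} - \chi_{j(\kappa)_r}^{[\ell]}| \le \sqrt{N_{\max}}\,H_p(\cdot)$, and integrating yields, for $\phi(t) := |\chi_\kappa^{[i]}(t) - \chi_\kappa^{[\ell]}(t)|$,
\[
\phi(t) \le \int_0^t \left(L_2\,\phi(s) + L_1\sqrt{N_{\max}}\,H_p(s)\right) ds, \quad \phi(0) = 0.
\]
A standard Gronwall-type integration then produces $\phi(t) \le \int_0^t e^{L_2(t-s)} L_1 \sqrt{N_{\max}}\, H_p(s)\,ds$, which is precisely $H_{p+1}(t)$ by the recursion \eqref{functions:Hkappa}, closing the induction.

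The main obstacle is not the analytical estimation but the combinatorial bookkeeping behind the structural verification. One must check, across both Cases (i) and (ii) of Definition \ref{reference:trajectories} and separately for the IVPs based at $i$ and at $\ell$, that $\chi_\kappa^{[\cdot]}$ solves an ODE of the expected form at the current depth, that every neighbor $j(\kappa)_r$ on the right-hand side is a well-defined trajectory (either evolving or constant at a reference point), and that Definition \ref{consistent:configurations} applies to $\kappa$ so that the initial discrepancy vanishes. All three reductions are handled by the inclusions in Lemma \ref{lemma:mneighbor:contains:mmin1}, while the potentially delicate border case where some $\chi_\kappa^{[i]}$ with $\kappa \in \N_i^m$ is frozen but $\chi_\kappa^{[\ell]}$ still evolves only appears at depth $p = 1$ and is absorbed into the appeal to Lemma \ref{lemma:reftraj:worst:case:dev}, whose $Mt$ bound is insensitive to this asymmetry.
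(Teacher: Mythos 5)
Your proposal is correct and follows essentially the same route as the paper: the identical strengthened induction hypothesis over the layers $\bar{\N}_{\ell}^{m-p}$, the base case supplied by Lemma \ref{lemma:reftraj:worst:case:dev}, the inclusion $\N_{\kappa}\subset\bar{\N}_{\ell}^{m-p}$ from Lemma \ref{lemma:mneighbor:contains:mmin1}(ii) to cover the neighbors by the inductive hypothesis, and the Gronwall-type integration (Fact I) to produce $H_{p+1}$ from the recursion \eqref{functions:Hkappa}. Your observation that the frozen-versus-evolving asymmetry only needs to be confronted at depth $p=1$, where it is absorbed into the lemma, matches how the paper organizes the case analysis.
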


\begin{proof}
The proof is carried out by induction and is based on the result of Lemma \ref{lemma:reftraj:worst:case:dev}. We will prove the following induction hypothesis:

\noindent \textit{IH.} For each $m'\in\{1,\ldots,m\}$ and $\iota\in\bar{\N}_{\ell}^{m-m'}$ it holds
\begin{equation} \label{induction:hypothesis}
|\chi_{\iota}^{[i]}(t)-\chi_{\iota}^{[\ell]}(t)|\le H_{m'}(t), \forall t\in [0,t^*].
\end{equation}

\noindent Notice that for $m'=m$ the Induction Hypothesis implies the desired \eqref{neighbor:rt:deviation}. Also, by virtue of Lemma \ref{lemma:reftraj:worst:case:dev}, IH is valid for $m'=1$. In order to prove the general step, assume that IH is fulfilled for certain $m'\in\{1,\ldots,m-1\}$ and consider any $\iota\in\bar{\N}_{\ell}^{m-(m'+1)}$. Notice first, that since $m-(m'+1)\le m-2$, both $\mbf{\chi}_{j(\iota)}^{[i]}(\cdot)$ and $\mbf{\chi}_{j(\iota)}^{[\ell]}(\cdot)$  are well defined and the differences $|\chi_{\nu}^{[i]}(\cdot)-\chi_{\nu}^{[\ell]}(\cdot)|$, $\nu\in\N_{j(\iota)}$ of their respective components satisfy \eqref{induction:hypothesis} with $m'$. It then follows that  $|\mbf{\chi}_{j(\iota)}^{[i]}(t)-\mbf{\chi}_{j(\iota)}^{[\ell]}(t)|=\left(\sum_{\nu\in\N_{j(i)}}|\chi_{\nu}^{[i]}(t)-\chi_{\nu}^{[\ell]}(t)|^2\right)^{\frac{1}{2}}\le \sqrt{N_{\iota}}H_{m'}(t)$ for all $t\in(0,t^*)$. Thus, by evaluating   $|\chi_{\iota}^{[i]}(\cdot)-\chi_{\iota}^{[\ell]}(\cdot)|$ as in \eqref{tildexi:difference:integral:ineq} in the Appendix, we obtain that
\begin{equation*} 
|\chi_{\iota}^{[i]}(t)-\chi_{\iota}^{[\ell]}(t)| \le \int_0^tL_1\sqrt{N_{\iota}}H_{m'}(s)ds+\int_0^tL_2|\chi_{\iota}^{[i]}(s)-\chi_{\iota}^{[\ell]}(s)| ds
\end{equation*}

\noindent By exploiting Fact I used in the proof of Lemma \ref{lemma:reftraj:worst:case:dev} in the Appendix, we obtain from \eqref{induction:hypothesis}and \eqref{functions:Hkappa} that
\begin{align*}
|\chi_{\iota}^{[i]}(t)-\chi_{\iota}^{[\ell]}(t)| & \le\int_0^te^{L_2(t-s)}L_1\sqrt{N_{\iota}}H_{m'}(s)ds \\
& \le\int_0^te^{L_2(t-s)}L_1\sqrt{N_{\max}}H_{m'}(s)ds=H_{m'+1}(t),\forall t\in[0,t^*],
\end{align*}

\noindent which establishes the general step of the induction procedure. The proof is now complete.
\end{proof}

\begin{rem}
Explicit formulas for the functions $H_{\kappa}(\cdot)$ in \eqref{functions:Hkappa} are provided in the Appendix. In addition, we provide some linear upper bounds for them in the following section, which are then further utilized for the explicit derivation of acceptable space-time discretizations.
\end{rem}

In order to provide the definition of well posed transitions for the individual agents, we will consider for each agent $i\in\N$ the following system with disturbances:
\begin{equation} \label{system:disturbances}
\dot{x}_i=f_i(x_i,\bf{d}_j)+v_i,
\end{equation}

\noindent where $d_{j_1},\ldots,d_{j_{N_i}}:[0,\infty)\to\Rat{n}$ (also denoted $d_{\ell}$, $\ell\in\N_i$) are continuous functions. Also, before defining the notion of a well posed space-time discretization we provide a class of hybrid feedback laws which are assigned to the free inputs $v_i$ in order to obtain meaningful discrete transitions. For each agent, these control laws are parameterized by the agent's initial conditions and a set of auxiliary parameters belonging to a nonempty subset $W$ of $\Rat{n}$. These parameters, as will be clarified in the next section, are exploited for motion planning. 
In addition, for each agent $i$, the feedback laws in the following definition depend on the selection of the cells where $i$ and its $m$-neighbors belong.

\begin{dfn}\label{control:class}
\noindent Given a cell decomposition $\S=\{S_l\}_{l\in\I}$ of $\Rat{n}$ and a nonempty subset $W$ of $\Rat{n}$, consider an agent $i\in\N$ and an initial cell configuration $\bf{l}_i$ of $i$. For each $x_{i0}\in S_{l_i}$ and $w_i\in W$, consider the mapping  $k_{i,\bf{l}_i}(\cdot,\cdot,\cdot;x_{i0},w_i):[0,\infty)\times \Rat{(N_i+1)n}\to\Rat{n}$, parameterized by $x_{i0}\in S_{l_i}$ and $w_i\in W$. We say that $k_{i,\bf{l}_i}(\cdot)$ satisfies \textit{Property (P)}, if the following conditions are satisfied.

\noindent\textit{(P1)} The mapping $k_{i,\bf{l}_i}(t,x_i,\bf{x}_j;x_{i0},w_i)$ is continuous on $[0,\infty)\times \Rat{(N_i+1)n}\times S_{l_i}\times W$.

\noindent\textit{(P2)} The mapping $k_{i,\bf{l}_i}(t,\cdot,\cdot;x_{i0},w_i)$ is globally Lipschitz continuous on $(x_i,\bf{x}_j)$ (uniformly with respect to $t\in[0,\infty)$, $x_{i0}\in S_{l_i}$ and $w_i\in W$). $\triangleleft$
\end{dfn}

\noindent The next definition characterizes the bounds on the deviation between the reference trajectory of each agent's neighbor and its estimate obtained from the solution of the initial value problem for the specific agent.

\begin{dfn} \label{dfn:nrtdb}
Consider an agent $i\in\N$. We say that a continuous function $\alpha_i:[0,\delta t]\to\RgeO$ satisfies the \textit{neighbor reference trajectory deviation bound}, if for each cell configuration $\bf{l}_i$ of $i$, neighbor $\ell\in\N_i$ of $i$ and cell configuration of $\ell$ consistent with $\bf{l}_i$ it holds:
\begin{equation} \label{alpha:property}
|\chi_{\ell}^{[i]}(t)-\chi_{\ell}^{[\ell]}(t)|\le \alpha_i(t),\forall t\in[0,\delta t]. \;\triangleleft
\end{equation}
\end{dfn}

\noindent We are now in position to formalize our requirement which describes the possibility for an agent to perform a discrete transition, based on the knowledge of its $m$-cell configuration. The corresponding definition below includes certain bounds on the evolution of the agent, which we sharpen, by requiring the reference points $x_{l,G}$, $l\in\I$ of the cell decomposition to satisfy
\begin{equation} \label{reference:points}
|x_{l,G}-x|\le\frac{d_{\max}}{2},\forall x\in S_l,l\in\I.
\end{equation}

\noindent Due to \eqref{dmax:dfn}, the latter is always possible. The definition exploits the auxiliary system with disturbances \eqref{system:disturbances}, which is inspired by the approach adopted in \cite{GaMs12}, where a nonlinear system is modeled by means of a piecewise affine system with disturbances.

\begin{dfn} \label{conditionCC}
Consider a cell decomposition $\S=\{S_l\}_{l\in\I}$ of $\Rat{n}$, a time step $\delta t$, a nonempty subset $W$ of $\Rat{n}$, and a continuous function $\beta:[0,\delta t]\to\RgeO$ satisfying
\begin{equation} \label{beta:property}
\frac{d_{\max}}{2}\le\beta(0); \beta(\delta t)\le v_{\max}\delta t.
\end{equation}

\noindent Also, consider an agent $i\in\N$, a continuous function $\alpha_i:[0,\delta t]\to\RgeO$ satisfying the neighbor reference trajectory deviation bound \eqref{alpha:property},
 an $m$-cell configuration $\bf{l}_i$ of $i$ and the solution of the initial value problem of Definition \ref{reference:trajectories}. Then, given a control law
\begin{equation} \label{feedback:for:i}
v_i=k_{i,\bf{l}_i}(t,x_{i},\bf{x}_j;x_{i0},w_i)
\end{equation}

\noindent as in Definition \ref{control:class}, that satisfies Property (P), a vector $w_i\in W$, and a cell index $l'\in\I$, we say that the \textit{Consistency Condition} is satisfied if the following hold. The set  ${\rm int}(B(\chi_i(\delta t);\beta(\delta t)))\cap S_{l_i'}$ is nonempty, and there exists a point $x_i'\in{\rm int}(B(\chi_i(\delta t);\beta(\delta t)))\cap S_{l_i'}$, such that for each initial condition $x_{i0}\in {\rm int}(S_{l_i})$ and selection of continuous functions $d_{\ell}:\RgeO\to\Rat{n}$, $\ell\in\N_i$ satisfying
\begin{equation} \label{disturbance:bounds}
|d_{\ell}(t)-\chi_{\ell}^{[i]}(t)|\le\alpha_i(t)+\beta(t),\forall t\in[0,\delta t],
\end{equation}

\noindent where $\chi_{\ell}^{[i]}(\cdot)$,  $\ell\in\N_i$ correspond to $i$'s estimates of its neighbors' reference trajectories, the solution $x_i(\cdot)$ of the system with disturbances \eqref{system:disturbances} with $v_i=k_{i,\bf{l}_i}(t,x_{i},\bf{d}_j;x_{i0},w_i)$, satisfies

\begin{equation} \label{xi:consistency:bounds}
|x_i(t)-\chi_i^{[i]}(t)|<\beta(t), \forall t\in[0,\delta t].
\end{equation}

\noindent Furthermore, it holds
\begin{equation} \label{xi:in:finalcell}
x_i(\delta t)=x_i'\in S_{l_i'}
\end{equation}

\noindent and
\begin{equation} \label{controler:consistency}
|k_{i,\bf{l}_i}(t,x_{i}(t),\bf{d}_j(t);x_{i0},w_i)|\le v_{\max},\forall  t\in[0,\delta t]. \quad\triangleleft
\end{equation}
\end{dfn}

Notice, that when the Consistency Condition is satisfied, agent $i$ can be driven to cell $S_{l_i'}$ precisely in time $\delta t$ under the feedback law $k_{i,\bf{l}_i}(\cdot)$ corresponding to the given parameter $w_i$ in the definition. The latter is possible for all disturbances which satisfy \eqref{disturbance:bounds} and capture the possibilities for the evolution of $i$'s neighbors over the time interval $[0,\delta t]$, given the knowledge of the $m$-cell configuration of $i$.

We next provide some intuition behind the Consistency Condition and the particular selection of the neighbor reference trajectory deviation bounds $\alpha_i(\cdot)$ and the function $\beta(\cdot)$ introduced therein, through an example with four agents as illustrated in Fig. \ref{fig:CC}. The agents $i$, $\ell$, $j_1$ and $j_2$ form a path graph, and their neighbor sets are given as $\N_i=\{\ell\}$, $\N_{\ell}=\{j_1\}$,   $\N_{j_1}=\{j_2\}$ and  $\N_{j_2}=\emptyset$, respectively. The degree of decentralization is assumed to be $m=3$ and we focus primarily on agent $i$. Agent $i$ is depicted at the left of the figure together with its reference trajectory $\chi_i(\cdot)=\chi_i^{[i]}(\cdot)$ initiated from the reference point $x_{l_i,G}$. The latter constitutes the center of the ball with minimal radius that encloses the cell $S_{l_i}$ where agent $i$ is contained, as specified by \eqref{reference:points} and depicted with the dashed circle in the figure. The (green) area enclosing agent $i$ is the set $\cup_{t\in [0,\delta t]}B(\chi_i(t);\beta(t))$, namely, the union of all possible positions of the agent whose distance from the reference trajectory at each time $t$ does not exceed the bound $\beta(t)$, or equivalently satisfy  \eqref{xi:consistency:bounds}. This is the restriction that the Consistency Condition imposes on agent $i$, and implicitly through the acceptable disturbances that satisfy \eqref{disturbance:bounds}, to its neighbor $\ell$. 
The latter is depicted through the larger (red) area $\cup_{t\in [0,\delta t]}B(\chi_{\ell}^{[i]}(t);\alpha_i(t)+\beta(t))$, which encloses the reference point $x_{l_{\ell},G}$ of agent $\ell$. The interior (darker red) part of this area is $\cup_{t\in [0,\delta t]}B(\chi_{\ell}^{[i]}(t);\alpha_i(t))$, namely, the union of the points with distance from the (dashed) reference trajectory $\chi_{\ell}^{[i]}(t)$ of $\ell$ as estimated by $i$ is no more than the reference trajectory deviation bound $a_i(t)$ at each $t$. The exterior part depicts the inflation ot the interior one by $\beta(\cdot)$. Thus, when the Consistency Condition is applied to agent $\ell$, namely, when its motion over $[0,\delta t]$ with respect to its own reference trajectory $\chi_{\ell}^{[\ell]}(t)$ is bounded at each $t$ by $\beta(t)$, as depicted with the dotted area in the figure, agent $\ell$ will remain within the larger area enclosing its reference point $x_{l_{\ell},G}$. Therefore, in order to capture the behaviour of agent $\ell$, it is assumed that the disturbances $d_{\ell}(\cdot)$ which model the possible trajectories of $\ell$ satisfy \eqref{disturbance:bounds} and thus belong to the lager (red) area containing the reference point $x_{l_{\ell},G}$. The reason for this hypothesis comes from the fact that the $m$-cell configuration of $i$ allows only for the computation of $\chi_{\ell}^{[i]}(\cdot)$ and not the actual reference trajectory $\chi_{\ell}^{[\ell]}(\cdot)$ of $\ell$.

On the right hand side of the figure we also illustrate the trajectories of the reference solutions  $\chi_{j_1}^{[j_1]}(\cdot)$,  $\chi_{j_1}^{[\ell]}(\cdot)$ and  $\chi_{j_1}^{[i]}(\cdot)$ for agent $j_1$, as evaluated by the agent itself and agents $\ell$ and $i$, respectively. As was the case for agent $\ell$, the reference trajectory $\chi_{j_1}^{[j_1]}(t)$ of $j_1$ and its estimate  $\chi_{j_1}^{[\ell]}(t)$ by $\ell$ do not exceed the reference trajectory deviation bound $a_i(t)$ at each time $t$. However, as shown in the figure, the same does not necessarily hold for the corresponding estimate $\chi_{j_1}^{[i]}(\cdot)$ by $i$, since $\chi_{j_1}^{[j_1]}(\cdot)$ lies outside the dotted area depicting the reference trajectory deviation bound around $\chi_{j_1}^{[i]}(\cdot)$. 
Finally, it is noted that according to the Consistency Condition, agent $i$ can be assigned a feedback law in order to reach the final cell $S_{l_i'}$ from each point inside the initial cell $S_{l_i}$ and for any disturbance satisfying the left hand side of  \eqref{xi:consistency:bounds}. Such a path of agent $i$ and a corresponding disturbance are depicted in the figure. More details on the specific feedback laws that are applied for this purpose are provided in the next section.   

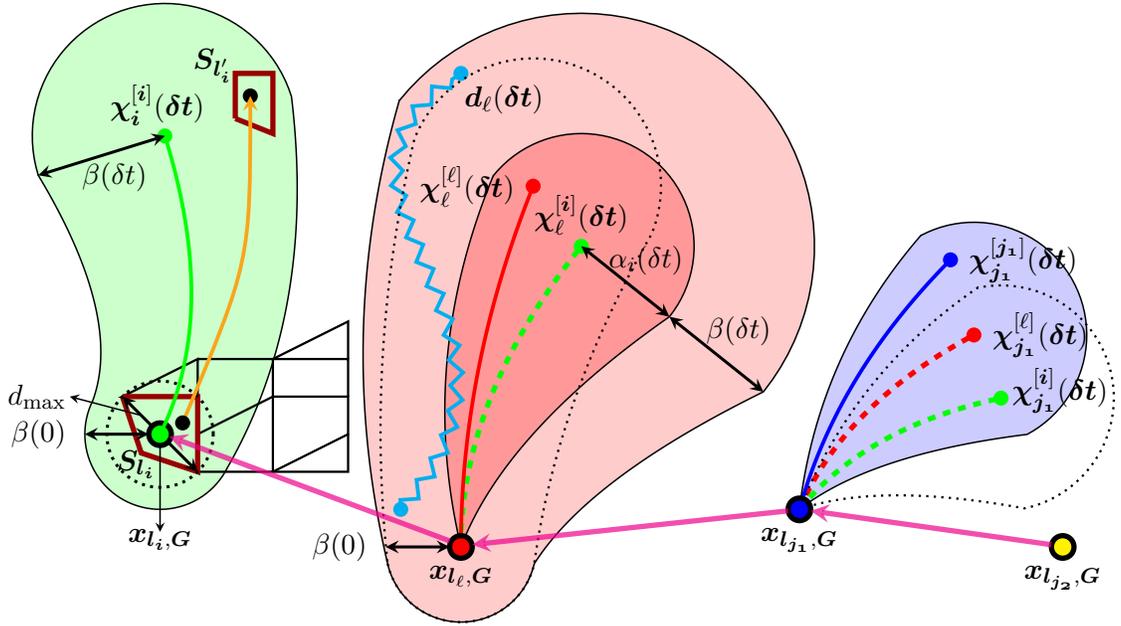
\begin{figure}[H]
\begin{center}

\begin{tikzpicture}[scale=1]


\filldraw[fill=yellow, line width=.07cm]  (12,0) circle (0.15cm);
\node (agent 4) at (12,0) [label=below:{\large $\mbf{x_{l_{j_2},G}}$}] {};

\def\tend{4}
\def\aone{.4}
\def\atwo{.8}
\def\bone{\atwo}
\def\btwo{-\aone}
\def\baratio{.12}

\pgfmathsetmacro\aonetip{\aone*\tend-\baratio*\bone*\tend^2}
\pgfmathsetmacro\atwotip{1.5+\atwo*\tend-\baratio*\btwo*\tend^2}

\pgfmathsetmacro\aonefinaux{\aone-2*\baratio*\bone*\tend}
\pgfmathsetmacro\atwofinaux{\atwo-2*\baratio*\btwo*\tend}

\pgfmathsetmacro\aonefin{\aonefinaux/sqrt(\aonefinaux^2+\atwofinaux^2)}
\pgfmathsetmacro\atwofin{\atwofinaux/sqrt(\aonefinaux^2+\atwofinaux^2)}

\pgfmathsetmacro\coneaux{\atwo}
\pgfmathsetmacro\ctwoaux{-\aone}


\pgfmathsetmacro\cone{\coneaux/sqrt(\coneaux^2-\ctwoaux^2)}
\pgfmathsetmacro\ctwo{\ctwoaux/sqrt(\coneaux^2-\ctwoaux^2)}

\def\done{1.6*\atwofin}
\def\dtwo{-1.6*\aonefin}

\pgfmathsetmacro\phi{atan(\aone/\atwo)}

\draw ({\aonetip},{\atwotip}) -- ({\aonetip+\done},{\atwotip+\dtwo});
\draw (0,1.5) -- (\cone,1.5+\ctwo);


\filldraw[line width=.02cm,  fill=green!20!white] 
plot [domain=180-\phi:360-\phi, variable=\theta,samples=50]({cos(\theta)},{1.5+sin(\theta)}) 
-- 
plot [domain=0:\tend, variable=\t,samples=50]({\aone*\t-\baratio*\bone*\t^2+\t/\tend*\done+(\tend-\t)/\tend*\cone},{1.5+\atwo*\t-\baratio*\btwo*\t^2+\t/\tend*\dtwo+(\tend-\t)/\tend*\ctwo})
--  
plot [domain=0:180, variable=\theta,samples=50]({\aonetip+\done*cos(\theta)-\dtwo*sin(\theta)},{\atwotip+\done*sin(\theta)+\dtwo*cos(\theta)})
--
plot [domain=0:\tend, variable=\t,samples=50]({\aone*(\tend-\t)-\baratio*\bone*(\tend-\t)^2-\t/\tend*\cone-(\tend-\t)/\tend*\done},{1.5+\atwo*(\tend-\t)-\baratio*\btwo*(\tend-\t)^2-\t/\tend*\ctwo-(\tend-\t)/\tend*\dtwo});


\draw [line width=.03cm] (.5,2) -- (-.5,2) -- (.5,2.5) -- (.5,1) -- (1.5,1) -- (1.5,2.5) -- (.5,2.5);
\draw [line width=.03cm] (.5,1.5) -- (1.5,2);
\draw [line width=.03cm] (1.5,1) -- (2.5,1.5) -- (2.5,2) -- (1.5,2);
\draw [line width=.03cm] (2.5,2) -- (2.5,2.5) -- (1.5,2.5);
\draw [line width=.03cm] (1.5,1) -- (2.5,1) -- (2.5,1.5);
\draw [line width=.03cm] (1.5,2.5) -- (2.5,3) --(2.5,2.5);

\draw [line width=.07cm, red!60!black] (-.5,2) -- (-.25,1.25) -- (.5,1) -- (.5,2) -- (-.5,2);

\draw [line width=.04cm,<->,>=stealth] (-.5,2) -- (.5,1); 
\draw [line width=.04cm, dotted] (0,1.5) circle (0.707 cm);

\draw [line width=.07cm, red!60!black] (1,5.7) -- (1,6.3) -- (1.5,6.3)-- (1.5,5.5) -- (1,5.7);

\filldraw[fill=green, line width=.07cm]  (0,1.5) circle (0.15cm);

\draw[line width=.05cm, green] plot [domain=0:\tend, variable=\t,samples=50]({\aone*\t-\baratio*\bone*\t^2},{1.5+\atwo*\t-\baratio*\btwo*\t^2});

\fill[fill=green] (\aonetip,\atwotip) circle (0.1cm);

\draw [line width=.04cm,<->,>=stealth] (-1,1.5) -- (-.15,1.5); 
\node (irrel) at (-1,1.5) [label=left:{\large $\beta(0)$}] {};

\draw [line width=.04cm,<->,>=stealth] (\aonetip,\atwotip) -- (\aonetip-\done,\atwotip-\dtwo); 
\node (irrel) at (\aonetip-.8*\done,\atwotip-.2*\dtwo) [label=below right:{\large $\beta(\delta t)$}] {};

\draw [line width=.02cm,->,>=stealth] (-.25,1.75)  -- (-1.2,2); 
\node (irrel) at (-1,2) [label=left:{\large $d_{\max}$}] {};
\node (irrel) at (\aonetip-.1,\atwotip-.1)  [label=above:{\large $\mbf{\chi_i^{[i]}(\delta t)}$}] {};
\node (irrel) at (-.32,1.1) [label=center:{\large $\mbf{S_{l_i}}$}] {};
\node (irrel) at (.7,6.4) [label=center:{\large $\mbf{S_{l_i'}}$}] {};

\node (agent i) at (0,0.1) [label=center:{\large $\mbf{x_{l_i,G}}$}] {};

\draw [line width=.02cm,->,>=stealth] (0,1.35)  -- (0,.2); 

\fill (1.2,6) circle (0.1cm);
\draw[line width=.05cm, yellow!60!red,->,>=stealth] (.3,1.65) .. controls (1.2,3.8) .. (1.2,6) ;
\fill (.3,1.65) circle (0.1cm);


\def\aone{0}
\def\atwo{1}
\pgfmathsetmacro\bone{\atwo}
\pgfmathsetmacro\btwo{-\aone}
\def\baratio{.1}

\def\aonetip{4+\aone*\tend+\baratio*\bone*\tend^2}
\def\atwotip{\atwo*\tend+\baratio*\btwo*\tend^2}

\pgfmathsetmacro\aonefinaux{\aone+2*\baratio*\bone*\tend}
\pgfmathsetmacro\atwofinaux{\atwo+2*\baratio*\btwo*\tend}

\pgfmathsetmacro\aonefin{\aonefinaux/sqrt(\aonefinaux^2+\atwofinaux^2)}
\pgfmathsetmacro\atwofin{\atwofinaux/sqrt(\aonefinaux^2+\atwofinaux^2)}

\pgfmathsetmacro\bonefin{1.5*\atwofin}
\pgfmathsetmacro\btwofin{-1.5*\aonefin}

\pgfmathsetmacro\coneaux{\atwo+\btwofin/\tend\aone+\bonefin/\tend}
\pgfmathsetmacro\ctwoaux{-(\aone+\bonefin/\tend)}


\pgfmathsetmacro\cone{\coneaux/sqrt(\coneaux^2-\ctwoaux^2)}
\pgfmathsetmacro\ctwo{\ctwoaux/sqrt(\coneaux^2-\ctwoaux^2)}

\pgfmathsetmacro\done{3.1*\atwofin}
\pgfmathsetmacro\dtwo{-3.1*\aonefin}

\pgfmathsetmacro\phi{atan(\ctwoaux/\coneaux)}

\filldraw[line width=.02cm, fill=red!20!white] plot [domain=180-\phi:360+\phi, variable=\theta,samples=50]({4+cos(\theta)},{sin(\theta)}) 
-- 
plot [domain=0:\tend, variable=\t,samples=50]({4+\aone*\t+\baratio*\bone*\t^2+\t/\tend*\done+(\tend-\t)/\tend*\cone},{\atwo*\t+\baratio*\btwo*\t^2+\t/\tend*\dtwo+(\tend-\t)/\tend*\ctwo})
 --  
plot [domain=0:180, variable=\theta,samples=50]({\aonetip+\done*cos(\theta)-\dtwo*sin(\theta)},{\atwotip+\done*sin(\theta)+\dtwo*cos(\theta)})
--
plot [domain=0:\tend, variable=\t,samples=50]({4+\aone*(\tend-\t)+\baratio*\bone*(\tend-\t)^2-\t/\tend*\cone-(\tend-\t)/\tend*\done},{\atwo*(\tend-\t)+\baratio*\btwo*(\tend-\t)^2+\t/\tend*\ctwo-(\tend-\t)/\tend*\dtwo});

\filldraw[line width=.02cm, fill=red!40!white] plot [domain=0:\tend, variable=\t,samples=50]({4+\aone*\t+\baratio*\bone*\t^2+\t/\tend*\bonefin},{\atwo*\t+\baratio*\btwo*\t^2+\t/\tend*\btwofin}) --  plot [domain=0:180, variable=\theta,samples=50]({\aonetip+\bonefin*cos(\theta)-\btwofin*sin(\theta)},{\atwotip+\bonefin*sin(\theta)+\btwofin*cos(\theta)}) -- plot [domain=0:\tend, variable=\t,samples=50]({4+\aone*(\tend-\t)+\baratio*\bone*(\tend-\t)^2-(\tend-\t)/\tend*\bonefin},{\atwo*(\tend-\t)+\baratio*\btwo*(\tend-\t)^2-(\tend-\t)/\tend*\btwofin}); 

\draw[line width=.07cm, green,dashed] plot [domain=0:\tend, variable=\t,samples=50]({4+\aone*\t+\baratio*\bone*\t^2},{\atwo*\t+\baratio*\btwo*\t^2});

\fill[fill=green] (\aonetip,\atwotip) circle (0.1cm);

\draw[line width=.04cm,<->,>=stealth] (\aonetip,\atwotip) -- ({\aonetip+\bonefin},{\atwotip+\btwofin});

\draw[line width=.04cm,<->,>=stealth] ({\aonetip+\bonefin},{\atwotip+\btwofin}) -- ({\aonetip+\done},{\atwotip+\dtwo});

\node (irrel) at ({\aonetip+.1*\bonefin},{\atwotip+.2*\btwofin}) [label=right:{\large $\alpha_i(\delta t)$}] {};

\node (irrel) at ({\aonetip+\bonefin+.1*\done)},{\atwotip+\btwofin+.1*\dtwo}) [label=right:{\large $\beta(\delta t)$}] {};


\draw[line width=.04cm,<->,>=stealth] (3,0) -- (3.85,0);
\node (irrel) at (3,0)  [label=left:{\large $\beta(0)$}] {};

\node (irrel) at (\aonetip,{\atwotip-.1})  [label=above:{\large $\mbf{\chi_{\ell}^{[i]}(\delta t)}$}] {};

\draw[line width=.05cm, decoration={zigzag}, decorate,  cyan] (3.2,.5) .. controls (5.5,3) and (1.5,4.7) .. (4,6.3);
\fill[fill=cyan]  (3.2,.5) circle (0.1cm);
\fill[fill=cyan]  (4,6.3) circle (0.1cm);

\node (irrel) at (3.8,6.4) [label=below right:{\large $\mbf{d_{\ell}(\delta t)}$}] {};


\def\aone{0}
\def\atwo{1.2}
\pgfmathsetmacro\bone{\atwo}
\pgfmathsetmacro\btwo{-\aone}
\def\baratio{.05}

\def\aonetip{4+\aone*\tend+\baratio*\bone*\tend^2}
\def\atwotip{\atwo*\tend+\baratio*\btwo*\tend^2}

\pgfmathsetmacro\aonefinaux{\aone+2*\baratio*\bone*\tend}
\pgfmathsetmacro\atwofinaux{\atwo+2*\baratio*\btwo*\tend}

\pgfmathsetmacro\aonefin{\aonefinaux/sqrt(\aonefinaux^2+\atwofinaux^2)}
\pgfmathsetmacro\atwofin{\atwofinaux/sqrt(\aonefinaux^2+\atwofinaux^2)}


\pgfmathsetmacro\cone{\coneaux/sqrt(\coneaux^2-\ctwoaux^2)}
\pgfmathsetmacro\ctwo{\ctwoaux/sqrt(\coneaux^2-\ctwoaux^2)}

\pgfmathsetmacro\done{1.7*\atwofin}
\pgfmathsetmacro\dtwo{-1.7*\aonefin}

\pgfmathsetmacro\phi{atan(\ctwoaux/\coneaux)}

\draw[line width=.03cm, dotted] plot [domain=180-\phi:360+\phi, variable=\theta,samples=50]({4+cos(\theta)},{sin(\theta)}) 
-- 
plot [domain=0:\tend, variable=\t,samples=50]({4+\aone*\t+\baratio*\bone*\t^2+\t/\tend*\done+(\tend-\t)/\tend*\cone},{\atwo*\t+\baratio*\btwo*\t^2+\t/\tend*\dtwo+(\tend-\t)/\tend*\ctwo})
 --  
plot [domain=0:180, variable=\theta,samples=50]({\aonetip+\done*cos(\theta)-\dtwo*sin(\theta)},{\atwotip+\done*sin(\theta)+\dtwo*cos(\theta)})
--
plot [domain=0:\tend, variable=\t,samples=50]({4+\aone*(\tend-\t)+\baratio*\bone*(\tend-\t)^2-\t/\tend*\cone-(\tend-\t)/\tend*\done},{\atwo*(\tend-\t)+\baratio*\btwo*(\tend-\t)^2+\t/\tend*\ctwo-(\tend-\t)/\tend*\dtwo});

\draw[line width=.05cm, red] plot [domain=0:\tend, variable=\t,samples=50]({4+\aone*\t+\baratio*\bone*\t^2},{\atwo*\t+\baratio*\btwo*\t^2});

\filldraw[fill=red, line width=.07cm]  (4,0) circle (0.15cm);
\node (agent 2) at (4,0) [label=below:{\large $\mbf{x_{l_{\ell},G}}$}] {};

\fill[fill=red] (\aonetip,\atwotip) circle (0.1cm);

\node (irrel) at (\aonetip,\atwotip)  [label=left:{\large $\mbf{\chi_{\ell}^{[\ell]}(\delta t)}$}] {};


\def\aone{.3}
\def\atwo{.7}
\pgfmathsetmacro\bone{\atwo}
\pgfmathsetmacro\btwo{-\aone}
\def\baratio{.1}

\def\aonetip{8.5+\aone*\tend+\baratio*\bone*\tend^2}
\def\atwotip{.5+\atwo*\tend+\baratio*\btwo*\tend^2}

\pgfmathsetmacro\aonefinaux{\aone+2*\baratio*\bone*\tend}
\pgfmathsetmacro\atwofinaux{\atwo+2*\baratio*\btwo*\tend}

\pgfmathsetmacro\aonefin{\aonefinaux/sqrt(\aonefinaux^2+\atwofinaux^2)}
\pgfmathsetmacro\atwofin{\atwofinaux/sqrt(\aonefinaux^2+\atwofinaux^2)}

\pgfmathsetmacro\bonefin{1.5*\atwofin}
\pgfmathsetmacro\btwofin{-1.5*\aonefin}

\pgfmathsetmacro\phi{atan(\ctwoaux/\coneaux)}

\filldraw[line width=.02cm, fill=blue!20!white] plot [domain=0:\tend, variable=\t,samples=50]({8.5+\aone*\t+\baratio*\bone*\t^2+\t/\tend*\bonefin},{.5+\atwo*\t+\baratio*\btwo*\t^2+\t/\tend*\btwofin}) --  plot [domain=0:180, variable=\theta,samples=50]({\aonetip+\bonefin*cos(\theta)-\btwofin*sin(\theta)},{\atwotip+\bonefin*sin(\theta)+\btwofin*cos(\theta)}) -- plot [domain=0:\tend, variable=\t,samples=50]({8.5+\aone*(\tend-\t)+\baratio*\bone*(\tend-\t)^2-(\tend-\t)/\tend*\bonefin},{.5+\atwo*(\tend-\t)+\baratio*\btwo*(\tend-\t)^2-(\tend-\t)/\tend*\btwofin}); 

\draw[line width=.07cm, red,dashed] plot [domain=0:\tend, variable=\t,samples=50]({8.5+\aone*\t+\baratio*\bone*\t^2},{.5+\atwo*\t+\baratio*\btwo*\t^2});

\fill[fill=red] (\aonetip,\atwotip) circle (0.1cm);

\node (irrel) at (\aonetip,\atwotip)  [label=right:{\large $\mbf{\chi_{j_1}^{[\ell]}(\delta t)}$}] {};


\def\aone{.45}
\def\atwo{.55}
\pgfmathsetmacro\bone{\atwo}
\pgfmathsetmacro\btwo{-\aone}
\def\baratio{.1}

\def\aonetip{8.5+\aone*\tend+\baratio*\bone*\tend^2}
\def\atwotip{.5+\atwo*\tend+\baratio*\btwo*\tend^2}

\pgfmathsetmacro\aonefinaux{\aone+2*\baratio*\bone*\tend}
\pgfmathsetmacro\atwofinaux{\atwo+2*\baratio*\btwo*\tend}

\pgfmathsetmacro\aonefin{\aonefinaux/sqrt(\aonefinaux^2+\atwofinaux^2)}
\pgfmathsetmacro\atwofin{\atwofinaux/sqrt(\aonefinaux^2+\atwofinaux^2)}

\pgfmathsetmacro\bonefin{1.5*\atwofin}
\pgfmathsetmacro\btwofin{-1.5*\aonefin}

\pgfmathsetmacro\phi{atan(\ctwoaux/\coneaux)}

\draw[line width=.03cm,dotted] plot [domain=0:\tend, variable=\t,samples=50]({8.5+\aone*\t+\baratio*\bone*\t^2+\t/\tend*\bonefin},{.5+\atwo*\t+\baratio*\btwo*\t^2+\t/\tend*\btwofin}) --  plot [domain=0:180, variable=\theta,samples=50]({\aonetip+\bonefin*cos(\theta)-\btwofin*sin(\theta)},{\atwotip+\bonefin*sin(\theta)+\btwofin*cos(\theta)}) -- plot [domain=0:\tend, variable=\t,samples=50]({8.5+\aone*(\tend-\t)+\baratio*\bone*(\tend-\t)^2-(\tend-\t)/\tend*\bonefin},{.5+\atwo*(\tend-\t)+\baratio*\btwo*(\tend-\t)^2-(\tend-\t)/\tend*\btwofin}); 

\draw[line width=.07cm, green,dashed] plot [domain=0:\tend, variable=\t,samples=50]({8.5+\aone*\t+\baratio*\bone*\t^2},{.5+\atwo*\t+\baratio*\btwo*\t^2});

\fill[fill=green] (\aonetip,\atwotip) circle (0.1cm);

\node (irrel) at (\aonetip-.1,\atwotip+.1)  [label=right:{\large $\mbf{\chi_{j_1}^{[i]}(\delta t)}$}] {};


\def\aone{.25}
\def\atwo{.9}
\pgfmathsetmacro\bone{\atwo}
\pgfmathsetmacro\btwo{-\aone}
\def\baratio{.07}

\def\aonetip{8.5+\aone*\tend+\baratio*\bone*\tend^2}
\def\atwotip{.5+\atwo*\tend+\baratio*\btwo*\tend^2}

\draw[line width=.05cm, blue] plot [domain=0:\tend, variable=\t,samples=50]({8.5+\aone*\t+\baratio*\bone*\t^2},{.5+\atwo*\t+\baratio*\btwo*\t^2});

\filldraw[fill=blue, line width=.07cm]  (8.5,.5) circle (0.15cm);
\node (agent 3) at (8.5,.5) [label=below:{\large $\mbf{x_{l_{j_1},G}}$}] {};

\fill[fill=blue] (\aonetip,\atwotip) circle (0.1cm);

\draw[color=magenta,line width=.07cm,->,>=stealth,opacity=.7 ] (3.85,0.08)  -- (0.12,1.47);

\draw[color=magenta,line width=.07cm,->,>=stealth,opacity=.7 ] (8.35,.48) -- (4.15,0.03);

\draw[color=magenta,line width=.07cm,->,>=stealth,opacity=.7 ] (11.85,0.03) --(8.65,.48);

\node (irrel) at (\aonetip,\atwotip)  [label=right:{\large $\mbf{\chi_{j_1}^{[j_1]}(\delta t)}$}] {};
\end{tikzpicture}
\end{center}
\caption{Illustration of the estimates on agent's $i$ reachable set over the interval $[0,\delta t]$ and the reference trajectory deviation bound.}  \label{fig:CC}
\end{figure}

We next provide the definition of a well posed space-time discretization. This definition formalizes the acceptable space and time discretizations through the possibility to assign a feedback law to each agent, in order to enable a meaningful transition from an initial to a final cell in accordance to the Consistency Condition above.

\begin{dfn}\label{well:posed:discretization}
Consider a cell decomposition $\S=\{S_l\}_{l\in\I}$ of $\Rat{n}$, a time step $\delta t$ and a nonempty subset $W$ of $\Rat{n}$.

\noindent \textit{(i)} Given a continuous function $\beta:[0,\delta t]\to\RgeO$ that satisfies \eqref{beta:property}, an agent $i\in\N$, a continuous function $\alpha_i:[0,\delta t]\to\RgeO$ satisfying \eqref{alpha:property}, an initial $m$-cell configuration $\bf{l}_i$ of $i$, and a cell index $l_i'\in\I$, we say that \textit{the transition} $l_i\overset{\bf{l}_i}{\longrightarrow}l_i'$ \textit{is well posed with respect to the space-time discretization $\S-\delta t$}, if there exist a feedback law  $v_i=k_{i,\bf{l}_i}(\cdot,\cdot,\cdot;x_{i0},w_i)$ as in Definition \ref{control:class} that satisfies Property (P), and a vector $w_i\in W$, such that the Consistency Condition of Definition \ref{conditionCC} is fulfilled.

\noindent \textit{(ii)} We say that the \textit{space-time discretization} $\S-\delta t$ \textit{is well posed}, if there exists a continuous function $\beta:[0,\delta t]\to\RgeO$ that satisfies \eqref{beta:property}, such that for each agent $i\in\N$ there exists a continuous function $\alpha_i:[0,\delta t]\to\RgeO$ satisfying \eqref{alpha:property}, in a way that for each cell configuration $\bf{l}_i$ of $i$, there exists a cell index $l_i'\in\I$ such that the transition $l_i\overset{\bf{l}_i}{\longrightarrow}l_i'$ is well posed with respect to $\S-\delta t$.
\end{dfn}

Given a space-time discretization $\S-\delta t$ and based on Definition \ref{well:posed:discretization}(i), it is now possible to provide an exact definition of the discrete transition system which serves as an abstract model for the behaviour of each agent.

\begin{dfn} \label{individual:ts}
For each agent $i$, its individual transition system $TS_i:=(Q_i,Act_i,\longrightarrow_i)$ is defined as follows:

\noindent \textbullet\; $Q_i:=\I$ (the indices of the cell decomposition)

\noindent \textbullet\; $Act_i:=\I^{\bar{N}_i^m}$ (the set of all $m$-cell configurations of $i$)

\noindent \textbullet\; $l_i\overset{\bf{l}_i}{\longrightarrow_i}l_i'$ iff $l_i\overset{\bf{l}_i}{\longrightarrow}l_i'$ is well posed for each
$l_i,l_i'\in Q_i$ and $\bf{l}_i=(l_i,l_{j_1^1},\ldots,l_{j_{N_i}^1},$ $l_{j_1^2},\ldots,l_{j_{N_i^2}^2},$ $\ldots,l_{j_1^m},\ldots,l_{j_{N_i^m}^m})\in\I^{\bar{N}_i^m}$. $\triangleleft$
\end{dfn}

\begin{rem} \label{remark:post:nonempty}
Given a well posed space-time discretization $\S-\delta t$ and an initial cell configuration $\bf{l}=(l_{1},\ldots,l_{N})\in\mathcal{I}^{N}$, it follows from Definitions \ref{well:posed:discretization} and  \ref{individual:ts} that for each agent $i\in\N$ it holds ${\rm Post}_i(l_{i};{\rm pr}_{i}(\bf{l}))\ne\emptyset$ (${\rm Post}_i(\cdot)$ refers to the transition system $TS_i$ of each agent).
\end{rem}

According to Definition \ref{well:posed:discretization}, a well posed space-time discretization requires the existence of a well posed transition for each agent $i$ and $m$-cell configuration of $i$, and the latter reduces to the selection of an appropriate feedback controller for $i$, which also satisfies Property (P) and guarantees that the auxiliary system with disturbances \eqref{system:disturbances} satisfies the Consistency Condition. We next show, that given an initial cell configuration and a well posed transition for each agent, it is possible to choose a local feedback law for each agent, so that the resulting closed-loop system will guarantee all these well posed transitions (for all possible initial conditions in the cell configuration). At the same time, it will follow that the magnitude of the agents' hybrid control laws evaluated along the solutions of the system does not exceed the maximum allowed magnitude $v_{\max}$ of the free inputs on $[0,\delta t]$, and hence, establishes consistency with \eqref{input:bound}.

\begin{prop}\label{discrete:transitions:result}
Consider system \eqref{general:feedback:law}, let $\bf{l}=(l_{1},\ldots,l_{N})\in\mathcal{I}^{N}$ be an initial cell configuration and assume that the space-time discretization $S-\delta t$ is well posed, which according to Remark \ref{remark:post:nonempty} implies that for all $i\in\N$ it holds that ${\rm Post}_i(l_{i};{\rm pr}_{i}(\bf{l}))\ne\emptyset$. Then, for every final cell configuration $\bf{l}'=(l_1',\ldots,l_N')\in{\rm Post}_1(l_1;{\rm pr}_1(\bf{l}))\times\cdots\times{\rm Post}_N(l_N;{\rm pr}_N(\bf{l}))$, there exist feedback laws
\begin{equation} \label{feedback:for:all}
v_i=k_{i,{\rm pr}_i(\bf{l})}(t,x_{i},\bf{x}_j;x_{i0},w_i),i\in\N,
\end{equation}

\noindent satisfying Property (P), $w_1,\ldots,w_N\in W$ and $x_1'\in S_{l_1},\ldots,x_N'\in S_{l_N}$, such that  the solution of the closed-loop system \eqref{general:feedback:law}, \eqref{feedback:for:all} (with $v_{\kappa}=k_{\kappa,{\rm pr}_{\kappa}(\bf{l})}$, $\kappa\in\N$) is well defined on $[0,\delta t]$, and for each $i\in\N$, its $i$-th component satisfies
\begin{equation} \label{contoler:compatibility}
x_{i}(\delta t,x(0))=x_i'\in S_{l_i'}, \forall x(0)\in \Rat{Nn}: x_{\kappa}(0)=x_{\kappa 0}\in S_{l_{\kappa}},\kappa\in\N.
\end{equation}

\noindent Furthermore, it follows that each control law $k_{i,\bf{l}_i}$ evaluated along the corresponding solution of the system satisfies
\begin{equation} \label{feedback:consistency}
|k_{i,{\rm pr}_i(\bf{l})}(t,x_{i}(t),\bf{x}_j(t);x_{i0},w_i)|\le v_{\max},\forall t\in[0,\delta t], i\in\N,
\end{equation}

\noindent which provides the desired consistency with the design requirement \eqref{input:bound} on the $v_i$'s.
\end{prop}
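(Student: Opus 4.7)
My plan is to invoke the Consistency Condition for each agent individually and then glue the per-agent controllers into the joint closed-loop system, using a bootstrap argument to ensure that along the resulting closed-loop trajectory the hypotheses of the Consistency Condition remain valid for every agent simultaneously. First I will apply Definition~\ref{well:posed:discretization}(i) to each transition $l_i \xrightarrow{{\rm pr}_i(\bf{l})} l_i'$ to extract a feedback law $k_{i,{\rm pr}_i(\bf{l})}$ satisfying Property~(P), a parameter $w_i \in W$, and a target point $x_i' \in S_{l_i'}$ for which the Consistency Condition holds. Substituting these feedbacks into \eqref{general:feedback:law} yields a closed-loop system whose right-hand side is continuous in $t$ and Lipschitz in the state variables (by Property~(P) and \eqref{dynamics:bound1}--\eqref{dynamics:bound2}), so standard ODE theory produces a unique solution on some maximal forward interval, for any initial condition with $x_\kappa(0) \in S_{l_\kappa}$.

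The main obstacle is the circular dependence between agents: the Consistency Condition for agent $i$ is formulated in terms of the auxiliary system \eqref{system:disturbances} driven by disturbances $d_\ell(\cdot)$ bounded with respect to $\chi_\ell^{[i]}(\cdot)$, whereas in the closed loop the role of $d_\ell$ is played by $x_\ell(\cdot)$, which is itself determined by agent $\ell$'s feedback. To close this loop I will introduce
\[
T^* := \sup\!\Bigl\{\, t \in [0,\delta t] \;:\; |x_\ell(s)-\chi_\ell^{[\ell]}(s)| \le \beta(s) \text{ for all } \ell \in \N,\ s \in [0,t] \,\Bigr\},
\]
noting that at $s=0$ the defining bound reduces to $|x_{\ell 0} - x_{l_\ell,G}| \le d_{\max}/2 \le \beta(0)$ by \eqref{reference:points} and \eqref{beta:property}. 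On $[0,T^*]$, for every $i \in \N$ and $\ell \in \N_i$, the triangle inequality combined with the deviation bound \eqref{alpha:property} gives
\[
|x_\ell(s)-\chi_\ell^{[i]}(s)| \le |x_\ell(s)-\chi_\ell^{[\ell]}(s)| + |\chi_\ell^{[\ell]}(s)-\chi_\ell^{[i]}(s)| \le \beta(s)+\alpha_i(s),
\]
so the identification $d_\ell(\cdot) := x_\ell(\cdot)$ satisfies \eqref{disturbance:bounds}. By uniqueness of solutions for the system with disturbances driven by $v_i = k_{i,{\rm pr}_i(\bf{l})}(t,x_i,\bf{d}_j;x_{i0},w_i)$, this auxiliary trajectory coincides with the closed-loop component $x_i(\cdot)$, and the Consistency Condition then yields the strict bound $|x_i(s)-\chi_i^{[i]}(s)|<\beta(s)$ for initial conditions $x_{i0} \in {\rm int}(S_{l_i})$; continuity in $x_{i0}$ (Property P1) extends the inequality with $\le$ to boundary initial conditions.

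To conclude I will argue $T^* = \delta t$ by a standard maximality/continuity argument: otherwise the bound established above would propagate to a right-neighborhood of $T^*$, contradicting its definition, and in particular the solution cannot blow up on $[0,\delta t]$. Evaluating the Consistency Condition at $t = \delta t$ then gives $x_i(\delta t) = x_i' \in S_{l_i'}$, which is exactly \eqref{contoler:compatibility}, and \eqref{controler:consistency} evaluated along the identified disturbance $d_\ell(\cdot) = x_\ell(\cdot)$ yields \eqref{feedback:consistency}. I expect the bookkeeping around strict versus non-strict inequalities at $t=0$ and at cell boundaries to be delicate but routine; the substantive content is the simultaneous verification across all agents that the triangle inequality propagates in time, which is precisely what makes the per-agent Consistency Condition composable into the global statement.
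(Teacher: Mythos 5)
Your proposal is correct and follows essentially the same route as the paper: extract the per-agent controllers from well-posedness, identify the closed-loop components with the disturbance-system solutions via uniqueness, and run a continuation/bootstrap argument (your $T^*$ plays the role of the paper's $\tau$, with the triangle inequality and the bound \eqref{alpha:property} verifying \eqref{disturbance:bounds}, and the strict inequality from the Consistency Condition preventing the escape time from being interior). The treatment of boundary initial conditions by continuity in $x_{i0}$ via Property (P1) also matches the paper's final approximation step.
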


\begin{proof}
Indeed, consider a final cell configuration $\bf{l}'=(l_1',\ldots,l_N')$ as in the statement of the proposition. By the definitions of the operators ${\rm Post}_i(\cdot)$, $i\in\N$ and the transition relation of each corresponding agent's individual transition system, it follows that there exist continuous functions $\beta:[0,\delta t]\to\RgeO$ satisfying \eqref{beta:property} and $\alpha_i:[0,\delta t]\to\RgeO$, $i\in\N$ satisfying  \eqref{alpha:property}, such that each transition $l_i\overset{{\rm pr}_i(\bf{l})}{\longrightarrow}l_i'$ is well posed in the sense of Definition \ref{well:posed:discretization}(i). Hence, we can pick for each agent $i\in\N$ a control law $k_{i,{\rm pr}_i(\bf{l})}(\cdot)$ that satisfies Property (P) and vectors $w_i\in W$, $x_i'\in S_{l_i}'$, such that the requirements of the Consistency Condition are fulfilled.

After the selection of $k_{i,{\rm pr}_i(\bf{l})}(\cdot)$, $w_i$ and $x_i'$, we pick for each agent $i$ an initial condition $x_{i0}\in{\rm int}(S_{l_i})$. Notice, that by virtue of Property (P2) of the control laws, the solution of the closed loop system is defined for all $t\ge 0$. Furthermore, by recalling that for each $i\in\N$ it holds $x_{i0}\in{\rm int}(S_{l_i})$,  we obtain from \eqref{reference:points} that  $|x_{i0}-x_{l_i,G}|<\frac{d_{\max}}{2}$ for all $i\in\N$. Hence, by continuity of the solution of the closed-loop system \eqref{general:feedback:law}, \eqref{feedback:for:all} we deduce from \eqref{beta:property} that there exists $\delta\in(0,\delta t]$, such that for all $i\in\N$ it holds
\begin{equation} \label{xi:bound:uptodelta}
|x_{i}(t)-\chi_{i}^{[i]}(t)|< \beta(t),\forall t\in[0,\delta],
\end{equation}

\noindent where $x_i(\cdot)$ is the $i$-th component of the solution and $\chi_i^{[i]}(\cdot)$ is the reference trajectory of $i$ corresponding to the $m$-cell configuration ${\rm pr}_i(\bf{l})$ of $i$, with initial condition $\chi_i^{[i]}(0)=x_{l_i,G}$. We claim that for each $i\in\N$, \eqref{xi:bound:uptodelta} holds with $\delta=\delta t$. Indeed, suppose on the contrary that there exist an agent $\ell\in\N$ and a time $T\in(0,\delta t]$, such that
\begin{equation} \label{xi:gebeta}
|x_{\ell}(T)-\chi_{\ell}^{[\ell]}(T)|\ge \beta(T).
\end{equation}

\noindent Next, we define
\begin{equation} \label{time:tau:consistency:result}
\tau:=\sup\{\bar{t}\in(0,\delta t]:|x_i(t)-\chi_i^{[i]}(t)|<\beta(t),\forall t\in[0,\bar{t}],i\in\N\}
\end{equation}

\noindent Then, it follows from \eqref{xi:bound:uptodelta}, \eqref{xi:gebeta} and \eqref{time:tau:consistency:result}, that $\tau$ is well defined and satisfies
$$
0<\tau\le\delta t,
$$

\noindent and that there exists $\ell\in\N$ such that
\begin{equation} \label{xiell:attau}
|x_{\ell}(\tau)-\chi_{\ell}^{[\ell]}(\tau)|=\beta(\tau).
\end{equation}

\noindent Next, notice that by the definition of $\tau$, it holds
\begin{equation} \label{xiell:betabound}
|x_{\kappa}(t)-\chi_{\kappa}^{[\kappa]}(t)|\le\beta(t),\forall t\in[0,\tau],\kappa\in\N_{\ell}.
\end{equation}

\noindent Also, since for each $\kappa\in\N_{\ell}$ the $m$-cell configuration ${\rm pr}_{\kappa}(\bf{l})$ of $\kappa$ is consistent with the configuration  ${\rm pr}_{\ell}(\bf{l})$ of $\ell$, it follows from \eqref{alpha:property} and the fact that $\tau<\delta t$, that
\begin{equation} \label{xiell:alphabound}
|\chi_{\kappa}^{[\ell]}(t)-\chi_{\kappa}^{[\kappa]}(t)|\le\alpha_{\ell}(t),\forall t\in[0,\tau],\kappa\in\N_{\ell}.
\end{equation}

\noindent Hence, we obtain that
\begin{equation} \label{xiell:alphabound}
|x_{\kappa}(t)-\chi_{\kappa}^{[\ell]}(t)|\le\beta(t)+\alpha_{\ell}(t),\forall t\in[0,\tau],\kappa\in\N_{\ell}.
\end{equation}

\noindent By setting $d_{\kappa}(t):=x_{\kappa}(t)$, $t\ge 0$, $\kappa\in\N_{\ell}$, it follows from standard uniqueness results from ODE theory, that $x_{\ell}(\cdot)$ is also the solution of the system with disturbances \eqref{system:disturbances}, with $i=\ell$, $v_i=k_{i,\bf{l}_i}(t,x_{i},\bf{d}_j;x_{i0},w_i)$, and initial condition $x_{\ell0}\in{\rm int}(S_{l_{\ell}})$. Thus, by exploiting causality of \eqref{system:disturbances} with respect to the disturbances and the fact that due to \eqref{xiell:alphabound} the disturbances satisfy \eqref{disturbance:bounds} for all $t\in[0,\tau)$, it follows from  \eqref{xi:consistency:bounds} that $|x_{\ell}(\tau)-\chi_{\ell}^{[\ell]}(\tau)|<\beta(\tau)$, which contradicts \eqref{xiell:attau}. Hence, we conclude that \eqref{xi:bound:uptodelta} holds for $\delta=\delta t$.

Next, by using the same arguments as above, we can deduce that for each agent $i$, the $i$-th component of the solution of the closed loop system \eqref{general:feedback:law}, \eqref{feedback:for:all}, is the same as the solution of system \eqref{system:disturbances} for $i$, with disturbances $d_{\kappa}(\cdot)$, $\kappa\in\N_i$ being the components $x_{\kappa}(\cdot)$, $\kappa\in\N_i$ of the solution corresponding to $i$'s neighbors. Furthermore, it follows that the disturbances satisfy \eqref{disturbance:bounds}. Hence, from the Consistency Condition, and the fact that the components of the solution of the closed loop system and the corresponding solutions of the systems with disturbances are identical, we obtain that  \eqref{contoler:compatibility} and \eqref{feedback:consistency} are satisfied.

For the general case where $x_{i0}\in S_{l_i}$ (not necessarily the interior of $S_{l_i}$) for all $i\in\N$, we can exploit for each $i$ continuity of the $i$-th component of the solution $x_i(\cdot)$ with respect to initial conditions and parameters, in order to prove validity of the proposition. In particular, $x_i(t,x_0;x_0,w)$ depends by virtue of Property (P1) continuously on $x_0$ (both as the initial condition and as a parameter) and on the parameters $w=(w_1,\ldots,w_N)\in W^N$. Thus, by selecting a sequence $\{x_{0,\nu}\}_{\nu\in\mathbb{N}}$ with $x_{0,\nu}\to x_0$ and $x_{0i,\nu}\in{\rm int}(S_{l_i})$, $\forall i\in\N,\nu\in\mathbb{N}$, it follows that the components $x_i(\cdot,x_{0,\nu};x_{0,\nu},w)$ of the solution of the closed loop system \eqref{general:feedback:law}, \eqref{feedback:for:all} satisfy
\begin{equation} \label{continuity:property}
x_i(t,x_{0,\nu};x_{0,\nu},w)\to x_i(t,x_0;x_0,w), \forall t\in[0,\delta t],i\in\N,
\end{equation}

\noindent with the vector $w$ of the parameters $w_i$, $i\in\N$ as selected at the beginning of the proof. In addition, since $x_{0i,\nu}\in{\rm int}(S_{l_i})$, $\forall i\in\N$, $\nu\in\mathbb{N}$, it follows from the first part of the proof that the functions $x_i(\cdot,x_{0,\nu};x_{0,\nu},w)$ satisfy \eqref{contoler:compatibility} and \eqref{feedback:consistency}, namely, it holds
$$
x_{i}(\delta t,x_{0,\nu};x_{0,\nu},w)=x_i'\in S_{l_i'}, \forall i\in\N,
$$

\noindent and
$$
|k_{i,{\rm pr}_i(\bf{l})}(t,x_{i}(t,x_{0,\nu};x_{0,\nu},w),\bf{x}_j(t,x_{0,\nu};x_{0,\nu},w);x_{i0},w_i)|\le v_{\max},\forall t\in[0,\delta t], i\in\N.
$$

\noindent Hence, we deduce by virtue of \eqref{continuity:property} that $x_i(\cdot,x_{0};x_{0},w)$, $i\in\N$ also satisfy  \eqref{contoler:compatibility} and \eqref{feedback:consistency}, which establishes the desired result for the general case. The proof is now complete.
\end{proof}

\section{Design of the Hybrid Control Laws}


Consider again system  \eqref{general:feedback:law}. We want to determine sufficient conditions which guarantee that the space-time discretization $\mathcal{S}-\delta t$ is well posed. 
According to Definition \ref{well:posed:discretization}, establishment of a well posed discretization is based on the design of appropriate feedback laws which guarantee well posed transitions for all agents and their possible cell configurations. We postpone the derivation of acceptable $d_{\max}$ and $\delta t$ that fulfill well posedness for the next section and proceed by defining the control laws that are exploited in order to derive well posed discretizations. Consider  a cell decomposition $\mathcal{S}=\{S_{l}\}_{l\in\mathcal{I}}$ of $\Rat{n}$, a time step $\delta t$ and a selection of a reference point $x_{l,G}$ for each cell, satisfying \eqref{reference:points}. For each agent $i$ and $m$-cell configuration $\bf{l}_i$ of $i$, we define the family of feedback laws  $k_{i,\bf{l}_i}:[0,\infty)\times \Rat{(N_i+1)n}\to\Rat{n}$ parameterized by $x_{i0}\in S_{l_i}$ and $w_i\in W$ as
\begin{equation} \label{feedback:ki}
k_{i,\bf{l}_i}(t,x_i,\bf{x}_j;x_{i0},w_i):=k_{i,\bf{l}_i,1}(t,x_i,\bf{x}_j)+k_{i,\bf{l}_i,2}(x_{i0})+k_{i,\bf{l}_i,3}(t;w_i),
\end{equation}

\noindent where
\begin{equation} \label{set:W}
W:=B(v_{\max})\subset\Rat{n},
\end{equation}

\noindent and
\begin{align}
& k_{i,\bf{l}_i,1}(t,x_{i},\bf{x}_j):=f_i(\chi_i(t),\mbf{\chi}_j(t))-f_i(x_i,\bf{x}_j), \label{feedback:ki1} \\
& k_{i,\bf{l}_i,2}(x_{i0}):=\frac{1}{\delta t}(x_{l_i,G}-x_{i0}), \label{feedback:ki2} \\
& k_{i,\bf{l}_i,3}(t;w_i):=\zeta(t)w_i,\label{feedback:ki3:plan} \\
& t\in [0,\infty),(x_i,\bf{x}_j)\in \Rat{(N_i+1)n},x_{i0}\in S_{l_i},w_i\in W, \nonumber \\
&\zeta_i:\RgeO\to[\ubar{\lambda},\bar{\lambda}],0\le\ubar{\lambda}\le\bar{\lambda}<1. \label{zeta:and:lambdas}
\end{align}

\noindent The functions  $\chi_i(\cdot)$ and in $\mbf{\chi}_j(\cdot)$ in \eqref{feedback:ki1} are defined for all $t\ge 0$ through the solution of the initial value problem of Definition \ref{reference:trajectories}, with $x_{l_\ell,G}$ satisfying \eqref{reference:points} for all $\ell\in\bar{\N}_i^m$. In particular, $\chi_i(\cdot)$ constitutes a reference trajectory, whose endpoint agent $i$ should reach at time $\delta t$, when the agent's initial condition lies in $S_{l_i}$ and the feedback $k_{i,\bf{l}_i}(\cdot)$ is applied with $w_i=0$ in \eqref{feedback:ki3:plan}. We also note that the feedback laws $k_{i,\bf{l}_i}(\cdot)$ depend on the cell of agent $i$ and specifically on its $m$-cell configuration $\bf{l}_i$, through the reference point $x_{l_i,G}$ in \eqref{feedback:ki2} and the trajectories $\chi_i(\cdot)$ and $\mbf{\chi}_j(\cdot)$ in  \eqref{feedback:ki1}, as provided by the initial value problem of Definition \ref{reference:trajectories}. The parameters $\ubar{\lambda}$ and $\bar{\lambda}$ in \eqref{zeta:and:lambdas} stand for the minimum and maximum portion of the free input, respectively, that can be further exploited for motion planning. In particular, for each $w_i\in W$ in \eqref{set:W}, the vector $\zeta_i(t)w_i$ provides the ``velocity" of a motion that we superpose to the reference trajectory $\chi_i(\cdot)$ of agent $i$ at time $t\in[0,\delta t]$. The latter allows the agent to reach all points inside a ball with center the position of the reference trajectory at time $\delta t$ by following the curve $\bar{x}_i(t):=x_i(t)+w_i\int_0^t
\zeta_i(s)ds$, as depicted in Fig. \ref{fig:controllers} below. Recall that the reference trajectory of $i$, starting from $x_{l_i,G}$, is obtained from the initial value problem of Definition \ref{reference:trajectories}, by considering the unforced solution of the subsystem formed by agent's $i$ $m$-neighbor set. Then, the feedback term $k_{i,\bf{l}_i,1}(\cdot)$ enforces the agent to move in parallel to its reference trajectory. Also, by selecting a vector $w_i$ in $W$ and a function $\zeta_i(\cdot)$ as in \eqref{zeta:and:lambdas}, we can exploit the extra terms $k_{i,\bf{l}_i,2}(\cdot)$ and $k_{i,\bf{l}_i,3}(\cdot)$ to navigate the agent to the point $x$ inside the ball $B(\chi_i(\delta t);r_i)$ (the dashed circle in Figure 1) at time $\delta t$ from any initial state $x_{i0}\in S_{l_i}$. In a similar way, it is possible to reach any point inside this ball by a different selection of $w_i$. This ball has radius
\begin{equation} \label{distance:r}
r_i:=\int_0^{\delta t}\zeta_i(s)dsv_{\max}\ge\ubar{\lambda}\delta tv_{\max},
\end{equation}

\noindent namely, the distance that the agent can cross in  time $\delta t$ by exploiting $k_{i,\bf{l}_i,3}(\cdot)$, which corresponds to the part of the free input that is available for planning. Hence, it is possible to perform a well posed transition to any cell which has a nonempty intersection with $B(\chi_i(\delta t);r_i)$ (the cyan cells in Fig. \ref{fig:controllers}). Notice that due to the assumption $v_{\max}<M$ below \eqref{input:bound}, it is in principle not possible to cancel the interconnection  terms. An example of a cooperative controller motivating this assumption can be found in our recent paper \cite{BdDd15b}, where appropriate coupling terms ensure robust connectivity of the multi-agent network for bounded free inputs.

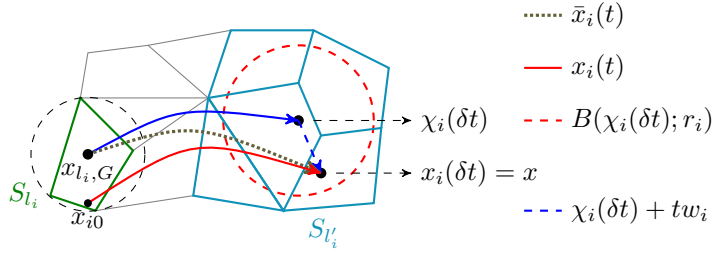
\begin{figure}[H]
\begin{center}

\begin{tikzpicture}[scale=1]

\draw[dashed, color=red,thick] (3,0.5) circle (1cm);

\draw[color=gray] (0.1,0.8) -- (0.2,1.4) -- (1,1.5) -- (1.8,0.8);
\draw[color=gray] (1,1.5) -- (2.1,1.7);
\draw[color=gray] (0.3,-0.7) -- (0.8,0.1) -- (1.8,0.8)-- (1.6,-0.5) -- (0.3,-0.7);
\draw[color=gray] (0.1,0.8) -- (1.8,0.8);
\draw[color=green!50!black,thick] (-0.3,-0.5) -- (0.3,-0.7) -- (0.8,0.1) -- (0.1,0.8)-- (-0.3,-0.5);
\draw[color=cyan!70!black,thick] (1.8,0.8) -- (2.8,-0.7)-- (1.6,-0.5) -- (1.8,0.8);
\draw[color=cyan!70!black,thick] (1.8,0.8)-- (2.8,-0.7)-- (3.3,0.3) -- (3,1) -- (1.8,0.8);
\draw[color=cyan!70!black,thick] (2.8,-0.7)-- (4,-0.6) -- (4.1,0.4) -- (3.3,0.3) ;
\draw[color=cyan!70!black,thick] (1.8,0.8) -- (2.1,1.7) -- (3.2,1.7);
\draw[color=cyan!70!black,thick] (3,1) -- (3.2,1.7) -- (4.2,1.2) -- (4.1,0.4);

\fill[black] (3,0.5) circle (2pt);
\fill[black] (3.3,-0.2) circle (2pt);

\draw[black, dashed] (0.2,0.05) circle (0.7566cm);


\coordinate [label=left:$\textcolor{green!50!black}{S_{l_i}}$] (A) at (-0.3,-0.5);
\coordinate [label=right:$\textcolor{cyan!70!black}{S_{l_i'}}$] (A) at (3,-1);

\draw[dashed,->,>=stealth'] (3.2,0.5) -- (4.5,0.5) node[right] {$\chi_{i}(\delta t)$};
\draw[dashed,->,>=stealth'] (3.5,-0.2) -- (4.5,-0.2) node[right] {$x_{i}(\delta t)=x$};

\draw[color=red,dashed,thick] (6,0.5) -- (6.5,0.5);
\coordinate [label=right:$B(\chi_{i}(\delta t);r_i)$] (A) at (6.5,0.5);

\draw[color=blue,dashed,thick] (6,-0.7) -- (6.5,-0.7);
\coordinate [label=right:$\chi_{i}(\delta t)+tw_i$] (A) at (6.5,-0.7);


\draw[color=red,thick] (6,1.2) -- (6.5,1.2);
\coordinate [label=right:$x_{i}(t)$] (A) at (6.5,1.2);

\draw[color=olive!50!black,very thick,densely dotted] (6,1.9) -- (6.5,1.9);
\coordinate [label=right:$\bar{x}_{i}(t)$] (A) at (6.5,1.9);

\draw[color=blue,dashed,->,thick,>=stealth'] (3,0.5) -- (3.3,-0.2);

\draw[color=blue,->,thick,>=stealth'] (0.2,0.05)  .. controls (1.4,0.7) .. (3,0.5);
\draw[color=olive!50!black,->,very thick,>=stealth',densely dotted] (0.2,0.05)  .. controls (1.6,0.5) .. (3.3,-0.2);
\draw[color=red,->,thick,>=stealth'] (0.2,-0.6) .. controls (1.6,0.3) .. (3.3,-0.2);

\fill[black] (0.2,-0.6) circle (1.5pt) node[below] {$x_{i0}$};
\fill[black] (0.2,0.05) circle (2pt) node[below]{$x_{l_i,G}$};

\end{tikzpicture}

\vspace{-0.4cm}

\end{center}
\caption{Consider any point $x$ inside the ball with center $\chi_i(\delta t)$. Then, for each initial condition $x_{i0}$ in the cell $S_{l_i}$, the endpoint of agent's $i$ trajectory $x_i(\cdot)$ coincides with the endpoint of the curve $\bar{x}_i(\cdot)$, which is precisely $x$, and lies in $S_{l_i'}$, namely, $x_i(\delta t)=\bar{x}_i(\delta t)=x\in S_{l_i'}$.} \label{fig:controllers}
\end{figure}

In order to verify the Consistency Condition for the derivation of well posed discretizations, we will select the function $\beta(\cdot)$ in Definition \ref{conditionCC} as
\begin{equation} \label{beta:dfn}
\beta(t):=\frac{d_{\max}(\delta t-t)}{2\delta t}+\bar{\lambda}v_{\max}t,t\in[0,\delta t]
\end{equation}

\noindent Furthermore, we select a constant $\bar{c}\in(0,1)$, which provides for each agent a measure of the deviation between each reference trajectory of its neighbors and its estimate by the agent (for corresponding consistent cell configurations). By defining
\begin{equation} \label{time:tbar}
\bar{t}:=\sup\left\lbrace t>0:e^{L_2t}-\left(L_2+\bar{c}\frac{L_2^2}{L_1\sqrt{N_{\max}}}\right)t-1<0\right\rbrace,
\end{equation}

\noindent it follows that
\begin{equation} \label{tbar:vs:tstar}
0<\bar{t}<t^*,
\end{equation}

\noindent where $t^*$ is defined in \eqref{time:tstar}, and that the function $H_m(\cdot)$ in \eqref{functions:Hkappa} satisfies
\begin{equation} \label{function:Hm:bound}
H_m(t)\le \bar{c}^{m-1}Mt,\forall t\in[0,\bar{t}]
\end{equation}

\noindent The proof of both \eqref{tbar:vs:tstar} and \eqref{function:Hm:bound} is based on elementary calculations and is provided in the Appendix. Hence, it follows from \eqref{function:Hm:bound} and the result of Proposition \ref{proposition:neighbor:rt:deviation}, namely, \eqref{neighbor:rt:deviation}, that if we select the functions $\alpha_i(\cdot)$ in Definition \ref{dfn:nrtdb} as $\alpha_i(\cdot)\equiv\alpha(\cdot)$, $\forall i\in\N$, with
\begin{equation} \label{alpha:dfn}
\alpha(t):=cMt,\forall t\in [0,\delta t]; \quad c:=\bar{c}^{m-1},
\end{equation}

\noindent then the neighbor reference trajectory deviation bound \eqref{alpha:property} is satisfied for all $0<\delta t\le\bar{t}$.

\begin{rem}
It follows from \eqref{alpha:dfn} that for a fixed constant $\bar{c}\in(0,1)$, the neighbor reference trajectory deviation bound decreases exponentially with respect to the degree of decentralization.
\end{rem}

\section{Well Posed Space-Time Discretizations with Motion Planning Capabilities}

In this section, we exploit the controllers introduced in Section 4 to provide sufficient conditions for well posed space-time discretizations. By exploiting the result of Proposition \ref{discrete:transitions:result} this  framework can be applied for motion planning, by specifying different transition possibilities for each agent through modifying its controller appropriately. As in the previous section, we consider the system \eqref{general:feedback:law}, a cell decomposition $\mathcal{S}=\{S_{l}\}_{l\in\mathcal{I}}$ of $\Rat{n}$, a time step $\delta t$ and a selection of reference points $x_{l,G}$, $l\in\I$ as in \eqref{reference:points}. For each agent $i\in\N$ and $m$-cell configuration $\bf{l}_i\in\I^{\bar{N}_i^m}$ of $i$, consider the family of feedback laws given in \eqref{feedback:ki1}, \eqref{feedback:ki2}, \eqref{feedback:ki3:plan}, and parameterized by $x_{i0}\in S_{l_i}$ and $w_i\in W$. As in the previous sections, $\chi_i(\cdot)$ is the reference solution of the initial value problem corresponding to the $m$-cell configuration of $i$ as in Definition \ref{reference:trajectories}. Recall that the parameters $\ubar{\lambda}$ and $\bar{\lambda}$ in \eqref{zeta:and:lambdas} provide the lower and upper portion of the free input that is exploited for planning. Thus, they can be regarded as a measure for the degree of control freedom that is chosen for the abstraction. We proceed by providing the desired sufficient conditions for a space-time discretization to be suitable for motion planning.

\begin{thm}\label{discretizations:for:planning}
Consider a cell decomposition $\S$ of $\Rat{n}$ with diameter $d_{\max}$, a time step $\delta t$, the constant $r_i$ defined in \eqref{distance:r} and the parameters $\ubar{\lambda}$, $\bar{\lambda}$ in \eqref{zeta:and:lambdas}. We assume that $d_{\max}$ and $\delta t$ satisfy the following restrictions:

\begin{align}
\delta t & \in\left(0,\min\left\lbrace\bar{t},\frac{(1-\ubar{\lambda})v_{\max}}{L_1\sqrt{N_{\max}}(cM+\bar{\lambda}v_{\max})+\ubar{\lambda}L_2v_{\max}}\right\rbrace\right) \label{deltat:interval} \\
d_{\max} & \in\left(0,\min\left\lbrace \frac{2(1-\ubar{\lambda})v_{\max}\delta t}{1+(L_1\sqrt{N_{\max}}+L_2)\delta t}, 2(1-\ubar{\lambda})v_{\max}\delta t \right.\right. \nonumber \\
& \left.\left.\qquad\qquad-2(L_1\sqrt{N_{\max}}(cM+\bar{\lambda}v_{\max})+\ubar{\lambda}L_2v_{\max}) \delta t^2 \right\rbrace\right], \label{dmax:interval}
\end{align}

\noindent with $L_1$, $L_2$, $M$, $v_{\max}$, $c$ and $\bar{t}$ as given in \eqref{dynamics:bound1}, \eqref{dynamics:bound2}, \eqref{dynamics:bound}, \eqref{input:bound}, \eqref{alpha:dfn} and \eqref{time:tbar}, respectively. Then, for each agent $i\in\N$ and cell configuration $\bf{l}_i$ of $i$ we have ${\rm Post}_i(l_i;\bf{l}_i)\ne\emptyset$, namely, the space-time discretization is well posed for the multi-agent system \eqref{general:feedback:law}. In particular, it holds
\begin{equation} \label{planning:condition}
{\rm Post}_i(l_i;\bf{l}_i)\supset\{l\in\I:S_l\cap B(\chi_i(\delta t);r_i)\ne\emptyset\},
\end{equation}

\noindent where $r_i$ is defined in \eqref{distance:r} with
\begin{equation}  \label{zeta:selection}
\zeta_i(t):=\ubar{\lambda}. 
\end{equation}
\end{thm}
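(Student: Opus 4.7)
My strategy is to verify the Consistency Condition of Definition \ref{conditionCC} directly for each agent $i$ and each $m$-cell configuration $\bf{l}_i$, using the feedback laws of Section 4 specialized to $\zeta_i(t) \equiv \ubar{\lambda}$ as in \eqref{zeta:selection}, together with the canonical choices $\beta(\cdot)$ from \eqref{beta:dfn} and $\alpha_i(\cdot) \equiv \alpha(\cdot)$ from \eqref{alpha:dfn}. The hypothesis $\delta t \le \bar{t}$ from \eqref{deltat:interval} guarantees, via Proposition \ref{proposition:neighbor:rt:deviation} together with \eqref{function:Hm:bound}, that $\alpha$ satisfies the neighbor reference trajectory deviation bound \eqref{alpha:property}; the two inequalities in \eqref{beta:property} hold by direct substitution into \eqref{beta:dfn} together with $\bar{\lambda}<1$. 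Since $\chi_i(\delta t)$ always lies in some cell, establishing the inclusion \eqref{planning:condition} automatically gives ${\rm Post}_i(l_i;\bf{l}_i)\neq\emptyset$, so the well-posedness claim reduces to \eqref{planning:condition}.

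\textbf{Key computation.} Given any $l_i' \in \I$ with $S_{l_i'} \cap B(\chi_i(\delta t); r_i) \neq \emptyset$, pick $x_i'$ in this intersection and set $w_i := (x_i' - \chi_i(\delta t))/(\ubar{\lambda} \delta t)$, which lies in $W$ because $|w_i| \le r_i/(\ubar{\lambda}\delta t) = v_{\max}$. The crucial feature is that $k_{i,\bf{l}_i,1}$ is engineered to cancel $f_i(x_i,\bf{d}_j)$, so the closed-loop dynamics of \eqref{system:disturbances} collapse to $\dot{x}_i = \dot{\chi}_i(t) + \tfrac{1}{\delta t}(x_{l_i,G}-x_{i0}) + \ubar{\lambda}w_i$, independent of both state and disturbances. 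Integrating gives
\[
x_i(t) = \chi_i(t) + (x_{i0}-x_{l_i,G})(1 - t/\delta t) + \ubar{\lambda}\,w_i\,t,
\]
so $x_i(\delta t) = x_i' \in S_{l_i'}$, proving \eqref{xi:in:finalcell}, and \eqref{reference:points} together with $|w_i|\le v_{\max}$ yields $|x_i(t)-\chi_i(t)| \le (d_{\max}/2)(1-t/\delta t) + \ubar{\lambda}v_{\max}t \le \beta(t)$, giving \eqref{xi:consistency:bounds}.

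\textbf{Control magnitude bound.} What remains is \eqref{controler:consistency}. I would bound the three summands of \eqref{feedback:ki} separately: using \eqref{dynamics:bound1}--\eqref{dynamics:bound2}, the trajectory estimate above, and the disturbance bound \eqref{disturbance:bounds}, I get $|k_{i,\bf{l}_i,1}| \le L_2\beta(t) + L_1\sqrt{N_{\max}}(\alpha(t)+\beta(t))$; trivially $|k_{i,\bf{l}_i,2}| \le d_{\max}/(2\delta t)$ and $|k_{i,\bf{l}_i,3}| \le \ubar{\lambda}v_{\max}$ by \eqref{zeta:selection}. Since $\alpha(\cdot)$ and $\beta(\cdot)$ are affine in $t$, the total bound is affine in $t$, so it suffices to check it at the endpoints $t=0$ and $t=\delta t$. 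The resulting inequalities, after algebraic rearrangement, coincide precisely with the two terms in the minimum of \eqref{dmax:interval}, and the condition \eqref{deltat:interval} is exactly what is needed to keep the second of these two bounds strictly positive, so the interval for $d_{\max}$ is nonempty.

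\textbf{Expected obstacle.} The cleanest part is the cancellation structure and the endpoint bound on $|v_i|$. The main subtlety I anticipate is producing the \emph{strict} inequality required in \eqref{xi:consistency:bounds} when $\ubar{\lambda}=\bar{\lambda}$ or when $x_{i0}$ can realize $|x_{i0}-x_{l_i,G}|=d_{\max}/2$: both slack sources degenerate simultaneously only at $t=0$ or $t=\delta t$, and one must either shrink $x_i'$ strictly inside the open ball $B(\chi_i(\delta t);\beta(\delta t))$ (using that $r_i\le\beta(\delta t)$ and $S_{l_i'}$ has nonempty open intersection with the open ball via $\ubar{\lambda}<1$) or exploit $x_{i0}\in\mathrm{int}(S_{l_i})$ to recover the slack at the initial time. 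Bookkeeping the sharpness of the Lipschitz bound in the affine structure of $|v_i(t)|$ so that it matches \eqref{dmax:interval} term-for-term is the most tedious but entirely routine piece.
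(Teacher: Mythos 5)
Your proposal is correct and follows essentially the same route as the paper's proof: the same choices of $\beta$ and $\alpha$, the same $w_i$, the same explicit closed-form solution of the disturbed system under the cancelling feedback, and the same endpoint check of the affine-in-$t$ control bound against the two terms of \eqref{dmax:interval}. The only caveat is that your stated bound $|k_{i,\mathbf{l}_i,1}|\le L_2\beta(t)+L_1\sqrt{N_{\max}}(\alpha(t)+\beta(t))$ uses $L_2\beta(t)$ for the self-deviation term, whereas matching \eqref{dmax:interval} term-for-term requires the sharper bound $L_2\bigl(\tfrac{(\delta t-t)d_{\max}}{2\delta t}+\ubar{\lambda}v_{\max}t\bigr)$ obtained from the explicit trajectory with $\zeta_i\equiv\ubar{\lambda}$ (the paper uses this sharper form), a refinement you already flag in your closing remarks.
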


\begin{proof}
In order to prove that the discretization is well posed, we will specify, according to Definition \ref{well:posed:discretization}(ii), continuous functions $\beta(\cdot)$ and $\alpha_i(\cdot)$, $i\in\N$, satisfying \eqref{beta:property} and \eqref{alpha:property}, respectively, in such a way that \eqref{planning:condition} holds for all $i\in\N$ and $\bf{l}_i\in\I^{\bar{N}_i^m}$. We pick $\beta(\cdot)$ as in \eqref{beta:dfn} and $\alpha_i(\cdot)\equiv\alpha(\cdot)$, for all $i\in\N$, with $\alpha(\cdot)$ as given in \eqref{alpha:dfn}. Notice first that $\beta(\cdot)$ satisfies \eqref{beta:property}. In addition, due to the requirement that $\delta t\le\bar{t}$ in \eqref{deltat:interval} and the discussion below \eqref{alpha:dfn}, the functions $\alpha_i(\cdot)$ satisfy the reference trajectory deviation bound \eqref{alpha:property}. Thus, it follows that the requirements of a well posed discretization on the mapping $\beta(\cdot)$ and $\alpha_i(\cdot)$ are fulfilled. 

Next, let $i\in\N$ and $\bf{l}_i\in\I^{\bar{N}_i^m}$. For the derivation of \eqref{planning:condition}, namely, that each transition $l_i\overset{\bf{l}_i}{\longrightarrow}l_i'$, with $S_{l_i'}\cap B(\chi_i(\delta t);r_i)\ne\emptyset$ is well posed, it suffices to show that for each $x\in B(\chi_i(\delta t);r_i)$ and $l_i'\in\I$ such that $x\in S_{l_i'}$, the transition $l_i\overset{\bf{l}_i}{\longrightarrow}l_i'$ is well posed. Thus, for each $x\in B(\chi_i(\delta t);r_i)$ and $l_i'\in\I$ such that $x\in S_{l_i'}$, we need according to Definition \ref{well:posed:discretization}(i) to find a feedback law \eqref{feedback:for:i} satisfying Property (P) and a vector $w_i\in W$, in such a way that the Consistency Condition is fulfilled.

Let $x\in B(\chi_i(\delta t);r_i)$ and define
\begin{equation}  \label{vector:wi}
w_i:= \frac{x-\chi(\delta t)}{\ubar{\lambda}\delta t}.
\end{equation}

\noindent with $\ubar{\lambda}$ as in \eqref{zeta:and:lambdas}. Then, it follows from \eqref{distance:r} and \eqref{zeta:selection} that $|w_i|\le \frac{r_i}{\ubar{\lambda}\delta t}\le v_{\max}$ and hence, by virtue of \eqref{set:W} that $w_i\in W$. We now select the feedback law $k_{i,\bf{l}_i}(\cdot)$ as given by \eqref{feedback:ki}, \eqref{feedback:ki1}, \eqref{feedback:ki2}, \eqref{feedback:ki3:plan} and with $w_i$ as defined in \eqref{vector:wi}, and we will show that for all $l_i'\in\I$ such that $x\in S_{l_i'}$ the Consistency Condition is satisfied. Notice first that $k_{i,\bf{l}_i}(\cdot)$ satisfies Property (P). In order to show the Consistency Condition we pick  $x_{i0}\in{\rm int}(S_{l_i})$, $x_i':=x$ with $x$ as selected above and prove that the solution $x_i(\cdot)$ of \eqref{system:disturbances} with $v_i=k_{i,\bf{l}_i}(t,x_i,\bf{d}_j;x_{i0},w_i)$ satisfies \eqref{xi:consistency:bounds}, \eqref{xi:in:finalcell} and \eqref{controler:consistency}, for any continuous $d_{j_1},\ldots,d_{j_{N_i}}:\RgeO\to\Rat{n}$ that satisfy \eqref{disturbance:bounds}. We break the subsequent proof in the following steps.

\noindent \textit{STEP 1:  Proof of \eqref{xi:consistency:bounds} and \eqref{xi:in:finalcell}.} By taking into account \eqref{system:disturbances}, \eqref{feedback:ki}, \eqref{feedback:ki1}-\eqref{feedback:ki3:plan} and \eqref{zeta:selection} we obtain for any continuous $d_{j_1},\ldots,d_{j_{N_i}}:\RgeO\to\Rat{n}$ the solution $x_i(\cdot)$ of \eqref{system:disturbances} with $v_i=k_{i,\bf{l}_i}$ as

\begin{align*}
x_i(t) & =x_{i0}+\int_0^t(f_i(x_i(s),\bf{d}_j(s))+k_{i,\bf{l}_i}(s,x_{i}(s),\bf{d}_j(s);x_{i0},w_i))ds \\
& = x_{i0}+\int_0^t\left(f_i(\chi_i(s),{\mbf{\chi}}_j(s))ds+\frac{1}{\delta t}(x_{l_i,G}-x_{i,0})+\ubar{\lambda}w_i\right)ds \\
& = x_{i0}+\chi_i(t)-x_{l_i,G}+\frac{t}{\delta t}(x_{l_i,G}-x_{i,0})+t\ubar{\lambda}w_i= \chi_i(t)+\frac{\delta t-t}{\delta t}(x_{i0}-x_{l_i,G})+t\ubar{\lambda}w_i,t\ge 0
\end{align*}

\noindent Hence, from the fact that $x_{i0}\in{\rm int}(S_{l_i})$, we deduce from \eqref{reference:points} that
\begin{equation} \label{solution:vs:reftraj}
|x_i(t)-\chi_i(t)|\le \left|\frac{\delta t-t}{\delta t}(x_{i0}-x_{l_i,G})|+t\ubar{\lambda}w_i\right|<\frac{(\delta t-t)d_{\max}}{2\delta t}+t\ubar{\lambda}v_{\max},\forall t\in[0,\delta t],
\end{equation}

\noindent which by virtue of  \eqref{beta:dfn} and \eqref{zeta:and:lambdas} establishes validity of \eqref{xi:consistency:bounds}. Furthermore, we get that $x_i(\delta t)=\chi_i(\delta t)+\delta t\ubar{\lambda}w_i=x=x_i'$ and thus, \eqref{xi:in:finalcell} also holds.

\noindent \textit{STEP 2: Estimation of bounds on $k_{i,\bf{l}_i,1}(\cdot)$, $k_{i,\bf{l}_i,2}(\cdot)$ and $k_{i,\bf{l}_i,3}(\cdot)$ along the solution $x_i(\cdot)$ of \eqref{system:disturbances} with $v_i=k_{i,\bf{l}_i}$ and $d_{j_1},\ldots,d_{j_{N_i}}$ satisfying  \eqref{disturbance:bounds}.} Pick any continuous disturbances $d_{j_1},\ldots,d_{j_{N_i}}$ that satisfy $|d_{j_{\kappa}}(t)-\chi_{j_{\kappa}}^{[i]}(t)|\le \alpha_i(t)+\beta(t)$, for all $t\in[0,\delta t]$. We first show that
\begin{align}
|k_{i,\bf{l}_i,1}(t,x_{i}(t),\bf{d}_j(t))| & \le L_{1}\sqrt{N_{\max}}\left(\frac{d_{\max}(\delta t-t)}{2\delta t}+(cM+\bar{\lambda}v_{\max})t\right) \nonumber \\
& +L_{2}\left(\frac{(\delta t-t)d_{\max}}{2\delta t}+\ubar{\lambda}v_{\max}t\right), \forall t\in[0,\delta t]. \label{ki1:bound}
\end{align}

\noindent Indeed, notice that by virtue of \eqref{feedback:ki1}, we have
\begin{equation}  \label{ki1:equiv}
k_{i,\bf{l}_i,1}(t,x_{i}(t),\bf{d}_j(t))=[f_i(\chi_i(t),\mbf{\chi}_j(t))-f_i(x_i(t),\mbf{\chi}_j(t))]+[f_i(x_i(t),\mbf{\chi}_j(t))-f_i(x_i(t),\bf{d}_j(t))].
\end{equation}

\noindent For the second difference on the right hand side of \eqref{ki1:equiv}, we obtain from \eqref{dynamics:bound1}, \eqref{Nmax}, \eqref{xi:consistency:bounds}, \eqref{beta:dfn} and \eqref{alpha:dfn} that
\begin{align*}
|f_i(x_i(t),\mbf{\chi}_j(t))-f_i(x_i(t),\bf{d}_j(t))| \le & L_{1}|(d_{j_{1}}(t)-\chi_{j_{1}}^{[i]}(t),\ldots,d_{j_{N_i}}(t)-\chi_{j_{N_i}}^{[i]}(t))| \\
\le & L_{1}\left(\sum_{\kappa=1}^{N_i}(\alpha(t)+\beta(t))^{2}\right)^{\frac{1}{2}} \\
\le & L_{1}\sqrt{N_{\max}}(\alpha(t)+\beta(t)) \\
\le & L_{1}\sqrt{N_{\max}}\left(\frac{d_{\max}(\delta t-t)}{2\delta t}+(cM+\bar{\lambda}v_{\max})t\right).
\end{align*}

\noindent For the other difference in \eqref{ki1:equiv}, it follows from \eqref{dynamics:bound2} and the obtained trajectory $x_i(\cdot)$ in Step 1, that
\begin{align*}
|f_i(x_i(t),\mbf{\chi}_j(t))-f_i(\chi_i(t),\mbf{\chi}_j(t))| \le & L_2\left|\left(\chi_i(t)+t\ubar{\lambda}w_i+\left(1-\tfrac{t}{\delta t}\right)(x_{i0}-x_{l_i,G})\right)-\chi_i(t)\right| \\
\le & L_{2}\left(\frac{(\delta t-t)d_{\max}}{2\delta t}+\ubar{\lambda}v_{\max}t\right).
\end{align*}

\noindent Hence, it follows from the evaluated bounds on the differences of the right hand side of \eqref{ki1:equiv} that \eqref{ki1:bound} holds. Next, by recalling that $x_{l_i,G}$ satisfies \eqref{reference:points}, it follows directly from \eqref{feedback:ki2} that
\begin{equation} \label{ki2:bound}
|k_{i,\bf{l}_i,2}(x_{i0})|=\frac{1}{\delta t}|x_{i0}-x_{l_i,G}|\le \frac{d_{\max}}{2\delta t},\forall x_{i0}\in S_{l_i}.
\end{equation}

\noindent Finally, for $k_{i,\bf{l}_i,3}(\cdot)$ we get from \eqref{feedback:ki3:plan}, \eqref{set:W} and \eqref{zeta:selection} that
\begin{equation} \label{ki3:bound:plan}
|k_{i,\bf{l}_i,3}(t;w_i)| = |\ubar{\lambda}w_i|\le \ubar{\lambda}v_{\max},\forall t\in[0,\delta t],w_i\in W.
\end{equation}

\noindent \textit{STEP 3: Verification of \eqref{controler:consistency}.} In this step we exploit the bounds obtained in Step 2 in order to show \eqref{controler:consistency} for any $d_{j_1},\ldots,d_{j_{N_i}}$ satisfying \eqref{disturbance:bounds}. By taking into account \eqref{feedback:ki}, \eqref{ki1:bound}, \eqref{ki2:bound} and \eqref{ki3:bound:plan} we want to prove that

\begin{align}
 L_{1}\sqrt{N_{\max}} & \left(\frac{d_{\max}(\delta t-t)}{2\delta t}+(cM+\bar{\lambda}v_{\max})t\right)+\frac{d_{\max}}{2\delta t} \nonumber \\
 +L_{2} & \left(\frac{(\delta t-t)d_{\max}}{2\delta t}+\ubar{\lambda}v_{\max}t\right)+\ubar{\lambda}v_{\max}\le v_{\max}, \forall t\in[0,\delta t].  \label{condition:dmax:vmax:plan}
\end{align}

\noindent Due to the linearity of the left hand side of \eqref{condition:dmax:vmax:plan} with respect to $t$, it suffices to verify it for $t=0$ and $t=\delta t$. For $t=0$ we obtain that
\begin{align*}
& L_{1}\sqrt{N_{\max}}\frac{d_{\max}}{2}+\frac{d_{\max}}{2\delta t}+L_{2}\frac{d_{\max}}{2}+\ubar{\lambda}v_{\max}\le v_{\max} \iff \\
& L_{1}\sqrt{N_{\max}}\delta t d_{\max}+d_{\max}+L_{2}\delta td_{\max}\le 2(1-\ubar{\lambda})v_{\max}\delta t
\end{align*}

\noindent whose validity is guaranteed by \eqref{dmax:interval}. For the case where $t=\delta t$, we have
\begin{align*}
& L_{1}\sqrt{N_{\max}}(cM+\bar{\lambda}v_{\max})\delta t+\frac{d_{\max}}{2\delta t}+L_{2}\ubar{\lambda}v_{\max}\delta t+\ubar{\lambda}v_{\max}\le v_{\max} \iff \\
& d_{\max}+2(L_{2}\ubar{\lambda}v_{\max}\delta t^2+L_{1}\sqrt{N_{\max}}(cM+\bar{\lambda}v_{\max})\delta t^2)\le  2(1-\ubar{\lambda})v_{\max}\delta t
\end{align*}

\noindent which also holds because of \eqref{dmax:interval}. Hence, we deduce that  \eqref{controler:consistency} is also fulfilled and the proof is complete.
\end{proof}

\begin{rem}
In Fig. \ref{fig:control:bounds} we depict the bounds on the feedback term $k_{i,\bf{l}_i}$ in order to provide some additional intuition behind the selection of $d_{\max}$ and $\delta t$. By rearranging terms in \eqref{condition:dmax:vmax:plan} we get  
\begin{align}
\frac{d_{\max}}{2\delta t}+\frac{d_{\max}}{2\delta t}(L_1\sqrt{N_{\max}}+L_2)(\delta t-t) & +L_1\sqrt{N_{\max}}(cM+\bar{\lambda}v_{\max})+L_{2}\ubar{\lambda}v_{\max})t \nonumber \\
& \le (1-\ubar{\lambda})v_{\max}, \forall t\in[0,\delta t]. \label{condition:dmax:vmax:plan2}
\end{align}

\noindent The third term of the sum in \eqref{condition:dmax:vmax:plan2} is maximized at $\delta t$, namely, becomes $L_1\sqrt{N_{\max}}(cM+\bar{\lambda}v_{\max})+L_{2}\ubar{\lambda}v_{\max})\delta t$ and is independent of the decomposition diameter $d_{\max}$. This term is responsible for the requirement that $\delta t$ is upper bounded by $\frac{(1-\ubar{\lambda})v_{\max}}{L_1\sqrt{N_{\max}}(cM+\bar{\lambda}v_{\max})+\ubar{\lambda}L_2v_{\max}}$ in \eqref{deltat:interval}, since it cannot exceed $ (1-\ubar{\lambda})v_{\max}$. Hence, assuming that we have selected a time step $\delta t$ which satisfies this constraint, we can exploit the remaining part of the input for the manipulation of the first two terms in \eqref{condition:dmax:vmax:plan2}. Notice that both these terms are proportional to $d_{\max}$ which implies that there will always exist a sufficiently small selection of $d_{\max}$ which guarantees consistency with the bound on the available control. In particular, $d_{\max}$ can be selected between zero and the largest value that will result in the full exploitation of the control with magnitude $(1-\ubar{\lambda})v_{\max}$, either at time $0$ or at time $\delta t$ as depicted in Fig. \ref{fig:control:bounds} right and left, respectively. The later holds because of the linearity of the right hand side of \eqref{condition:dmax:vmax:plan2} with respect to $t$ and provides the two possible upper bounds on $d_{\max}$ through the min operator in \eqref{dmax:interval}.
\end{rem}

\begin{figure}[H] 
\begin{center}
\begin{tikzpicture} [scale=1]

\fill[draw=black, fill=green, opacity=0.5] (0,0)  -- (5,0) -- (5,4) -- (0,0);
\fill[draw=black, fill=blue, opacity=0.5] (0,0)  -- (5,4) -- (5,5) -- (0,1) -- (0,0);
\fill[draw=black, fill=magenta, opacity=0.5] (5,5) -- (0,1) -- (0,4.5) -- (5,5);

\draw[line width=.07cm,->,>=stealth ] (-.5,0)  -- (6,0);
\draw[line width=.04cm,dashed] (5,0)  -- (5,6);

\draw[line width=.07cm,->,>=stealth ] (0,-.5)  -- (0,6);
\draw[line width=.04cm,dashed] (0,5)  -- (6,5);

\node (irrelevant) at (2.5,2.5) [label=center:$\frac{d_{\max}}{2\delta t}$] {};

\node (irrelevant) at (3,.5) 
{$\begin{aligned}
\scriptstyle{(L_1\sqrt{N_{\max}}(cM} & \scriptstyle{+\bar{\lambda}v_{\max})}  \\[-.5em]   & \scriptstyle{+L_{2}\ubar{\lambda}v_{\max})t}  \end{aligned}$} [label=center];

\node (irrelevant) at (1.8,4.1) [label=center:$\frac{d_{\max}}{2\delta t}\scriptstyle{(L_1\sqrt{N_{\max}}+L_2)(\delta t-t)}$] {};


\node (irrelevant) at (5,0) [label=below:$\delta t$] {};
\fill[black] (5,0) circle (2pt);
\node (irrelevant) at (0,5) [label=above right:$(1-\ubar{\lambda})v_{\max}$] {};
\fill[black] (0,5) circle (2pt);

\fill[draw=black, fill=green, opacity=0.5] (7,0)  -- (12,0) -- (12,2.5) -- (7,0);
\fill[draw=black, fill=blue, opacity=0.5] (7,0)  -- (12,2.5) -- (12,3.5) -- (7,1) -- (7,0);
\fill[draw=black, fill=magenta, opacity=0.5] (12,3.5) -- (7,1) -- (7,5) -- (12,3.5);

\draw[line width=.07cm,->,>=stealth ] (6.5,0)  -- (13,0);
\draw[line width=.04cm,dashed] (12,0)  -- (12,6);

\draw[line width=.07cm,->,>=stealth ] (7,-.5)  -- (7,6);
\draw[line width=.04cm,dashed] (7,5)  -- (13,5);

\node (irrelevant) at (9.8,1.9) [label=center:$\frac{d_{\max}}{2\delta t}$] {};

\node (irrelevant) at (10.2,.4) 
{$\begin{aligned}
\scriptstyle{(L_1\sqrt{N_{\max}}(cM} & \scriptstyle{+\bar{\lambda}v_{\max})}  \\[-.5em]   & \scriptstyle{+L_{2}\ubar{\lambda}v_{\max})t}  \end{aligned}$} [label=center];

\node (irrelevant) at (8.8,3.5) [label=center:$\frac{d_{\max}}{2\delta t}\scriptstyle{(L_1\sqrt{N_{\max}}+L_2)(\delta t-t)}$] {};


\node (irrelevant) at (12,0) [label=below:$\delta t$] {};
\fill[black] (12,0) circle (2pt);
\node (irrelevant) at (7,5) [label=above right:$(1-\ubar{\lambda})v_{\max}$] {};
\fill[black] (7,5) circle (2pt);

\end{tikzpicture}
\end{center}
\caption{Illustration of the worst case bounds on the feedback terms as given in \eqref{condition:dmax:vmax:plan2}.}  \label{fig:control:bounds}
\end{figure}
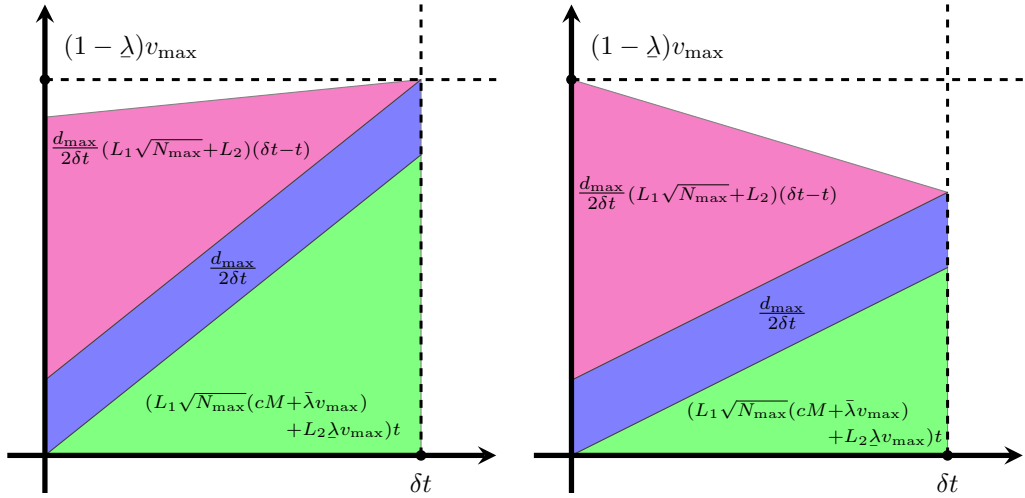

Once a cell decomposition diameter $d_{\max}$ and a time step $\delta t$ which satisfy the restrictions \eqref{deltat:interval} and \eqref{dmax:interval} of Theorem~\ref{discretizations:for:planning} have been selected, it is possible to improve the reachability properties of the abstraction by exploiting the remaining available part of the control. This is established through a modification of the function $\zeta_i(\cdot)$ which was selected as a constant in \eqref{zeta:selection}. The corresponding result is provided in Corollary \ref{corollary:planning} below.

\begin{corollary} \label{corollary:planning}
Assume that the hypotheses of Theorem~\ref{discretizations:for:planning} are satisfied and let any $d_{\max}$ and $\delta t$ that satisfy \eqref{deltat:interval} and \eqref{dmax:interval}. Then, for each agent $i\in\N$ and cell configuration $\bf{l}_i$ of $i$ it holds
\begin{equation} \label{planning:condition}
{\rm Post}_i(l_i;\bf{l}_i)\supset\{l\in\I:S_l\cap B(\chi_i(\delta t);r_i)\ne\emptyset\},
\end{equation}

\noindent where $r_i$ is defined in \eqref{distance:r} with 
\begin{align}
\zeta_i(t):= & \ubar{\lambda}+\xi_i(t),t\in[0,\delta t], \label{zeta:selection:2} \\
\xi_i(t):= & \min\left\lbrace \bar{\lambda}-\ubar{\lambda},\frac{A^L(\delta t-t)+A^R_it}{\delta t v_{\max}(1+t)}\right\rbrace \label{xi:dfn} 
\end{align}

\noindent and the constants $A^L,A^R_i\ge 0$ are given as
\begin{align}
A^L:= & (1-\ubar{\lambda})v_{\max}-L_{1}\sqrt{N_{\max}}\frac{d_{\max}}{2}-\frac{d_{\max}}{2\delta t}-L_{2}\frac{d_{\max}}{2}, \label{ALeft}\\
A^R_i:= & \left\lbrace \begin{array}{l}
(1-\ubar{\lambda})v_{\max}-\frac{d_{\max}}{2\delta t}-L_{2}\ubar{\lambda}v_{\max}\delta t-L_{1}\sqrt{N_{\max}}\bar{\lambda}v_{\max}\delta t, \\
{\rm if}\; \N_i^{m+1}=\emptyset\; {\rm and}\; \N_{\ell}^{m+1}=\emptyset \;\textup{for all}\; \ell\in\N_i, \label{ARight} \\
(1-\ubar{\lambda})v_{\max}-\frac{d_{\max}}{2\delta t}-L_{2}\ubar{\lambda}v_{\max}\delta t-L_{1}\sqrt{N_{\max}}(cM+\bar{\lambda}v_{\max})\delta t, \\
{\rm otherwise}.
\end{array}\right.
\end{align}  
\end{corollary}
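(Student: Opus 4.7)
My plan is to follow the same three-step template as the proof of Theorem~\ref{discretizations:for:planning}, substituting the refined function $\zeta_i(\cdot)$ from \eqref{zeta:selection:2}--\eqref{xi:dfn} in place of the constant $\ubar{\lambda}$ used there. Given an agent $i$, an $m$-cell configuration $\bf{l}_i$, a point $x \in B(\chi_i(\delta t); r_i)$ with $r_i$ as in \eqref{distance:r}, and a cell index $l_i'$ with $x \in S_{l_i'}$, I will set $w_i := (x - \chi_i(\delta t))/\int_0^{\delta t}\zeta_i(s)\,ds$; by \eqref{distance:r} this guarantees $|w_i| \le v_{\max}$, so $w_i \in W$. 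I will then select the control law $k_{i,\bf{l}_i}(\cdot)$ given by \eqref{feedback:ki}--\eqref{feedback:ki3:plan} with this $w_i$ and with $\zeta_i$ as in \eqref{zeta:selection:2}, and keep the auxiliary functions $\beta(\cdot)$ and $\alpha_i(\cdot)$ exactly as in \eqref{beta:dfn} and \eqref{alpha:dfn}, which by Theorem~\ref{discretizations:for:planning} already satisfy \eqref{beta:property} and \eqref{alpha:property}.

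Next I would repeat Step~1 of the proof of Theorem~\ref{discretizations:for:planning}: the closed-loop solution of \eqref{system:disturbances} with $v_i = k_{i,\bf{l}_i}(t, x_i, \bf{d}_j; x_{i0}, w_i)$ integrates to
\[
x_i(t) = \chi_i(t) + \tfrac{\delta t - t}{\delta t}(x_{i0} - x_{l_i,G}) + w_i\int_0^t \zeta_i(s)\,ds,
\]
and since $\zeta_i(s) \le \bar{\lambda}$ thanks to the outer $\min$ in \eqref{xi:dfn}, the inequality $|x_i(t) - \chi_i(t)| < \beta(t)$ of \eqref{xi:consistency:bounds} continues to hold on $[0,\delta t]$, together with $x_i(\delta t) = x \in S_{l_i'}$, which yields \eqref{xi:in:finalcell}.

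The heart of the argument will be verifying the control-magnitude bound \eqref{controler:consistency} for the augmented $\zeta_i$. Relative to the bound derived in the proof of Theorem~\ref{discretizations:for:planning}, which I will denote by $B_{\text{base}}(t)$ and which corresponds to $\zeta_i \equiv \ubar{\lambda}$, the refined $\zeta_i = \ubar{\lambda} + \xi_i$ introduces two additional summands in $|k_{i,\bf{l}_i}|$: a direct term $\xi_i(t)v_{\max}$ coming from $k_{i,\bf{l}_i,3}$, and, via the Lipschitz bound on $k_{i,\bf{l}_i,1}$ applied to $x_i(t)$, an integrated term $L_2 v_{\max}\int_0^t \xi_i(s)\,ds$ produced by the enlarged offset of the closed-loop trajectory from $\chi_i(\cdot)$. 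Because $B_{\text{base}}(t)$ is affine in $t$ and satisfies $v_{\max} - B_{\text{base}}(0) = A^L$ and $v_{\max} - B_{\text{base}}(\delta t) = A^R_i$ --- where the two cases for $A^R_i$ in \eqref{ARight} arise depending on whether one may take $\alpha_i \equiv 0$, which by Lemma~\ref{lemma:zero:deviation} is permitted precisely when $\N_i^{m+1} = \emptyset$ and $\N_{\ell}^{m+1} = \emptyset$ for every $\ell \in \N_i$ --- the slack available at time $t$ is exactly $S(t) := A^L(\delta t-t)/\delta t + A^R_i t/\delta t$. The proposed $\xi_i(t)$ is bounded by $S(t)/(v_{\max}(1+t))$, and the plan is to show that this in turn forces $\xi_i(t)v_{\max} + L_2 v_{\max}\int_0^t \xi_i(s)\,ds \le S(t)$ on $[0,\delta t]$; the factor $1+t$ in the denominator is precisely what is needed to simultaneously absorb the pointwise and the accumulated extra control effort. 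Once this inequality is confirmed, the Consistency Condition is satisfied for every such $x$ and $l_i'$, establishing \eqref{planning:condition}.

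The hard part will be the technical verification of this last inequality. One must confirm that the specific form of $\xi_i$ absorbs both the direct contribution $\xi_i(t)v_{\max}$ and the integral term $L_2 v_{\max}\int_0^t\xi_i(s)\,ds$ against the affine slack $S(t)$, while separately handling the regime in which the outer $\min$ in \eqref{xi:dfn} is saturated at $\bar{\lambda} - \ubar{\lambda}$. The case split in the definition of $A^R_i$ adds some bookkeeping, since it ties the size of the available slack to a structural condition on the graph around agent $i$ via Lemma~\ref{lemma:zero:deviation}; beyond those two points, the rest of the argument is essentially a direct transcription of the corresponding steps in the proof of Theorem~\ref{discretizations:for:planning}.
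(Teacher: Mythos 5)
Your overall route is the one the paper takes: the same controller \eqref{feedback:ki}--\eqref{feedback:ki3:plan} with $w_i=(x-\chi_i(\delta t))/\int_0^{\delta t}\zeta_i(s)\,ds$, the same $\beta(\cdot)$, the observation that $\alpha_i\equiv 0$ may be used (via Lemma~\ref{lemma:zero:deviation}) exactly when $\N_i^{m+1}=\emptyset$ and $\N_{\ell}^{m+1}=\emptyset$ for all $\ell\in\N_i$ --- which is what produces the two branches of $A^R_i$ --- and the identification of the residual control budget as the affine function $S(t)=\bigl(A^L(\delta t-t)+A^R_i t\bigr)/\delta t$ interpolating between the slacks at $t=0$ and $t=\delta t$. (Note one small internal inconsistency: you first announce that you will keep $\alpha_i$ exactly as in \eqref{alpha:dfn}, and only later invoke $\alpha_i\equiv 0$; the case split on $\alpha_i$ has to be made from the start, since it is what justifies the first branch of \eqref{ARight}.)

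The genuine problem is in the step you defer as ``the hard part.'' You bound the enlarged offset of $x_i(\cdot)$ from $\chi_i(\cdot)$ by the integral $v_{\max}\int_0^t\xi_i(s)\,ds$ --- which is indeed the honest bound --- and then claim that $\xi_i(t)\le S(t)/\bigl(v_{\max}(1+t)\bigr)$ forces
\begin{equation*}
\xi_i(t)v_{\max}+L_2 v_{\max}\int_0^t\xi_i(s)\,ds\;\le\; S(t),\qquad t\in[0,\delta t].
\end{equation*}
This implication is false in general: the $(1+t)$ denominator only controls the combination $\xi_i(t)v_{\max}(1+t)$, not the sum of a pointwise term and an integral. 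For instance, when $A^L=A^R_i$ (so $S\equiv S_0$ is constant) and $L_2=1$, taking $\xi_i(s)=S_0/(v_{\max}(1+s))$ gives a left-hand side equal to $S_0/(1+t)+S_0\ln(1+t)$, and the required inequality reduces to $\ln(1+t)\le t/(1+t)$, which fails for every $t>0$. The paper avoids the integral altogether: it bounds the offset pointwise by $(\ubar{\lambda}+\xi_i(t))v_{\max}\,t$, so that the verification of \eqref{controler:consistency} reduces to $\xi_i(t)v_{\max}(1+L_2t)\le S(t)$, which is the inequality the denominator in \eqref{xi:dfn} is designed to deliver. To close your version of the argument you would either have to justify replacing $\int_0^t\xi_i(s)\,ds$ by $\xi_i(t)\,t$ (e.g., by establishing monotonicity of $\zeta_i$, which is what the paper's pointwise bound implicitly relies on) or redesign $\xi_i$ so that the integral inequality holds; as written, the deferred verification does not go through.
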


\begin{proof}
Notice first that due to \eqref{dmax:interval}, namely that $d_{\max}\le \frac{2(1-\ubar{\lambda})v_{\max}\delta t}{1+(L_1\sqrt{N_{\max}}+L_2)\delta t}$ and $d_{\max}\le 2(1-\ubar{\lambda})v_{\max}\delta t-2(L_1\sqrt{N_{\max}}(cM+\bar{\lambda}v_{\max})+\ubar{\lambda}L_2v_{\max}) \delta t^2$, it follows that $A^L\ge 0$  and $A^R_i\ge 0$ for all $i\in\N$, respectively. The rest of the proof follows the same arguments with those of the proof of Theorem \ref{discretizations:for:planning} which certain modifications that are  provided below. First, the parameter $w_i$ in \eqref{vector:wi} is chosen as
$$
w_i=\frac{x-\chi_i(\delta t)}{\int_0^{\delta t}\zeta_i(s)ds}.
$$

\noindent In addition, when for certain $i\in\N$ it holds $\N_i^{m+1}=\emptyset$ and $\N_{\ell}^{m+1}=\emptyset$ for all $\ell\in\N_i$, it follows from Lemma \ref{lemma:zero:deviation} that  for corresponding consistent cell configurations $\bf{l}_i$ and $\bf{l}_{\ell}$ it holds $\chi_{\ell}^{[i]}(t)=\chi_{\ell}^{[\ell]}(t)$, $\forall t\ge 0$. Thus, we select the neighbor reference trajectory deviation bound $\alpha_i(\cdot)\equiv 0$ in that case and as in \eqref{alpha:dfn} otherwise. 

Due to the above modifications, the bounds on $k_{i,\bf{l}_i,1}(\cdot)$ and $k_{i,\bf{l}_i,3}(\cdot)$ are now given as $|k_{i,\bf{l}_i,1}(t,x_{i}(t),\bf{d}_j(t))| \le L_{1}\sqrt{N_{\max}}(\frac{d_{\max}(\delta t-t)}{2\delta t}+\bar{\lambda}v_{\max})t)+L_{2}(\frac{(\delta t-t)d_{\max}}{2\delta t}+(\ubar{\lambda}+\xi_i(t))v_{\max}t)$, $\forall t\in[0,\delta t]$  when $\N_i^{m+1}=\emptyset$ and $\N_{\ell}^{m+1}=\emptyset$ for all $\ell\in\N_i$, or $|k_{i,\bf{l}_i,1}(t,x_{i}(t),\bf{d}_j(t))| \le L_{1}\sqrt{N_{\max}}(\frac{d_{\max}(\delta t-t)}{2\delta t}+(cM+\bar{\lambda}v_{\max})t)+L_{2}(\frac{(\delta t-t)d_{\max}}{2\delta t}+(\ubar{\lambda}+\xi_i(t))v_{\max}t)$, $\forall t\in[0,\delta t]$ otherwise, and $|k_{i,\bf{l}_i,3}(t,w_i)|\le (\ubar{\lambda}+\xi_i(t))v_{\max} $,$\forall t\in[0,\delta t]$, $w_i\in W$, respectively. Hence, instead of \eqref{condition:dmax:vmax:plan}, we need to verify that
\begin{align}
 L_{1}\sqrt{N_{\max}} & \left(\frac{d_{\max}(\delta t-t)}{2\delta t}+(cM+\bar{\lambda}v_{\max})t\right)+\frac{d_{\max}}{2\delta t} \nonumber \\
 +L_{2} & \left(\frac{(\delta t-t)d_{\max}}{2\delta t}+(\ubar{\lambda}+\xi(t))v_{\max}t\right)+(\ubar{\lambda}+\xi(t))v_{\max}\le v_{\max}, \forall t\in[0,\delta t], \label{condition:dmax:vmax:plan:cor:a}
\end{align}

\noindent when $\N_i^{m+1}=\emptyset$ and $\N_{\ell}^{m+1}=\emptyset$ for all $\ell\in\N_i$, and 
\begin{align}
 L_{1}\sqrt{N_{\max}} & \left(\frac{d_{\max}(\delta t-t)}{2\delta t}+\bar{\lambda}v_{\max}t\right)+\frac{d_{\max}}{2\delta t} \nonumber \\
 +L_{2} & \left(\frac{(\delta t-t)d_{\max}}{2\delta t}+(\ubar{\lambda}+\xi(t))v_{\max}t\right)+(\ubar{\lambda}+\xi(t))v_{\max}\le v_{\max}, \forall t\in[0,\delta t], \label{condition:dmax:vmax:plan:cor:b}
\end{align}

\noindent otherwise. By taking into account \eqref{zeta:selection:2}, \eqref{xi:dfn}, \eqref{ALeft} and \eqref{ARight}, it follows that both \eqref{condition:dmax:vmax:plan:cor:a} and \eqref{condition:dmax:vmax:plan:cor:b} are fulfilled and the proof is complete.
\end{proof}

\begin{rem}
\textit{(i)} The exploitation of the remaining available control in Corollary \ref{corollary:planning}, after a specific space-time discretization has been selected, is illustrated in Fig. \ref{fig:remaining:control}. Specifically, Theorem \ref{discretizations:for:planning} provides for each agent $i$ sufficient conditions on the space-time discretization, which guarantee that the agent can reach a ball of radius $\ubar{\lambda}v_{\max}\delta t$, given that its neighbors may exploit the larger part $\bar{\lambda}v_{\max}$ of their free input for reachability purposes. Since the same reasoning is applied for all the agents, once a discretization has been chosen, the reachable ball of each agent can be increased by exploiting the remaining part $(\bar{\lambda}-\ubar{\lambda})v_{\max}$ of the free input, as long the latter does not violate consistency with the total magnitude of the feedback laws (to be bounded by $v_{\max}$). Thus, the lengths of the left and right bases of the upper trapezoid in Fig. \ref{fig:remaining:control}(left) are  $A^Lv_{\max}$ and  $A^R_iv_{\max}$, respectively, and determine the linearly dependent on $t$ available control over $[0,\delta t]$, as depicted in Fig. \ref{fig:remaining:control}(bottom right). Based on this available control, $\zeta(\cdot)$ is then selected under the conditions that its effect in the control modification does not exceed this threshold, and that it is bounded by $\bar{\lambda}v_{\max}$, as illustrated in Fig. \ref{fig:remaining:control}(up right). 

\noindent \textit{(ii)} Assume that heterogeneous Lipschitz constants $L_1(i)$, $L_2(i)$ and $M(i)$ are given for each agent $i\in\N$, which are upper bounded by $L_1$, $L_2$ and $M$, respectively, as the latter are considered in the report. Then the result of Corollary \ref{corollary:planning} remains valid with the  constants $L_1$, $L_2$, $M$ and $N_{\max}$ in \eqref{ALeft} and \eqref{ARight} being replaced by $L_1(i)$, $L_2(i)$, $M(i)$ and $\N_i$, respectively. Hence, it is possible to obtain improved reachability properties for agents with weaker couplings in their dynamics.
\end{rem}

\begin{figure}[H]
\begin{center}
\begin{tikzpicture} [scale=1]


\fill[draw=black, fill=green, opacity=0.5] (0,0)  -- (5,0) -- (5,3.3) -- (0,0);
\fill[draw=black, fill=blue, opacity=0.5] (0,0)  -- (5,3.3) -- (5,4) -- (0,.7) -- (0,0);
\fill[draw=black, fill=magenta, opacity=0.5] (5,4) -- (0,.7) -- (0,3) -- (5,4);
\fill[draw=black, fill=red, opacity=0.5] (0,3) -- (5,4) -- (5,5) -- (0,5) -- (0,3);
\fill[draw=black, fill=yellow, opacity=0.7] (0,5) -- (5,5) -- (5,6) -- (0,6) -- (0,5);

\node (irrelevant) at (2.5,2) [label=center:$\frac{d_{\max}}{2\delta t}$] {};

\node (irrelevant) at (2.9,.55) 
{$\begin{aligned}
\scriptstyle{(L_1\sqrt{N_{\max}}(cM} & \scriptstyle{+\bar{\lambda}v_{\max})}  
\\[-.5em]   & \scriptstyle{+L_{2}\ubar{\lambda}v_{\max})t} 
\\[-.5em]  \scriptstyle{{\rm or} \;(L_1\sqrt{N_{\max}}\bar{\lambda}v_{\max}} & \scriptstyle{+L_{2}\ubar{\lambda}v_{\max})t} \end{aligned}$} [label=center];

\node (irrelevant) at (1.45,2.6) {$\begin{aligned}
\textstyle{\frac{d_{\max}}{2\delta t}} & \scriptstyle{(L_1\sqrt{N_{\max}}+L_2)}  \\[-.5em] 
\scriptstyle{\times} & \scriptstyle{(\delta t-t)} 
\end{aligned}$} [label=center];

\draw[line width=.07cm,->,>=stealth ] (-.5,0)  -- (6,0);

\draw[line width=.07cm,->,>=stealth ] (0,-.5)  -- (0,7);
\draw[line width=.04cm,dashed] (0,5)  -- (6,5);
\draw[line width=.04cm,dashed] (5,0)  -- (5,7);


\node (irrelevant) at (5,0) [label=below:$\delta t$] {};
\fill[black] (5,0) circle (2pt);
\node (irrelevant) at (0,5) [label=above right:$(1-\ubar{\lambda})v_{\max}$] {};
\fill[black] (0,5) circle (2pt);
\node (irrelevant) at (0,6) [label=left:$v_{\max}$] {};
\fill[black] (0,6) circle (2pt);

\fill[draw=black, fill=red, opacity=0.8] (7,0)  -- (12,0) -- (12,1) -- (7,2) -- (7,0);


\fill[draw=black, fill=yellow, opacity=0.7] (7,4) -- (12,4) -- (12,5) -- (7,5) -- (7,4);

\fill [draw=black, fill=red, opacity=0.7, domain=2:5, variable=\x]
      (7, 5) 
      -- (7, {(2*(1-2/5)+2/5)/(1*(1+2/5))+5})
      -- ({2+7}, {(2*(1-2/5)+2/5)/(1*(1+2/5))+5})
      -- plot ({\x+7}, {(2*(1-\x/5)+\x/5)/(1*(1+\x/5))+5})
      -- (12, 5)
      -- (7, 5);

\fill [draw=black, fill=red, opacity=0.3, domain=0:5, variable=\x]
      (7, 5)
      -- plot ({\x+7}, {(2*(1-\x/5)+\x/5)/(1*(1+\x/5))+5})
      -- (12, 5)
      -- (7, 5);

\node (irrelevant) at (9.5,4.5) [label=center:$\ubar{\lambda}v_{\max}$] {};
\node (irrelevant) at (9.5,5.5) [label=center:$\xi_i(t)v_{\max}$] {};

\draw[line width=.07cm,->,>=stealth ] (6.5,0)  -- (13,0);
\draw[line width=.04cm,dashed] (7,1)  -- (13,1);
\draw[line width=.04cm,dashed] (7,2)  -- (13,2);
\draw[line width=.04cm,dashed] (7,5)  -- (13,5);
\draw[line width=.04cm,dashed] (7,{(2*(1-2/5)+2/5)/(1*(1+2/5))+5})  -- (13,6);

\draw[line width=.07cm,->,>=stealth ] (6.5,4)  -- (13,4);

\draw[line width=.07cm,->,>=stealth ] (7,-.5)  -- (7,3);
\draw[line width=.04cm,dashed] (12,0)  -- (12,3);

\draw[line width=.07cm,->,>=stealth ] (7,3.5)  -- (7,8);


\node (irrelevant) at (12,0) [label=below:$\delta t$] {};
\fill[black] (12,0) circle (2pt);
\node (irrelevant) at (12,4) [label=below:$\delta t$] {};
\fill[black] (12,4) circle (2pt);
\node (irrelevant) at (7,5) [label=above left:$\ubar{\lambda}v_{\max}$] {};
\fill[black] (7,5) circle (2pt);
\node (irrelevant) at (7,{(2*(1-2/5)+2/5)/(1*(1+2/5))+5}) [label=above left:$\bar{\lambda}v_{\max}$] {};
\fill[black] (7,{(2*(1-2/5)+2/5)/(1*(1+2/5))+5}) circle (2pt);
\node (irrelevant) at (7,7) [label=above left:$\zeta_i(0)v_{\max}$] {};
\fill[black] (7,7) circle (2pt);

\node (irrelevant) at (7,1) [label=left:$A^R_iv_{\max}$] {};
\fill[black] (7,1) circle (2pt);
\node (irrelevant) at (7,2) [label=left:$A^Lv_{\max}$] {};
\fill[black] (7,2) circle (2pt);

\end{tikzpicture}
\end{center}
\caption{Illustration of the bounds on the feedback terms and the exploitation of the remaining available control through the modification of $\zeta_i$ in \eqref{zeta:selection:2}-\eqref{ARight}.}  \label{fig:remaining:control}
\end{figure}
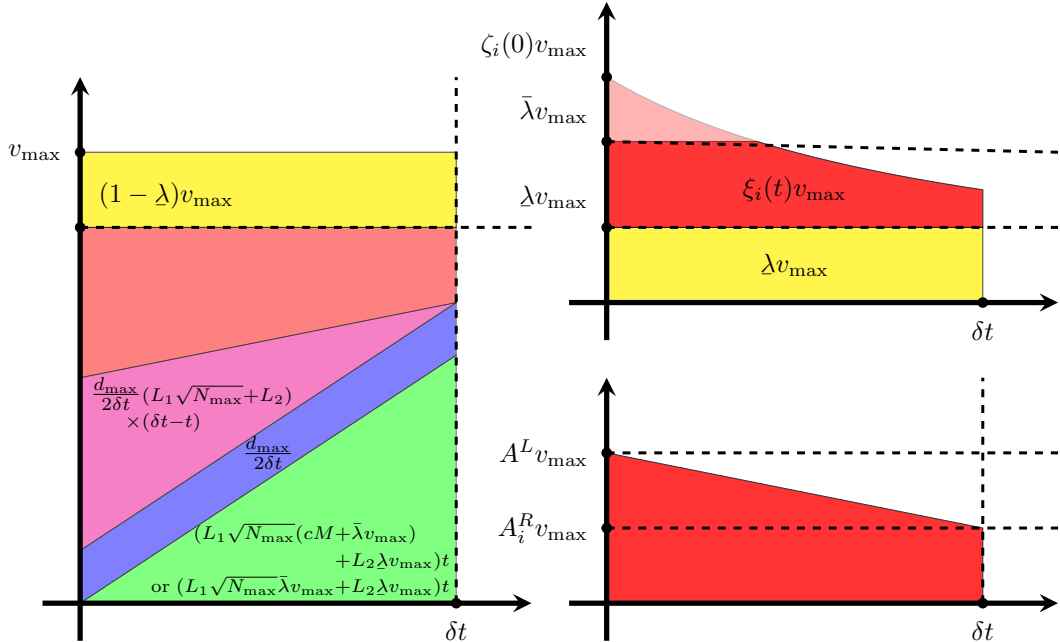

We next provide an improved version of Theorem \ref{discretizations:for:planning}, for the case where the conditions of Lemma \ref{lemma:zero:deviation} are satisfied for all agents, namely, when for every possible $m$-cell configuration of each agent, the estimate of its neighbors' reference trajectories and their reference trajectories corresponding to consistent configurations coincide.

\begin{thm} \label{discretizations:for:planning:zero:dev}
Assume that $\N_i^{m+1}=\emptyset$ holds for all $i\in\N$. Then, the result of Theorem \ref{discretizations:for:planning} remains valid for any $\delta t$ and $d_{\max}$ satisfying
\begin{align}
\delta t & \in\left(0,\frac{(1-\ubar{\lambda})v_{\max}}{L_1\sqrt{N_{\max}}\bar{\lambda}v_{\max}+\ubar{\lambda}L_2v_{\max}}\right) \label{deltat:interval2} \\
d_{\max} & \in\left(0,\min\left\lbrace \frac{2(1-\ubar{\lambda})v_{\max}\delta t}{1+(L_1\sqrt{N_{\max}}+L_2)\delta t}, 2(1-\ubar{\lambda})v_{\max}\delta t \right.\right. \nonumber \\
& \left.\left.\qquad\qquad\qquad-2(L_1\sqrt{N_{\max}}\bar{\lambda}v_{\max}+\ubar{\lambda}L_2v_{\max}) \delta t^2 \right\rbrace\right]. \label{dmax:interval2}
\end{align}
\end{thm}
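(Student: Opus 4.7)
The plan is to run the argument of Theorem~\ref{discretizations:for:planning} almost verbatim, exploiting the stronger structural assumption to eliminate the $cM$-term arising from the neighbor reference trajectory deviation. First I would observe that under the hypothesis $\N_i^{m+1}=\emptyset$ for every $i\in\N$, Lemma~\ref{lemma:zero:deviation} applies to every pair $(i,\ell)$ with $\ell\in\N_i$: both $\N_i^{m+1}$ and $\N_{\ell}^{m+1}$ vanish, and hence for any two consistent $m$-cell configurations $\bf{l}_i$ and $\bf{l}_{\ell}$ we have $\chi_{\ell}^{[i]}(t)=\chi_{\ell}^{[\ell]}(t)$ for all $t\ge 0$. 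Consequently, the identically zero function $\alpha_i(\cdot)\equiv 0$ is a valid choice of neighbor reference trajectory deviation bound satisfying \eqref{alpha:property} for every agent. This in turn makes the invocation of Proposition~\ref{proposition:neighbor:rt:deviation} (and therefore the whole linearization \eqref{alpha:dfn}) unnecessary, so the constraint $\delta t\le\bar{t}$ that was imposed in \eqref{deltat:interval} solely in order to guarantee \eqref{function:Hm:bound} is no longer needed and can be dropped.

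Next I would pick $\beta(\cdot)$ as in \eqref{beta:dfn}, which still satisfies \eqref{beta:property}, and mirror the construction of the feedback laws from the proof of Theorem~\ref{discretizations:for:planning}. Namely, for each $i\in\N$, each $m$-cell configuration $\bf{l}_i$, each target point $x\in B(\chi_i(\delta t);r_i)$ and each cell $S_{l_i'}$ containing $x$, I would set $w_i$ as in \eqref{vector:wi} with $\zeta_i(t)\equiv\ubar{\lambda}$, and use the feedback \eqref{feedback:ki}--\eqref{feedback:ki3:plan}. Steps~1 and~2 of the original proof then carry over without modification, except that in the bound on $k_{i,\bf{l}_i,1}$ the summand $\alpha(t)+\beta(t)$ becomes simply $\beta(t)$ (since $\alpha_i\equiv 0$), which amounts to setting $cM=0$ everywhere.

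The final step is then to verify the bound \eqref{controler:consistency}, which after the substitution $cM\mapsto 0$ reduces to
\begin{align*}
& L_{1}\sqrt{N_{\max}}\left(\tfrac{d_{\max}(\delta t-t)}{2\delta t}+\bar{\lambda}v_{\max}t\right)+\tfrac{d_{\max}}{2\delta t} \\
& \qquad +L_{2}\left(\tfrac{(\delta t-t)d_{\max}}{2\delta t}+\ubar{\lambda}v_{\max}t\right)+\ubar{\lambda}v_{\max}\le v_{\max},\quad\forall t\in[0,\delta t].
\end{align*}
Since the left-hand side is affine in $t$, it suffices to check the endpoints $t=0$ and $t=\delta t$. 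The former evaluates to the first entry in the minimum of \eqref{dmax:interval2}, and the latter yields the second entry, while the requirement that this second entry be strictly positive is precisely \eqref{deltat:interval2}. Both are by hypothesis satisfied, and the conclusion \eqref{planning:condition} follows as in Theorem~\ref{discretizations:for:planning}.

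There is no real obstacle: the argument is essentially a bookkeeping exercise that propagates $\alpha_i\equiv 0$ (equivalently $c=0$) through the estimates of the original proof. The only point deserving care is to verify rigorously that Lemma~\ref{lemma:zero:deviation} does apply to every $(i,\ell)$ pair with $\ell\in\N_i$, and that the removal of the $\bar{t}$ constraint is genuinely justified by the disappearance of $\alpha(\cdot)$ from the analysis rather than by any hidden use of $H_m(\cdot)$; once this is observed, \eqref{deltat:interval2} and \eqref{dmax:interval2} are obtained by direct substitution.
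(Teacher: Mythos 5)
Your proposal is correct and follows essentially the same route as the paper: invoke Lemma~\ref{lemma:zero:deviation} (applicable to every pair $i\in\N$, $\ell\in\N_i$ since $\N_i^{m+1}=\N_{\ell}^{m+1}=\emptyset$ by hypothesis) to get $\chi_{\ell}^{[i]}\equiv\chi_{\ell}^{[\ell]}$, take $\alpha_i(\cdot)\equiv 0$, and rerun the proof of Theorem~\ref{discretizations:for:planning} with $cM$ replaced by $0$. The paper states this in two sentences and omits the bookkeeping; your endpoint checks at $t=0$ and $t=\delta t$ and the justification for dropping the $\bar{t}$ constraint are exactly the omitted details.
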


\begin{proof}
By exploiting Lemma \ref{lemma:zero:deviation} and the fact that according to the hypothesis of the theorem it holds $\N_i^{m+1}=\emptyset$, $\N_{\ell}^{m+1}=\emptyset$ for each pair of agents $i\in\N$, $\ell\in\N_i$, it follows that for corresponding consistent cell configurations $\bf{l}_i$ and $\bf{l}_{\ell}$ it holds $\chi_{\ell}^{[i]}(t)=\chi_{\ell}^{[\ell]}(t)$, $\forall t\ge 0$.
Thus, we can select for each agent $i\in\N$ the neighbor reference trajectory deviation bound $\alpha_i(\cdot)\equiv 0$. The rest of the proof follows the same arguments employed for the proof of Theorem \ref{discretizations:for:planning} and is therefore omitted.
\end{proof}

As in Corollary \ref{corollary:planning}, the following corollary presents the analogous improved reachability result when $d_{\max}$ and $\delta t$ are selected by Theorem \ref{discretizations:for:planning:zero:dev}.

\begin{corollary} \label{corollary:planning2}
Assume that the hypotheses of Theorem \ref{discretizations:for:planning:zero:dev} are fulfilled and let any $\delta t$ and $d_{\max}$ that satisfy \eqref{deltat:interval2} and \eqref{dmax:interval2}. Then, the result of Corollary \ref{corollary:planning} is valid with $A_i^R=A^R:=(1-\ubar{\lambda})v_{\max}-\frac{d_{\max}}{2\delta t}-L_{2}\ubar{\lambda}v_{\max}\delta t-L_{1}\sqrt{N_{\max}}\bar{\lambda}v_{\max}\delta t$.
\end{corollary}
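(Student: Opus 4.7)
The plan is to show that Corollary \ref{corollary:planning2} follows by a direct specialization of the proof of Corollary \ref{corollary:planning} under the stronger hypothesis $\N_i^{m+1}=\emptyset$ for all $i\in\N$. Concretely, since this hypothesis holds simultaneously for each $i$ and for every $\ell\in\N_i$, Lemma \ref{lemma:zero:deviation} implies that for any $m$-cell configuration $\bf{l}_i$ of $i$ and any consistent $m$-cell configuration $\bf{l}_{\ell}$ of $\ell\in\N_i$ we have $\chi_{\ell}^{[i]}(t)=\chi_{\ell}^{[\ell]}(t)$ for all $t\ge 0$. Hence the neighbor reference trajectory deviation bound \eqref{alpha:property} is satisfied trivially with $\alpha_i(\cdot)\equiv 0$ for every $i\in\N$, so the term $cM$ that appeared in the "otherwise" branch of \eqref{ARight} can be dropped across the board, and the first branch of \eqref{ARight} applies uniformly. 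This yields exactly the stated common value
$$A^R=(1-\ubar{\lambda})v_{\max}-\tfrac{d_{\max}}{2\delta t}-L_{2}\ubar{\lambda}v_{\max}\delta t-L_{1}\sqrt{N_{\max}}\bar{\lambda}v_{\max}\delta t.$$

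Next I would verify that the interval restrictions \eqref{deltat:interval2} and \eqref{dmax:interval2} are precisely the ones needed to guarantee $A^L\ge 0$ and $A^R\ge 0$ under these simplified bounds. The first inequality in \eqref{dmax:interval2}, namely $d_{\max}\le \frac{2(1-\ubar{\lambda})v_{\max}\delta t}{1+(L_1\sqrt{N_{\max}}+L_2)\delta t}$, is equivalent to $A^L\ge 0$ (this is the same condition as in Corollary~\ref{corollary:planning}, since the $t=0$ bound is unaffected by the vanishing of $\alpha_i$). The second inequality $d_{\max}\le 2(1-\ubar{\lambda})v_{\max}\delta t-2(L_1\sqrt{N_{\max}}\bar{\lambda}v_{\max}+\ubar{\lambda}L_2v_{\max})\delta t^2$ together with the bound \eqref{deltat:interval2} on $\delta t$ is equivalent to $A^R\ge 0$; the latter is the natural relaxation of \eqref{dmax:interval} and \eqref{deltat:interval} once the $cM$ contribution is removed.

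With $\alpha_i\equiv 0$ and $A^R$ as above, I would then mirror the proof of Corollary \ref{corollary:planning} step by step: choose $\beta(\cdot)$ as in \eqref{beta:dfn}, select $\zeta_i(\cdot)=\ubar{\lambda}+\xi_i(\cdot)$ with $\xi_i$ as in \eqref{xi:dfn} (using the new $A^R$), pick $w_i=(x-\chi_i(\delta t))/\int_0^{\delta t}\zeta_i(s)ds$ for any $x\in B(\chi_i(\delta t);r_i)$, and employ the feedback $k_{i,\bf{l}_i}$ from \eqref{feedback:ki}. The bounds on $k_{i,\bf{l}_i,1}$, $k_{i,\bf{l}_i,2}$, $k_{i,\bf{l}_i,3}$ are estimated exactly as in Step~2 of the proof of Theorem \ref{discretizations:for:planning}, but with the contribution $L_1\sqrt{N_{\max}}(cM+\bar{\lambda}v_{\max})t$ replaced by $L_1\sqrt{N_{\max}}\bar{\lambda}v_{\max}t$ since $\alpha_i\equiv 0$. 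Summing these bounds reduces the verification of \eqref{controler:consistency} to checking the analogue of \eqref{condition:dmax:vmax:plan:cor:b} for all $t\in[0,\delta t]$; by linearity in $t$ this reduces to the endpoints $t=0$ and $t=\delta t$, which give back the defining inequalities for $A^L\ge 0$ and $A^R\ge 0$, respectively.

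There is no real obstacle here beyond careful bookkeeping: the hypothesis $\N_i^{m+1}=\emptyset$ for every $i$ is strong enough that Lemma \ref{lemma:zero:deviation} applies to every agent and every neighbor simultaneously, so the $\alpha_i$ terms vanish uniformly and the two-case structure of \eqref{ARight} collapses to a single expression. The bookkeeping check that the new intervals \eqref{deltat:interval2}--\eqref{dmax:interval2} are exactly what is needed to preserve nonnegativity of $A^L$ and $A^R$ in the simplified setting is routine; all other arguments of Corollary \ref{corollary:planning} (continuity of the resulting feedback, membership $w_i\in W$, the consistency bound \eqref{xi:consistency:bounds}, and the terminal condition $x_i(\delta t)=x\in S_{l_i'}$) carry over verbatim.
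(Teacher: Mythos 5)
Your proposal is correct and follows exactly the route the paper intends: the paper states Corollary \ref{corollary:planning2} without proof, leaving it as the direct combination of Theorem \ref{discretizations:for:planning:zero:dev} (which sets $\alpha_i\equiv 0$ via Lemma \ref{lemma:zero:deviation}) with the proof of Corollary \ref{corollary:planning}, so that the first branch of \eqref{ARight} applies uniformly and the interval conditions \eqref{deltat:interval2}--\eqref{dmax:interval2} yield $A^L\ge 0$ and $A^R\ge 0$. Your bookkeeping matches what the paper's omitted argument would contain.
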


\section{Example and Simulation Results}

As an illustrative example we consider a system of four agents in $\Rat{2}$. Their dynamics are given as
\begin{align*}
\dot{x}_1 & = {\rm sat}_{\rho}(x_2-x_1)+v_1,  \\
\dot{x}_2 & = v_2  \\
\dot{x}_3 & = {\rm sat}_{\rho}(x_2-x_3)+v_3 \\
\dot{x}_4 & = {\rm sat}_{\rho}(x_3-x_4)+v_4
\end{align*}

\noindent where the saturation function ${\rm sat}_{\rho}:\Rat{2}\to\Rat{2}$ is defined as
\begin{equation*}
{\rm sat}_{\rho}(x):=\left\lbrace
\begin{array}{ll}
x, &  {\rm if}\; x<\rho \\
\frac{\rho}{|x|}x, &  {\rm if}\; x\ge\rho
\end{array}
\right.,x\in\Rat{2}.
\end{equation*}
\noindent The agents' neighbors' sets in this example are $\N_1=\{2\}$, $\N_2=\emptyset$, $\N_3=\{2\}$, $\N_4=\{3\}$ and specify the corresponding network topology. The constant $\rho>0$ above represents a bound on the distance between agents 1, 2, and agents 2, 3, that we will require the system to satisfy during its evolution. It is not hard to show that this is possible if we select $v_{\max}:=\frac{\rho}{2}$. Hence we obtain the dynamics bounds and Lipschitz constants $M:=\rho$, $L_1:=1$ and $L_2:=1$ for all agents. By selecting the degree of decentralization $m=2$, it follows that the conditions of Theorem \ref{discretizations:for:planning:zero:dev} are satisfied. By choosing the constant $\bar{\lambda}=1$ in \eqref{zeta:and:lambdas}  we obtain from \eqref{dmax:interval2} and \eqref{deltat:interval2} (notice also that $N_{\max}=1$) that
$$
0<\delta t <\frac{1-\ubar{\lambda}}{1+\ubar{\lambda}}  \\
$$

\noindent and
$$
0<d_{\max} \le\min\left\lbrace \frac{(1-\ubar{\lambda})\delta t}{1+2\delta t}\rho, ((1-\ubar{\lambda})\delta t-(1+\ubar{\lambda}) \delta t^2)\rho\right\rbrace.
$$

\noindent By picking $\delta t=\frac{1-\ubar{\lambda}}{2(1+\ubar{\lambda})}$ we obtain that
$$
0<d_{\max}\le\frac{(1-\ubar{\lambda})^2}{4(1+\ubar{\lambda})}\rho.
$$

\noindent For the simulation results we pick $\ubar{\lambda}=0.4$, $\rho=10$ and evaluate the corresponding value of $\delta t$ above and the maximum acceptable value for the diameter $d_{\max}$ of the cells. We partition the workspace $[-10,10]\times[-10,10]\subset\Rat{2}$ into squares and select the initial conditions of the agents 1, 2, 3 and 4 at $(9,4)$, $(4,4)$, $(-6,6)$ and $(-9,-4)$, respectively, which respect the desired initial bound on the distance between agents 1, 2, and agents 2, 3. We assume that agent 2, which is unaffected by the coupled constraints has constant velocity $v_2=(-1,-4)$ and study reachability properties of the system over the time interval $[0,2]$. The sampled trajectory of agent 2 is visualized with the circles in the figure below, and the blue, magenta cells indicate the union of agent's 3 and 1 reachable cells over the time interval, respectively, given the trajectory of 2. Finally,
given the trajectory of 2 and selecting the discrete trajectory of 3 which is depicted with the red cells in the figure, we obtain with yellow the corresponding reachable cells of agent 4. The simulation results have been implemented in MATLAB with a running time of the order of a few seconds, on a PC with an Intel(R) Core(TM) i7-4600U CPU @ 2.10GHz processor.

\begin{figure}[H]
\includegraphics[width =0.7\textwidth]{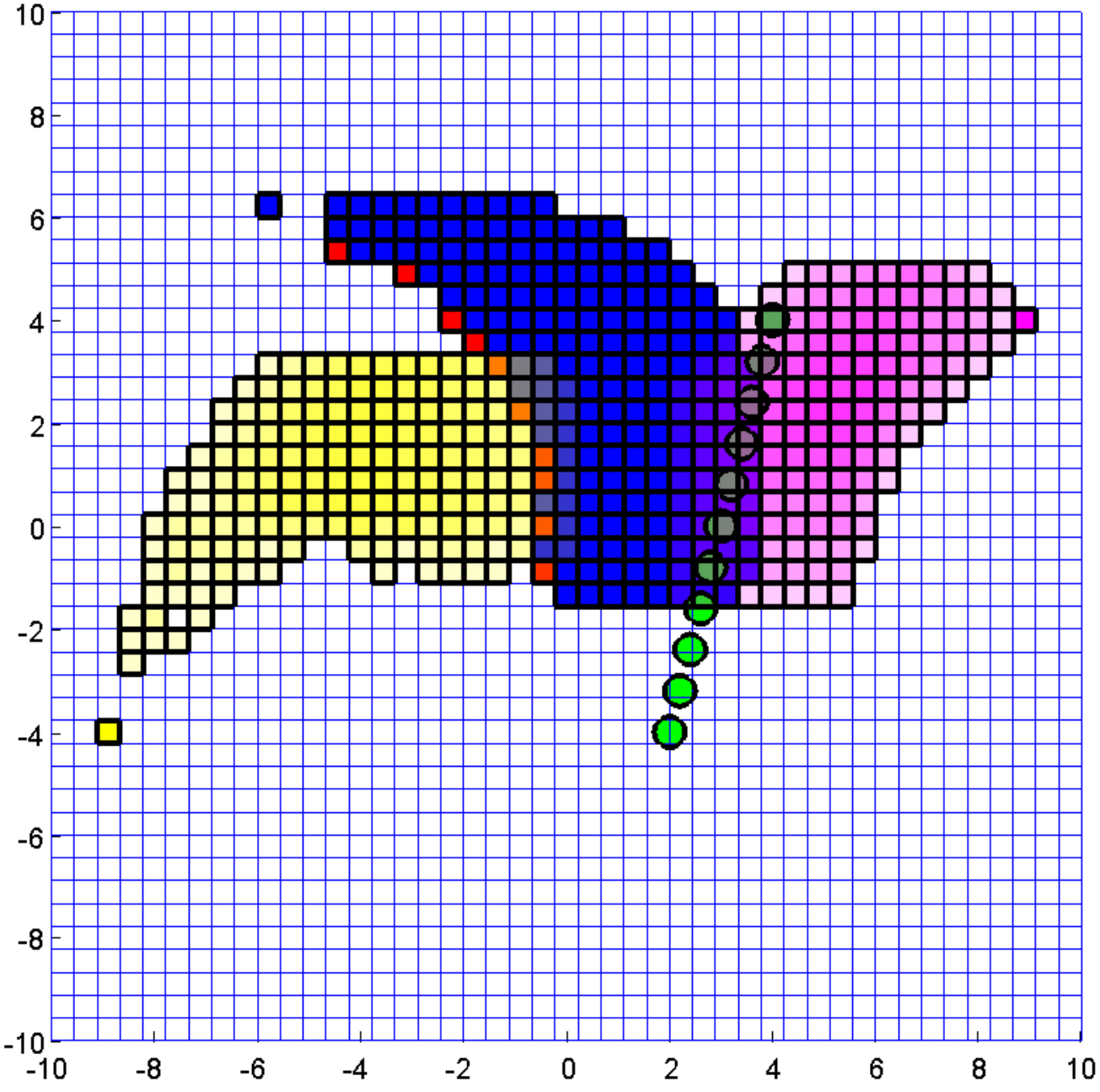}
\caption{Reachable cells of agents 1, 3 and 4, given the trajectory of agent 2 in green.}
\end{figure}

\section{Conclusions}

We have provided an abstraction framework for multi-agent systems which is based on a varying degree of decentralization for the information of each agent with respect to the graph topology of the network. Well posed transitions have been modeled by exploiting a system with disturbances in order to model the possible evolution of the agents neighbors. Sufficient conditions on the space and time discretization have been quantified in order to capture the reachability properties of the symbolic models through well defined transitions. The latter are realized by means of hybrid feedback control laws which take into account the coupled constraints and navigate the agents to their successor cells.

Ongoing work includes the improvement of the agents' reachability properties, based on their local dynamics bounds and Lipschitz constants which will enable the exploitation of a larger part of the free inputs for the transitions to successor cells. In addition we aim at the formulation of an online abstraction framework for heterogeneous agents with updated choices of the discretization and planning parameters.

\section{Appendix}

\subsection{Proof of Lemma \ref{lemma:reftraj:worst:case:dev}.}
\begin{proof}
For the proof, it suffices to show that \eqref{reftraj:deviation:all}  holds for all agents $\kappa\in\bar{\N}_i^m\cap\bar{\N}_{\ell}^m$, since by Lemma~\ref{lemma:mneighbor:contains:mmin1}(i) we have $\bar{\N}_{\ell}^{m-1}\subset\bar{\N}_i^m$. We distinguish the following cases.

\noindent \textit{Case (i).}  $\N_i^{m+1}\ne\emptyset$ and  $\N_{\ell}^{m+1}\ne\emptyset$. For Case (i) we consider the following subcases for each agent $\kappa\in\bar{\N}_i^m\cap\bar{\N}_{\ell}^m$.

\noindent \textit{Case (ia).} $\kappa\in(\bar{\N}_i^m\cap\bar{\N}_{\ell}^m)\cap(\N_i^m\cup\N_{\ell}^m)$. In this case, it follows from \eqref{reference:IVP2:constant:terms} that either  $\chi_{\kappa}^{[i]}(\cdot)\equiv x_{l_{\kappa},G}$ or $\chi_{\kappa}^{[\ell]}(\cdot)\equiv x_{l_{\kappa},G}$. Without any loss of generality we assume that $\kappa\in\bar{\N}_{\ell}^{m-1}$ and thus, that $\chi_{\kappa}^{[i]}(\cdot)\equiv x_{l_{\kappa},G}$. Then, we get from \eqref{dynamics:bound}, the initial value problem which specifies $\chi_{\kappa}^{[\ell]}(\cdot)$, as given by \eqref{reference:IVP2}, and the consistency of $\bf{l}_{\ell}$ with $\bf{l}_i$, which implies $\chi_{\kappa}^{[i]}(0)=\chi_{\kappa}^{[\ell]}(0)=x_{l_{\kappa},G}$, that
\begin{equation}  \label{reftraj:deviation:CaseA}
|\chi_{\kappa}^{[i]}(t)-\chi_{\kappa}^{[\ell]}(t)|=|x_{l_{\kappa},G}-\chi_{\kappa}^{[\ell]}(t)|\le \int_0^t|f_{\kappa}(\chi_{\kappa}^{[\ell]}(s),\mbf{\chi}_{j(\kappa)}^{[\ell]}(s))|ds\le Mt,\forall t\ge 0.
\end{equation}

\noindent \textit{Case (ib).} $\kappa\in(\bar{\N}_i^m\cap\bar{\N}_{\ell}^m)\setminus(\N_i^m\cup\N_{\ell}^m)$. Notice first, that $(\bar{\N}_i^m\cap\bar{\N}_{\ell}^m)\setminus(\N_i^m\cup\N_{\ell}^m)\subset\bar{\N}_i^{m-1}\cap\bar{\N}_{\ell}^{m-1}$ and thus, for each agent $\kappa\in(\bar{\N}_i^m\cap\bar{\N}_{\ell}^m)\setminus(\N_i^m\cup\N_{\ell}^m)$ we have that $\kappa\in\bar{\N}_i^{m-1}$ and $\kappa\in\bar{\N}_{\ell}^{m-1}$. Hence, we obtain from Lemma~\ref{lemma:mneighbor:contains:mmin1}(ii) that $\N_{\kappa}\subset\bar{\N}_i^m$ and $\N_{\kappa}\subset\bar{\N}_{\ell}^m$, respectively, implying that $\N_{\kappa}\subset\bar{\N}_i^m\cap\bar{\N}_{\ell}^m$. Consequently, it follows from Definition~\ref{reference:trajectories} that both $\mbf{\chi}_{j(\kappa)}^{[i]}(\cdot)$ and $\mbf{\chi}_{j(\kappa)}^{[\ell]}(\cdot)$ are well defined. In order to show \eqref{reftraj:deviation:all} for all $\kappa$ of Case~(ib) we will prove the following claim.

\noindent \textit{Claim I}. There exists a $\delta\in(0,t^*)$, such that \eqref{reftraj:deviation:all} holds for all $t\in [0,\delta]$ and all $\kappa$ of Case (ib).

\noindent In order to show Claim I, we select 
\begin{equation} \label{time:delat}
\delta:=\min\left\lbrace \frac{1}{4L_1\sqrt{N_{\max}}},\frac{\ln2}{L_2}\right\rbrace
\end{equation}

\noindent and any $\kappa\in(\bar{\N}_i^m\cap\bar{\N}_{\ell}^m)\setminus(\N_i^m\cup\N_{\ell}^m)$. By recalling that $\mbf{\chi}_{j(\kappa)}^{[i]}(\cdot)$ and $\mbf{\chi}_{j(\kappa)}^{[\ell]}(\cdot)$ are well defined and exploiting the dynamics bound \eqref{dynamics:bound}, \eqref{Nmax}, \eqref{time:delat}, and that $\bf{l}_{\ell}$ is consistent with $\bf{l}_i$, which by virtue of the fact that  $\N_{\kappa}\subset\bar{\N}_i^m\cap\bar{\N}_{\ell}^m$ implies that $\chi_{\nu}^{[\ell]}(0)=\chi_{\nu}^{[i]}(0)=x_{l_{\nu},G}$ for all $\nu\in\N_{\kappa}$, we deduce that
\begin{align} \label{neighbor:difference:bound1}
|\mbf{\chi}_{j(\kappa)}^{[i]}(t)-\mbf{\chi}_{j(\kappa)}^{[\ell]}(t)| & =\left(\sum_{\nu\in\N_{\kappa}}|\chi_{\nu}^{[i]}(t)-\chi_{\nu}^{[\ell]}(t)|^2\right)^{\frac{1}{2}}\le \left(\sum_{\nu\in\N_{\kappa}}(2Mt)^2\right)^{\frac{1}{2}} \nonumber \\
& \le 2M\sqrt{N_{\max}}t\le\frac{M}{2L_1},\forall t\in[0,\delta].
\end{align}

\noindent Next, by virtue of \eqref{dynamics:bound1} and \eqref{dynamics:bound2} we evaluate the difference
\begin{align}
|\chi_{\kappa}^{[i]}(t)-\chi_{\kappa}^{[\ell]}(t)| & \le \int_0^t|f_{\kappa}(\chi_{\kappa}^{[i]}(s),\mbf{\chi}_{j(\kappa)}^{[i]}(s))-f_{\kappa}(\chi_{\kappa}^{[\ell]}(s),\mbf{\chi}_{j(\kappa)}^{[\ell]}(s))|ds \nonumber \\
& \le \int_0^t(|f_{\kappa}(\chi_{\kappa}^{[i]}(s),\mbf{\chi}_{j(\kappa)}^{[i]}(s))-f_{\kappa}(\chi_{\kappa}^{[\ell]}(s),\mbf{\chi}_{j(\kappa)}^{[i]}(s))| \nonumber \\
& +|f_{\kappa}(\chi_{\kappa}^{[\ell]}(s),\mbf{\chi}_{j(\kappa)}^{[i]}(s))-f_{\kappa}(\chi_{\kappa}^{[\ell]}(s),\mbf{\chi}_{j(\kappa)}^{[\ell]}(s))|)ds \nonumber \\
& \le \int_0^t (L_2|\chi_{\kappa}^{[i]}(s)-\chi_{\kappa}^{[\ell]}(s)|+L_1|\mbf{\chi}_{j(\kappa)}^{[i]}(s)-\mbf{\chi}_{j(\kappa)}^{[\ell]}(s)|)ds. \label{tildexi:difference:integral:ineq}
\end{align}

\noindent Hence, from \eqref{reftraj:deviation:all} and \eqref{neighbor:difference:bound1} we obtain that for all $t\in[0,\delta]$ it holds
\begin{equation} \label{tildexi:difference:integral:ineq2}
|\chi_{\kappa}^{[i]}(t)-\chi_{\kappa}^{[\ell]}(t)| \le \frac{Mt}{2}+\int_0^tL_2|\chi_{\kappa}^{[i]}(s)-\chi_{\kappa}^{[\ell]}(s)| ds.
\end{equation}

\noindent In order to derive a bound for $|\chi_{\kappa}^{[i]}(\cdot)-\chi_{\kappa}^{[\ell]}(\cdot)|$ from the integral inequality above, we will use the following version of the Gronwall Lemma, whose proof is given later in the Appendix.

\noindent \textit{Fact I.} Let $\lambda:[a,b]\to\Rat{}$ be a continuously differentiable function satisfying $\lambda(a)=0$ and $\mu$ a nonnegative constant. If a continuous function $y(\cdot)$ satisfies
$$
y(t)\le \lambda(t)+\int_a^t\mu y(s)ds,
$$

\noindent then, on the same interval it holds
$$
y(t)\le \int_a^te^{\mu(t-s)}\dot{\lambda}(s)ds. \quad \triangleleft
$$

\noindent By exploiting Fact I, we obtain from \eqref{tildexi:difference:integral:ineq2} and \eqref{time:delat} that
\begin{equation} \label{tildexi:difference:ineq}
|\chi_{\kappa}^{[i]}(t)-\chi_{\kappa}^{[\ell]}(t)| \le\int_0^te^{L_2(t-s)}\frac{M}{2}ds\le\frac{M}{2}e^{L_2t}t\le\frac{M}{2}e^{L_2\frac{\ln2}{L_2}}t=Mt,\forall t\in[0,\delta].
\end{equation}

\noindent Hence, we have proved Claim I. $\triangleleft$

\noindent We proceed by showing that  \eqref{reftraj:deviation:all} also holds for all $t\in[0,t^*]$ and $\kappa$ of Case (ib).  Assume on the contrary that there exist $\kappa'\in(\bar{\N}_i^m\cap\bar{\N}_{\ell}^m)\setminus(\N_i^m\cup\N_{\ell}^m)$ and $T\in(0,t^*]$ such that
\begin{equation} \label{tildexi:difference:atT}
|\chi_{\kappa'}^{[i]}(T)-\chi_{\kappa'}^{[\ell]}(T)|>MT.
\end{equation}

\noindent By continuity of the functions $\chi_{\kappa}^{[i]}(\cdot)$, we can define
\begin{equation} \label{time:tau}
\tau:=\max\{\bar{t}\in[0,T]:|\chi_{\kappa}^{[i]}(t)-\chi_{\kappa}^{[\ell]}(t)|\le Mt,\forall t\in [0,\bar{t}],\kappa\in(\bar{\N}_i^m\cap\bar{\N}_{\ell}^m)\setminus(\N_i^m\cup\N_{\ell}^m)\}.
\end{equation}

\noindent Then, it follows from \eqref{tildexi:difference:ineq}, \eqref{tildexi:difference:atT} and continuity of the functions $\chi_{\kappa}^{[i]}(\cdot)$, $\chi_{\kappa}^{[\ell]}(\cdot)$, that there exists $\kappa''\in(\bar{\N}_i^m\cap\bar{\N}_{\ell}^m)\setminus(\N_i^m\cup\N_{\ell}^m)$ such that
\begin{equation} \label{tildexi:difference:attau}
|\chi_{\kappa''}^{[i]}(\tau)-\chi_{\kappa''}^{[\ell]}(\tau)|=M\tau.
\end{equation}

\noindent Also, from Claim I and \eqref{time:tau} we get that
\begin{equation} \label{tau:vs:deltat}
0<\tau<t^*.
\end{equation}

\noindent Next, by recalling that since $\kappa''\in(\bar{\N}_i^m\cap\bar{\N}_{\ell}^m)\setminus(\N_i^m\cup\N_{\ell}^m)$, it holds $\N_{\kappa''}\subset\bar{\N}_i^m\cap\bar{\N}_{\ell}^m$, it follows that for each neighbor $\nu\in\N_{\kappa''}$ of $\kappa''$ either  $\nu\in(\bar{\N}_i^m\cap\bar{\N}_{\ell}^m)\cap(\N_i^m\cup\N_{\ell}^m)$, or $\nu\in(\bar{\N}_i^m\cap\bar{\N}_{\ell}^m)\setminus(\N_i^m\cup\N_{\ell}^m)$. Thus, we deduce from \eqref{reftraj:deviation:CaseA} and  \eqref{time:tau}, respectively, that
\begin{equation} \label{tildexj:difference:case1}
|\chi_{\nu}^{[i]}(t)-\chi_{\nu}^{[\ell]}(t)|\le Mt, \forall t\in[0,\tau],\nu\in\N_{\kappa''}.
\end{equation}

\noindent It then follows from \eqref{tildexj:difference:case1} (by similar calculations as in \eqref{tildexi:difference:integral:ineq}) that
\begin{align*}
|\chi_{\kappa''}^{[i]}(\tau)-\chi_{\kappa''}^{[\ell]}(\tau)| & \le \int_0^{\tau} (L_2|\chi_{\kappa''}^{[i]}(s)-\chi_{\kappa''}^{[\ell]}(s)|+L_1|\mbf{\chi}_{j(\kappa'')}^{[i]}(s)-\mbf{\chi}_{j(\kappa'')}^{[\ell]}(s)|)ds \\
& = \int_0^{\tau} \left((L_2|\chi_{\kappa''}^{[i]}(s)-\chi_{\kappa''}^{[\ell]}(s)|+L_1\left(\sum_{\nu\in\N_{\kappa''}}|\chi_{\nu}^{[i]}(s)-\chi_{\nu}^{[\ell]}(s)|^2\right)^{\frac{1}{2}}\right)ds \\
& \le \int_0^{\tau} L_2|\chi_{\kappa''}^{[i]}(s)-\chi_{\kappa''}^{[\ell]}(s)|ds +\int_0^{\tau}L_1M\sqrt{N_{\kappa''}}sds.
\end{align*}

\noindent Hence, we obtain from Fact I and \eqref{Nmax} that
\begin{align}
|\chi_{\kappa''}^{[i]}(\tau)-\chi_{\kappa''}^{[\ell]}(\tau)| & \le \int_0^{\tau} e^{{L_2(\tau-s)}}L_1M\sqrt{N_{\kappa''}}sds \nonumber \\
& \le \int_0^{\tau} e^{{L_2(\tau-s)}}L_1M\sqrt{N_{\max}}sds  \nonumber \\
& =\frac{L_1}{L_2}M\sqrt{N_{\max}}\left(\frac{e^{L_2\tau}}{L_2}-\tau-\frac{1}{L_2}\right). \label{tildexi:difference:attau:ineq}
\end{align}

\noindent It can then be checked by elementary calculations that
\begin{equation} \label{exponential:inequality}
e^{L_2t}-\left(L_2+\frac{L_2^2}{L_1\sqrt{N_{\max}}}\right)t-1<0,\forall t\in(0,t^*),
\end{equation}

\noindent with $t^*$ as specified in \eqref{time:tstar}. Also, from
\eqref{tildexi:difference:attau} and \eqref{tildexi:difference:attau:ineq} we have that
\begin{align*}
M\tau & \le\frac{L_1}{L_2}M\sqrt{N_{\max}}\left(\frac{e^{L_2\tau}}{L_2}-\tau-\frac{1}{L_2}\right) \iff \\
\frac{L_2^2}{L_1\sqrt{N_{\max}}}\tau & \le e^{L_2\tau}-L_2\tau-1 \iff \\
e^{L_2\tau} & -\left(L_2+\frac{L_2^2}{L_1\sqrt{N_{\max}}}\right)\tau-1\ge 0,
\end{align*}

\noindent which contradicts \eqref{exponential:inequality}, since by \eqref{tau:vs:deltat}, it holds $0<\tau<t^*$. Thus, we have shown \eqref{reftraj:deviation:all} for Case (ib).

\noindent \textit{Case (ii).}  $\N_i^{m+1}\ne\emptyset$ and  $\N_{\ell}^{m+1}=\emptyset$. For Case (ii) we also consider the following subcases for each agent $\kappa\in\bar{\N}_i^m\cap\bar{\N}_{\ell}^m$.

\noindent \textit{Case (iia).} $\kappa\in(\bar{\N}_i^m\cap\bar{\N}_{\ell}^m)\cap\N_i^m$. In this case, it follows from \eqref{reference:IVP2:constant:terms} that $\chi_{\kappa}^{[i]}(\cdot)\equiv x_{l_{\kappa},G}$ and thus, by using similar arguments with Case (ia) that \eqref{reftraj:deviation:CaseA} is fulfilled.  

\noindent \textit{Case (iib).} $\kappa\in(\bar{\N}_i^m\cap\bar{\N}_{\ell}^m)\setminus\N_i^m$. Notice that $(\bar{\N}_i^m\cap\bar{\N}_{\ell}^m)\setminus\N_i^m\subset\bar{\N}_i^{m-1}\cap\bar{\N}_{\ell}^m$ and thus, for each agent $\kappa\in(\bar{\N}_i^m\cap\bar{\N}_{\ell}^m)\setminus\N_i^m$ we have that $\kappa\in\bar{\N}_i^{m-1}$ and $\kappa\in\bar{\N}_{\ell}^m$. Hence, we obtain from Lemma~\ref{lemma:mneighbor:contains:mmin1}(ii) and the fact that $\N_{\ell}^{m+1}=\emptyset$, that $\N_{\kappa}\subset\bar{\N}_i^m$ and $\N_{\kappa}\subset\bar{\N}_{\ell}^{m+1}=\bar{\N}_{\ell}^m\cup\N_{\ell}^{m+1}=\bar{\N}_{\ell}^m$, respectively, implying that $\N_{\kappa}\subset\bar{\N}_i^m\cap\bar{\N}_{\ell}^m$. The remaining proof for  this case follows similar arguments with the proof of Case~(ib) and is omitted.

\noindent \textit{Case (iii).}  $\N_i^{m+1}=\emptyset$ and  $\N_{\ell}^{m+1}\ne\emptyset$. We consider again the following subcases for each agent $\kappa\in\bar{\N}_i^m\cap\bar{\N}_{\ell}^m$.

\noindent \textit{Case (iiia).} $\kappa\in(\bar{\N}_i^m\cap\bar{\N}_{\ell}^m)\cap\N_{\ell}^m$. In this case, it follows from \eqref{reference:IVP2:constant:terms} that $\chi_{\kappa}^{[\ell]}(\cdot)\equiv x_{l_{\kappa},G}$ and thus, by using again similar arguments with Case (ia) that \eqref{reftraj:deviation:CaseA} is fulfilled.  

\noindent \textit{Case (iiib).} $\kappa\in(\bar{\N}_i^m\cap\bar{\N}_{\ell}^m)\setminus\N_{\ell}^m$. Notice that $(\bar{\N}_i^m\cap\bar{\N}_{\ell}^m)\setminus\N_{\ell}^m\subset\bar{\N}_i^m\cap\bar{\N}_{\ell}^{m-1}$ and thus, for each agent $\kappa\in(\bar{\N}_i^m\cap\bar{\N}_{\ell}^m)\setminus\N_{\ell}^m$ we have that $\kappa\in\bar{\N}_i^m$ and $\kappa\in\bar{\N}_{\ell}^{m-1}$. Hence, we obtain from Lemma~\ref{lemma:mneighbor:contains:mmin1}(ii) and the fact that $\N_i^{m+1}=\emptyset$, that $\N_{\kappa}\subset\bar{\N}_i^{m+1}=\bar{\N}_i^m\cup\N_i^{m+1}=\bar{\N}_i^m$ and $\N_{\kappa}\subset\bar{\N}_{\ell}^m$, respectively, implying that $\N_{\kappa}\subset\bar{\N}_i^m\cap\bar{\N}_{\ell}^m$. The remaining proof for Case~(iiib) follows again similar arguments with the proof of Case~(ib) and is omitted.

\noindent \textit{Case (iv).}  $\N_i^{m+1}=\emptyset$ and  $\N_{\ell}^{m+1}=\emptyset$. In this case the result follows from the proof Lemma~\ref{lemma:zero:deviation}, which implies that the trajectories $\chi_{\kappa}^{[i]}(\cdot)$ and $\chi_{\kappa}^{[\ell]}(\cdot)$ coincide for all $\kappa\in\bar{\N}_i^m\cap\bar{\N}_{\ell}^m$. The proof is now complete.
\end{proof}

\subsection{Proof of Fact I}

We provide the proof of Fact I.

\begin{proof}
Indeed, from the classical version of the Gronwall Lemma (see for instance \cite{Kh02}) we have that
\begin{align*}
y(t) & \le \lambda(t)+\int_a^t \lambda(s)\mu e^{\mu(t-s)}ds \\
& = \lambda(t)+\int_a^t \lambda(s)\frac{d}{ds}(-e^{\mu(t-s)})ds \\
& = \lambda(t)-\lambda(t)+\lambda(a)e^{\mu(t-a)}+\int_a^t e^{\mu(t-s)}\dot{\lambda}(s)ds \\
& = \int_a^te^{\mu(t-s)}\dot{\lambda}(s)ds.
\end{align*}
\end{proof}

\subsection{Evaluation of explicit expressions for the functions $H_{m}(\cdot)$, $m\ge 2$ in \eqref{functions:Hkappa}.} For $m=2$ and $m=3$ the functions $H_{2}(\cdot)$ and $H_{3}(\cdot)$ are given as
\begin{align*}
H_{2}(t) & = \frac{L_1}{L_2}\sqrt{N_{\max}}M\left(\frac{e^{L_2t}}{L_2}-t-\frac{1}{L_2}\right) \\
H_{3}(t) & = \left(\frac{L_1}{L_2}\sqrt{N_{\max}}\right)^2M\left(e^{L_2t}t-\frac{2}{L_2}e^{L_2t}+t+\frac{2}{L_2}\right).
\end{align*}

\noindent For each $m\ge 4$, it follows by induction that
\begin{align*}
H_{m}(t) & =\left(\frac{L_1}{L_2}\sqrt{N_{\max}}\right)^{m-1}M\left(e^{L_2t}\left[L_2^{m-3}\frac{t^{m-2}}{(m-2)!}+\sum_{j=1}^{m-2}(-1)^j(j+1)L_2^{m-3-j}\frac{t^{m-2-j}}{(m-2-j)!}\right]\right. \\
         & \left.+(-1)^{m-1} \left[t+\frac{m-1}{L_2}\right]\right).
\end{align*}

\subsection{Proof of \eqref{tbar:vs:tstar} and \eqref{function:Hm:bound} in Section 4.} In order to show \eqref{tbar:vs:tstar}, let $g(t):=e^{L_2t}-\left(L_2+\bar{c}\frac{L_2^2}{L_1\sqrt{N_{\max}}}\right)t-1$, whose derivative  $\dot{g}(t)=L_2e^{L_2t}-L_2-\bar{c}\frac{L_2^2}{L_1\sqrt{N_{\max}}}$ satisfies $\dot{g}(0)=-\bar{c}\frac{L_2^2}{L_1\sqrt{N_{\max}}}<0$. Hence, since $g(0)=0$ and $\lim_{t\to\infty}g(t)=\infty$, it holds $\bar{t}\in\RgO$ from \eqref{time:tbar}. Next, by recalling that $t^*$ is the unique solution of \eqref{time:tstar}, we obtain that
\begin{align*}
e^{L_2t^*} & -\left(L_2+\bar{c}\frac{L_2^2}{L_1\sqrt{N_{\max}}}\right)t^*-1 \\
& =e^{L_2t^*}-\left(L_2+\frac{L_2^2}{L_1\sqrt{N_{\max}}}\right)t^*-1+\frac{(1-\bar{c})L_2^2}{L_1\sqrt{N_{\max}}}t^*=\frac{(1-\bar{c})L_2^2}{L_1\sqrt{N_{\max}}}t^*>0,
\end{align*}

\noindent which by virtue of \eqref{time:tbar} implies \eqref{tbar:vs:tstar}. We next show \eqref{function:Hm:bound} by induction on $\kappa\in\{1,\ldots,m\}$. Notice that by \eqref{functions:Hkappa}, \eqref{function:Hm:bound} holds for $\kappa=1$. Assuming that it is valid for certain $\kappa\in\{1,\ldots,m\}$ we show that also holds for $\kappa=\kappa+1$. Hence, by exploiting \eqref{functions:Hkappa}, \eqref{time:tbar} and \eqref{function:Hm:bound} with $m=\kappa$ we obtain
\begin{align*}
H_{\kappa+1}(t)-\bar{c}^{\kappa}Mt & =\int_0^t e^{L_2(t-s)}L_1\sqrt{N_{\max}}H_{\kappa}(s)ds-\bar{c}^{\kappa}Mt \\
& \le\bar{c}^{\kappa-1}L_1\sqrt{N_{\max}}\left(\int_0^t e^{L_2(t-s)}sds-\frac{\bar{c}t}{L_1\sqrt{N_{\max}}}\right) \\
& =\bar{c}^{\kappa-1}L_1\sqrt{N_{\max}}\left(\frac{1}{L_2}\left(\frac{e^{L_2t}}{L_2}-t-\frac{1}{L_2}\right)-\frac{\bar{c}t}{L_1\sqrt{N_{\max}}}\right) \\
& =\bar{c}^{\kappa-1}L_1\sqrt{N_{\max}}L_2^2\left(e^{L_2t}-\left(L_2+\frac{\bar{c}L_2^2}{L_1\sqrt{N_{\max}}}\right) t-1\right)\le 0,\forall t\in[0,\bar{t}],
\end{align*}

\noindent which establishes \eqref{function:Hm:bound} with $m=k+1.$

\section{Acknowledgements}

This work was supported by the H2020 ERC Starting Grant BUCOPHSYS, the Knut and Alice Wallenberg Foundation, the Swedish Foundation for Strategic Research (SSF), and the Swedish Research Council
(VR).

\bibliographystyle{abbrv}
\bibliography{longtitles,decentralization_degree_references}

\end{document}